\newcommand{\had}{\ensuremath{\scriptsize\begin{pmatrix} 
1 & \phantom{-}1 \\ 
1 & -1\\ 
\end{pmatrix}}}
\newcommand{\lit}{\mbox{\rm logit}\,}
\newcommand{\odr}{\mbox{\rm odr}\,}
\newcommand{\chd}{\mbox{\rm chd}\,}
\newcommand{\fracshalf}{\mbox{$\frac{1}{2}$}}
\newcommand{\fracsfourth}{\mbox{$\frac{1}{4}$}}
\newcommand{\fracseigth}{\mbox{$\frac{1}{8}$}}
\newcommand{\nn}[0]{\hspace*{.7em}}
\newcommand{\n}[0]{\hspace*{.35em}}
\newcommand{\fourl}{\nn \nn}
\newcommand{\T}{^{\mathrm{T}}}
\newcommand\cov{\mathrm{cov}}
\newcommand{\snode}{\mbox {\large 
{$\mbox{$\circ$}$}}}
\newcommand{\ful}{\mbox{$\, \frac{ \nn  \;}{ \nn }$}}
\newcommand{\fra}{\mbox{$\hspace{.05em} \frac{\nn \nn}{\nn}\!\!\!\!\! \succ \! \hspace{.25ex}$}}
\newcommand{\fla}{\mbox{$\hspace{.05em} \prec \!\!\!\!\!\frac{\nn \nn}{\nn}$}}
\newcommand{\dal}{\mbox{$  \frac{\n}{\n}\frac{\; \,}{\;}  \frac{\n}{\n}$}}
\newcommand{\ci}{\mbox{\protect{ $ \perp \hspace{-2.3ex} \perp$ }}}
\newcommand{\dep}{\pitchfork}
\newcommand{\txt}{\textstyle}
\newcommand{\E}{{\it E}}
\newcommand{\mcal}{\ensuremath{\mathcal{M}}}
\newcommand{\hcal}{\ensuremath{\mathcal{H}}}
\newcommand{\In}{\mathrm{In}}
\newcommand{\inv}{\mathrm{inv}}
\newcommand{\zer}{\mathrm{zer}}
\newtheorem{prop}{Proposition}
\newtheorem{coro}{Corollary}
\newtheorem{thm}{Theorem}
\renewcommand\section{\@startsection{section}{1}{\z@}%
{-3.25ex\@plus -1ex \@minus -.2ex}{1.5ex \@plus .2ex}%
{\normalfont\large\bfseries}}
\renewcommand\subsection{\@startsection{subsection}{2}{\z@}%
{-3.25ex\@plus -1ex \@minus -.2ex}%
{1.5ex \@plus .2ex}%
{\normalfont\normalsize\bfseries}}
\renewcommand\subsubsection{\@startsection{subsubsection}{3}{\z@}%
{-3.25ex\@plus -1ex \@minus -.2ex}%
{1.5ex \@plus .2ex}%
{\normalfont\normalsize\bfseries}}
\renewcommand\paragraph{\@startsection{paragraph}{4}{\parindent}%
{3.25ex \@plus1ex \@minus .2ex}%
{-1em}%
{\normalfont\normalsize\bfseries}}
\renewcommand\section{\@startsection{section}{1}{\z@}%
{-3.25ex\@plus -1ex \@minus -.2ex}{1.5ex \@plus .2ex}%
{\normalfont\large\bfseries}}
\renewcommand\subsection{\@startsection{subsection}{2}{\z@}%
{-3.25ex\@plus -1ex \@minus -.2ex}%
{1.5ex \@plus .2ex}%
{\normalfont\normalsize\bfseries}}
\renewcommand\subsubsection{\@startsection{subsubsection}{3}{\z@}%
{-3.25ex\@plus -1ex \@minus -.2ex}%
{1.5ex \@plus .2ex}%
{\normalfont\normalsize\bfseries}}
\renewcommand\paragraph{\@startsection{paragraph}{4}{\parindent}%
{3.25ex \@plus1ex \@minus .2ex}%
{-1em}%
{\normalfont\normalsize\bfseries}}
\begin{document}

\noindent{{\begin{center}{\large \bf  Generating large Ising  models with  Markov structure   via simple linear relations}\end{center}}
\noindent {\begin{center}{\bf{Nanny Wermuth}} \\
 {\em Chalmers University of Technology, Gothenburg, Sweden and Gutenberg-University, Mainz, Germany}\end{center}\n \\[-15mm]
 \noindent {\begin{center}{\bf{Giovanni Marchetti}}\\
 {\em Dipartimento di Statistica, Informatica, Applicazioni ``G. Parenti'', Florence, Italy}
\end{center}\n\\

\small{\noindent {\bf Abstract:} {\small 
 We extend the notion of a tree graph to  sequences of prime graphs
  which are 
cycles and edges and name these non-chordal graphs  hollow trees. These structures are especially attractive for palindromic Ising models,  which mimic a symmetry of joint Gaussian distributions.  We show  that for  an Ising model
 all defining independences  are captured  by zero partial correlations and conditional   correlations agree  with partial correlations within each prime graph if and only if the model is palindromic and has a hollow-tree structure.
This implies that the strength of
 dependences can be assessed locally. We use the results to find a well-fitting general Ising model with hollow-tree structure for a set of longitudinal data.
 \\[-2mm]

\small{\noindent {\bf Keywords:}  Chain graph;   Chordal graph; Chordless cycle;  Concentration  graph; Cut-set; Elimination scheme; Hollow tree; Log-linear interaction; Logit regression;  Palindromic distribution;  Prime graph;   Probabilistic graph; Quadratic exponenential distribution. \\[-6mm] 
}}
\normalsize
\subsection*{1 Introduction} 

After Maurice Bartlett had defined in 1935 log-linear interactions in trivariate contingency tables, it took almost thirty years until sets of minimal sufficient statistics for maximum-likelihood estimates  (mle) were derived  by Martin Birch in 1963, for the model without a 3-factor interaction and for a range of   possible hypotheses of conditional independence for the  discrete variables. Almost
another 10 years passed until a  general iterative-proportional fitting algorithm became available  in published form as well as relations of log-linear to the so-called   logit-regressions; Bishop (1972).

Since then, log-linear models became widely used, at the beginning helped by  a Fortran Program distributed by Leo Goodman,
called  `Everybody's Contingency Table Analysis (ECTA)'. But strong complaints emerged   that some scientists continued to use methods of data analysis based on correlations.  For instance in the preamble of a book by Goodman (1984), `correlationers'  are suggested to be in stark contrast to `crosstabbers'. The former were seen  to  rely  on linear models  and linear regressions  and the latter on contingency tables and the corresponding log-linear models, logit regressions and
odds-ratios.
One main purpose of the current paper is to identify  and  characterize  a large class of  models for binary variables  in which  particular sequences of  generating  logit  regressions  are  indeed equivalent to  sequences of linear regression.

A  theoretical discussion of linear versus log-linear models for contingency tables is  by Darroch and  Speed (1983).
Recently,   more methodological research was requested to come to a better understanding when linear relations might  be appropriate; see Hagenaars (2015).  In the 1970's in Germany, it took  data from an experiment carried out  by psychologist  Gustav Lienert,  to convince the community that correlations could sometimes be extremely misleading; for an English data description  see Wermuth (1998). The structure in Lienert's data is very close to having a  strong dependence among  three binary variables in spite of  three pairwise independences, that is when also  the  three simple  Pearson's correlation coefficients are zero.

So far, there appear  to be not  many results concerning linear relations for binary distributions. One is that directed acyclic graph models can be generated 
with linear main-effect regressions provided that all variables have mean zero, unit variance and they are symmetric; see Wermuth, Marchetti and Cox (2009). However, these models have  mainly been used  to illus\-trate  small graph-structured models and  to demonstrate  the  possible equivalence of different types of  such models. 
Relations between probabilities and correlations in Bernoulli distributions with  special types of graph, named  star graphs, see Figure 
\ref{IncGraphs3&4} for the simplest example, are by  Wermuth and Marchetti (2014) and
Wermuth, Marchetti and Zwiernik (2014).

Another result   is by  Loh and Wainright (2013). They prove  that a conditional independence constraint, represented by the missing edge in an undirected graph, leads  for binary variables to  a zero  partial correlation  given all remaining variables if the  graph is,  what we name,  a bulged  tree.  For an example of such a graph see Figure \ref{TreeTypes}.  
Unfortunately,  partial correlations may then also be zero when  they correspond to edges  present  in the graph. In that case, one cannot use zeros in the matrix 
of partial correlations to recognize  an independence  structure.

This happens  for instance, for the following two contingency  tables with four variables.  In the first,  the graph structure is a  paw  graph, see Figure \ref{IncGraphs3&4},  with all edges present for the nodes $1,2,3$  and one edge for 3,4, but the only nonzero marginal correlation is $\rho_{34}$.  In the second, no edge is missing in the  graph  since the log-linear 4-factor interaction, $\lambda_{1234}$ defined here with equation \eqref{DefInt}, is large, but nevertheless all six $\rho_{ij} $ are zero.
 $$\setlength{\arraycolsep}{.24\arraycolsep}\centering{\small
\left[\begin{array}{r r r r r r r r r r  r r r r r r r r r }

 ijk\ell\footnotemark\!:&0000 & 1000 & 0100 & 1100 &0010& 1010 & 0110 & 1110  &0001 & 1001 & 0101 & 1101 &0011& 1011 & 0111 & 1111\\
100\, \pi_{ijk\ell}\!: &4& 16& 16& 4& 4& 1& 1& 4& 1& 4& 4& 1& 16& 4& 4& 16\\
880\, \pi_{ijk\ell}\!:&100& 10& 10& 100& 10  &100 &100  &10& 10  &100 &100  &10 &100& 10& 10& 100 \end{array}\right] }
\footnotetext{\small  lower levels are here indicated by `0', also  later, in order to save space or to simplify the notation}$$

Such types of tables  may describe   sequences of bacterial genoms; see Radavi\v{c}ius, Reca\v{c}i\={u}s and \v{Z}idanavi\v{c}iut\'e (2017). For the analysis of such combinations of symmetric and antisymmetric sequences,  logit regressions may be  useful  but   linear regressions are clearly useless.
 
One might then be tempted to conclude that linear relations and hence correlations become relevant when all higher than 2-factor log-linear interactions vanish, that is when they are constrained to be zero. This is the case  in the models proposed by physicist Ernst Ising in 1925 for binary variables in effect coding, $-1,1$.
In the statistical literature,  Ising models have been  studied as binary lattice models by  Besag (1974), and as  quadratic exponential distributions for binary variables by Cox and Wermuth (1994a).  

  But another  feature is  needed also  for  Ising models  to make linear relations useful.  It is 
a  central symmetry  property of joint continuous distributions,  illustrated here for just two variables.
 \begin{center}
 \includegraphics[scale=0.18]{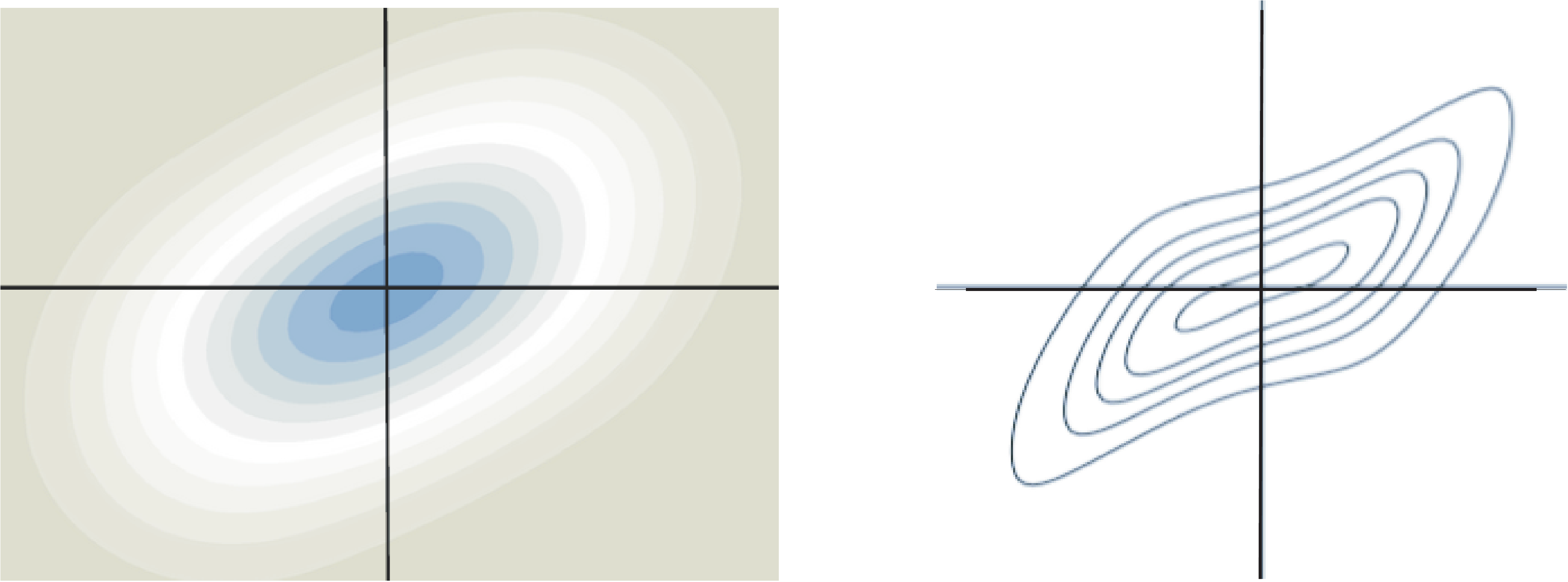}
 \end{center}
This points directly  to a main property of the following binary  distributions,  called palindromic Ising models.
 The term palindromic had been introduced in linguistics to characterise special symmetric words or sentences.
 An English
palindromic  sentence  is `step on no pets'.  

 For a single binary variable, the distribution is palindromic if it is  symmetric, that is if both levels occur with probability $1/2$. For a Bernoulli distribution  of binary variables $A_1, \dots, A_d$ with effect coding, the distribution is palindromic if  $\Pr(\omega) = \Pr(- \omega)  \text{ for all } \omega$,     
where  $\omega$ is a level combination and  $- \omega$,  is  the complement of  $\omega$ in which all signs are switched. 
For Ising models, this type of  central symmetry  is generated  with the additional constraint that each variable  taken alone is  symmetric that is  has  uniform margins.

When
probabilities are denoted by 
$\alpha, \beta$  in such bivariate distributions or  by $\alpha, \beta,  \gamma, \delta$ in such  trivariate distributions, then
the probability tables can be written as
\begin{small}
$$
\begin{tabular}{lcccc}
&& Table 1a) &&\\
\hline\\[-4mm]
$A_1$    &\hspace{-6mm}  $A_2\!: $\hspace{-5mm} &   -1   & \n1               & \text{sum}\\[1mm] \hline
\, -1    && $\alpha$         &$ \beta$ & $1/2$ \\ 
\, \n 1        &  &$ \beta$ &$ \alpha  $             & 1/2 \\[1mm] \hline
\text{sum}& & 1/2     & 1/2           & 1\\  
\hline 
\end{tabular} \hspace{10mm}
\begin{tabular}{lcccccc}
&&&Table 1b)&&&\\
\hline
    \quad         &\hspace{-6mm}  $A_3\!:$\hspace{-5mm}  &-1   &-1         & \n 1  & \n1  &           \\
  $A_1 $         &\hspace{-6mm}  $A_2\!:$\hspace{-5mm}  &-1  &\n 1         & -1 & \n1  & \text{sum}\\[1mm] \hline
\, -1    & &$\alpha $         & $\gamma  $       &$ \delta $  &$  \beta$   & 1/2    
\\ 
\, \n1         & &  $\beta $        &$ \delta  $       & $\gamma $  &$ \alpha $    & 1/2   
\\ \hline
\text{sum} &&$ \alpha+\beta$       &  $\gamma + \delta$     &$ \gamma + \delta$&
$\alpha + \beta$  & 1         \\   \hline
\end{tabular}
$$  
\end{small}
With  the assumed effect coding of the levels,  the variables have mean zero and unit variance,
so that the covariance matrix $\Sigma$ coincides with the correlation matrix, having $\sigma_{ii}=1$,
$\sigma_{ij}=\rho_{ij}$.  From Table 1a)  rewritten in terms of the simple correlation  $\rho_{12}$, one gets also
$$ \pi_{11}=(1+\rho_{12})/4,  \nn   \pi_{-11}=(1-\rho_{12})/4,  \nn   \nn  \lambda_{12}=
\fracshalf \log \{(1+\rho_{12})/(1-\rho_{12})\}=\tanh^{-1} \rho_{12},$$ 
so that $\lambda_{12}$   coincides here  with the 
$z$-transformation of $\rho_{12}$ suggested by Ronald Fisher in 1922 for a different purpose. It opens the range  $-1<\rho_{12}<1$ to the full line, $-\inf <\lambda_{12}<\inf$.

Note that for bivariate  binary distributions with skewed margins, $\rho$ is not a function of the odds-ratio,  it changes instead with the degree of skewness.
In this case, the simple  correlation coefficient is   not relevant for assessing the strength of the dependence; see Edwards (1963).

When  the column vector $  \pi$, containing the probabilities $p(\omega)$, has a lexicographic order with 
 the index of the first variable changing fastest,  the index of the last changing slowest and $H_d$ is the Hadamard matrix defined by
 the following Kronecker product:
 $$
\hcal_d = \underbrace{\had \otimes \cdots \otimes \had}_{d},
$$
then the log-linear interaction vector, $\lambda$, and  linear interaction vector, $\xi$, are respectively: 
\begin{equation}
  \lambda  =  \hcal_d^{-1} \log   \pi =(\hcal_d \log   \pi)2^{-d}, \quad \quad    \xi=\hcal_d \,\pi  . \label{DefInt}
\end{equation}
The second equality defining   $ \lambda$ is the  matrix form of an adding-and-subtracting algorithms
proposed by Frank Yates in 1937  for  effect estimation in  $2^d$ experimental designs.
This was recognized and related to more general Fourier analyses by Good (1958).

 A main result for palindromic Bernoulli distributions, proven in Marchetti and Wermuth (2016),  is their characterization by vanishing elements in $\lambda$ and   $\xi$ which involve an odd number of variables.  Such distributions get even closer to  joint distributions of standardized Gaussian variables, when they are also Ising models. In palindromic Ising models,
 the only terms of $\lambda$ permitted to be nonzero, are  the overall effect $\lambda_{\emptyset}$  and the 2-factor
 terms $\lambda_{ij}$. For independence of variables $A_i,A_j$ given all remaining variables, we use a notation  due to Dawid (1980):
\begin{equation}  \lambda_{ij}=0  \iff   i\ci j|N\setminus \{i,j\}, \label{CondInd}\end{equation}
 A set of  such independence statements  defines an  independence structure, also called  the Markov structure of the model, and shows
in  missing edges of the following type of graph.

 We treat here only simple, finite,  undirected graphs that is  at most one $ij$-edge couples distinct nodes, $i, j$,  taken from node set $N=\{1, \ldots, d\}$ having a finite $d$ and each edge present is undirected. In addition, all our starting graphs  are connected  so that each node  can be reached from another by walking along a path, that is along a sequence of distinct edges. In the following, a  `graph'  is always a member of this class of graphs.

Essential for our discussions will be subgraphs, cut-sets and prime graphs. The subgraph of a node set $a
\subseteq N$  has  as edges those present  in the graph among the nodes of $a$. A subgraph in which
all nodes are coupled is a complete graph, also called a simplex by Wagner and Halin (1962).  We use their term even though it has a different meaning in geometry. A  maximal simplex is  a clique; it  turns into an incomplete subgraph when one more node of the graph is added to it.   
A cut-set is the smallest simplex $c$ which separates disjoint subsets $a$ and $b$ of a graph, meaning  that every path between  $a$ and $b$  has a node in the complete subgraph $c$.

Prime graphs are characterised by having  no cut-set. Prime graphs are  either cliques or, if incomplete, 
each node has  some  uncoupled neighbours, we say equivalently:  each node resides  in at least two cliques.  This is evident in  chordless cycles, where each node has two uncoupled neighbours, but  is more difficult to detect  in  the other 
more complex, incomplete prime graphs. For  instance in the next to last  prime graph of Figure \ref{PrimeGraphTypes}, each of
four nodes resides in a triangle and in a hidden chordless 4-cycle.  This feature  of prime graphs
that are more complex than cliques or cycles has been  explained early on by Rose (1970, Figure \ref{PrimeGraphTypes}),  long before the concept of probabilistic graphs was  defined. 
\begin{figure}[H]
 \centering
\includegraphics[scale=.35]{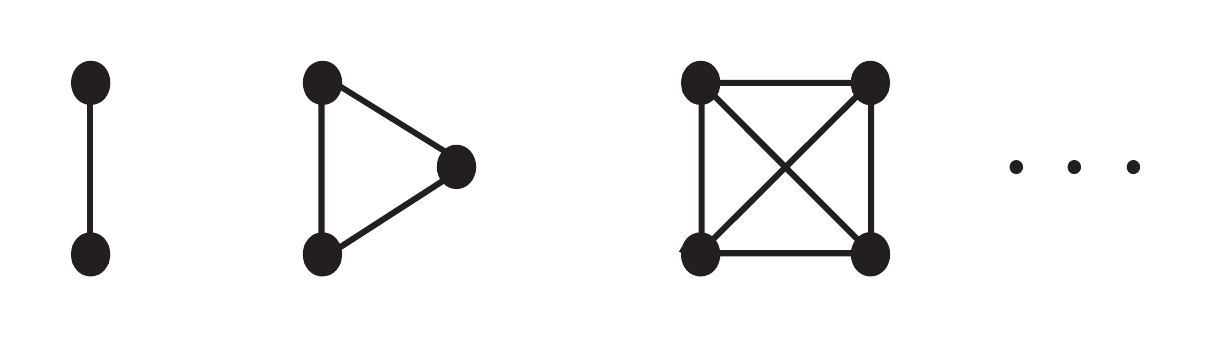}\includegraphics[scale=.35]{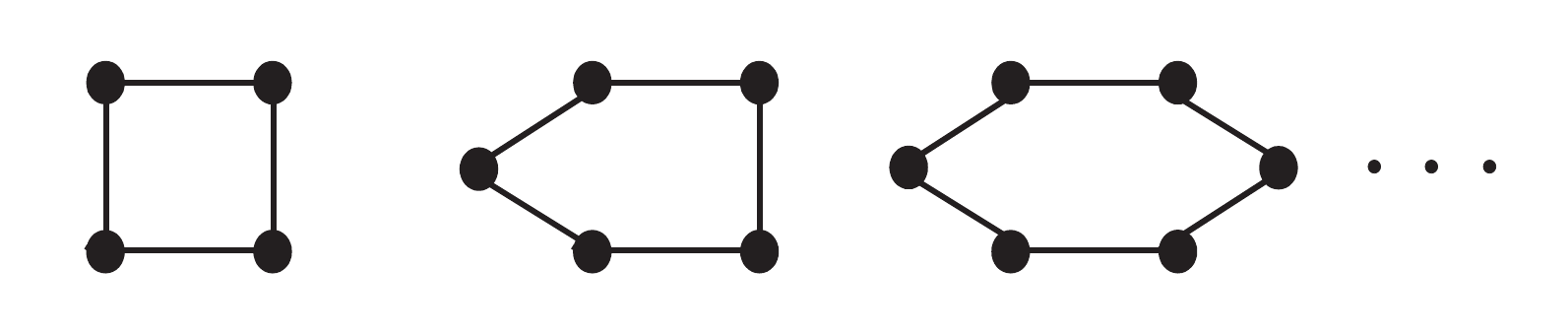}\includegraphics[scale=.35]{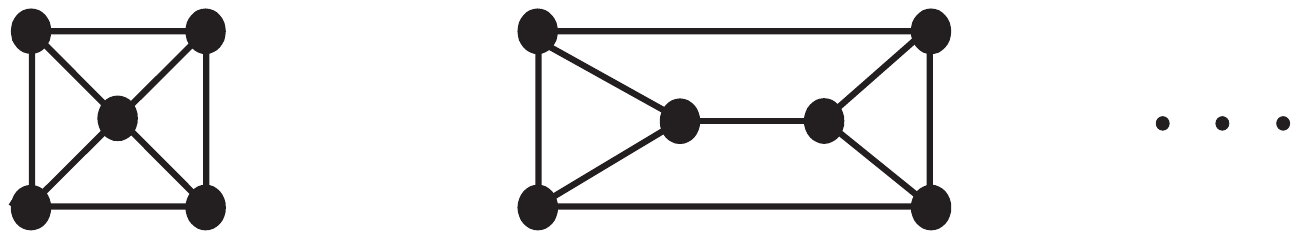}
 \caption{\small{\it the three types of prime graph in  connected  graphs, from the left: complete graphs; chordless $h$-cycles, $h>3$; other incomplete graphs, also with no node residing in a single clique}} \label{PrimeGraphTypes}
  \end{figure}
In probabilistic graphs, nodes 
represent random variables and different types of edge  permit a broad range of  conditional  dependences for
a given  independence structure,  also called  Markov structure, defined by  the  missing edges.
One  early, nearly neglected, but most important result  for probabilistic undirected graphs is  the uniqueness of their   set of prime graphs. 
\begin{thm}\!\!{\rm Wagner and Halin (1962).}\!
Every finite graph breaks into its prime graphs. \label{WagHal}\end{thm}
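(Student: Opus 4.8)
The plan is to separate the statement into \emph{existence} of a decomposition into prime graphs and its \emph{uniqueness} (the latter being implicit in the phrase ``its prime graphs''). Existence I would prove by induction on the number $d$ of nodes. If $G$ is prime it is its own decomposition. Otherwise, since a prime graph is exactly one with no cut-set, $G$ has a cut-set: a complete subgraph $c$ separating $N\setminus c$ into nonempty disjoint sets $a$ and $b$ (grouping connected components if more than two arise) with every $a$-$b$ path meeting $c$. Because $a$ and $b$ are nonempty, the induced subgraphs $G_{a\cup c}$ and $G_{b\cup c}$ each have strictly fewer than $d$ nodes, so by the induction hypothesis each breaks into prime graphs, and the union of the two collections is a decomposition of $G$. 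The base case $d=1$ and the prime case are immediate, and the recursion terminates since the node count strictly decreases along every branch.

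The real content is uniqueness: at each stage several cut-sets may be available, and one must show the terminal collection of prime graphs does not depend on these choices. I would obtain this by a confluence argument. The splitting relation terminates, because each split replaces a graph by two strictly smaller graphs and a prime graph cannot be split; hence, by Newman's lemma, global confluence --- and therefore uniqueness of the terminal collection --- follows once local confluence is established. Local confluence is the assertion that whenever $G$ has two distinct cut-sets $c$ and $c'$, the two first splits can each be continued to decompositions yielding the same set of prime graphs.

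Proving local confluence is the main obstacle, and it rests on a case analysis of how $c$ and $c'$ sit relative to one another: whether $c'$ lies entirely within one side of the split induced by $c$, or straddles that split. The key point, which I would exploit, is that both cut-sets are cliques and minimal as separators; from this one shows that each cut-set either persists as a cut-set inside the relevant piece produced by the other or ceases to separate there, so that the two splits commute. Feeding the commuted pieces back into the induction yields a common refinement of the two decompositions and hence the desired uniqueness. Equivalently, one can bypass confluence by characterising the prime components intrinsically as the maximal node sets inducing prime subgraphs and showing directly that exactly these sets are produced by every decomposition; the same clique-minimality argument is what makes that characterisation work.
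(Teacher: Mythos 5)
The paper offers no proof to compare against: Theorem 1 is imported verbatim from Wagner and Halin (1962), and the effective version actually used later (the decomposition algorithm) is credited to Leimer (1993). Your plan --- existence by induction on splits along cut-sets, uniqueness of the terminal collection via termination plus local confluence (Newman's lemma) --- is in outline the standard modern proof of this classical result, so the route is not the problem. The problem is that the step you yourself call ``the real content'', local confluence, is asserted rather than proved, and the dichotomy you offer (each cut-set ``either persists as a cut-set inside the relevant piece produced by the other or ceases to separate there'') does not by itself yield commutation. The lemma that actually powers the case analysis is never stated: since $c'$ is complete, $c'\setminus c$ is complete and hence connected, so it lies in a single component $C$ of $G-c$; therefore $c'\subseteq c\cup C$ and the two splits never cross. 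Even granting this, one must still prove that $c'$ remains a smallest separating simplex of the subgraph of $c\cup C$, and that both orders of splitting can be continued to the same terminal collection. Moreover the ``ceases to separate'' branch is real and is exactly where naive commutation fails: take $G$ on nodes $1,\dots,5$ with edges $12,13,14,23,24,15$, and the cut-sets $c=\{1,2\}$ (separating $3$ from $4$) and $c'=\{1\}$ (separating $5$ from the rest). Splitting by $c$ gives the three prime pieces $\{1,2,3\}$, $\{1,2,4\}$, $\{1,5\}$, and $c'$ separates none of them; confluence holds here only because the split by $c$ has already performed the separation that $c'$ would have produced, an argument your sketch does not contain.

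A second, related gap: the minimality (``smallestness'') of cut-sets must enter the proof essentially, not merely be mentioned, because the uniqueness claim is false for splits along arbitrary complete separators. In the triangle $\{x,y,z\}$ with a pendant node $w$ adjacent to $x$, the set $\{x,y\}$ is a complete separator (it separates $w$ from $z$) but not a cut-set, since $\{x\}$ alone already separates them. Splitting by $\{x,y\}$ and continuing terminates in $\{w,x\}$, $\{x,y\}$, $\{x,y,z\}$, whereas splitting by the cut-set $\{x\}$ terminates in $\{w,x\}$, $\{x,y,z\}$: two different terminal collections, the first containing the non-maximal simplex $\{x,y\}$, which the theorem's canonical set excludes (this is what the paper means by ``only maximal simplexes are used''). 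So any complete write-up has to exhibit where smallestness is used --- it is precisely what prevents such non-maximal pieces from surviving and what makes your local-confluence case analysis go through --- and the proposal invokes it without ever using it. Until that lemma and the resulting case analysis are carried out (this is essentially the content of Leimer's 1993 paper), the uniqueness half of the statement remains unproved.
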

 Directly related to Theorem \ref{WagHal} is one  characterization of the so-called chordal  graphs: their set of prime graphs consists exclusively   of cliques.
 For less than five nodes, Figure \ref{IncGraphs3&4} shows all chordal, incomplete but connected graphs to which we get  back  later. The  prime graphs of a diamond graph are its two triangles, of a paw graph they are a triangle and an edge. In all other graphs of Figure \ref{IncGraphs3&4}, they are just edges.
\begin{figure}[H]
\centering
\includegraphics[scale =.3]{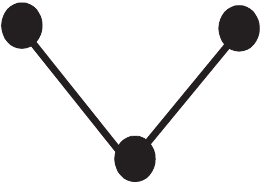} \fourl \fourl \includegraphics[scale =.3]{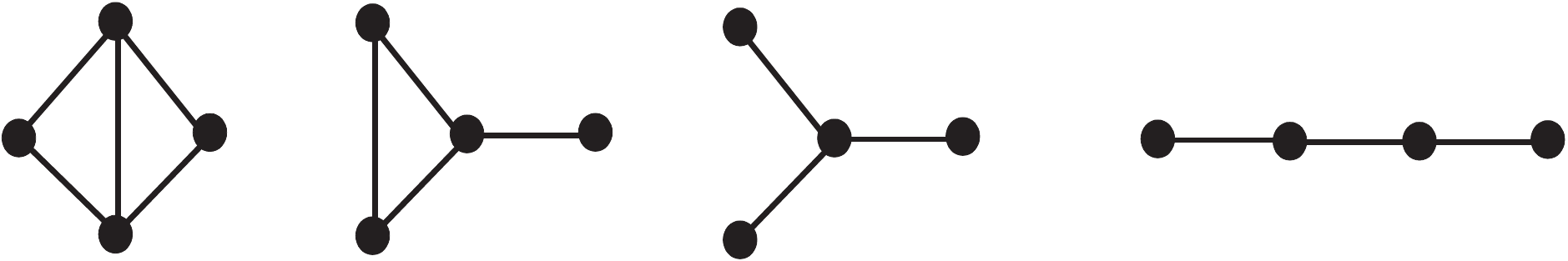}\fourl \fourl
\caption{\small{\it All  chordal, incomplete graphs in three or four nodes from the left: a transmitting {\sf V}; a diamond; a paw;  a tripod or 3-leaf star; a  single 3-edge path or 4-node Markov chain}}\label{IncGraphs3&4}
 \end{figure}\vspace{-4mm}
\noindent After the publication of Theorem \ref{WagHal},  it took  another 30 years until an  efficient algorithm became available  to find a graph's set of prime graphs; Leimer  (1993), and  until
some proofs of properties of distributions with prime graph structure were given; Mat\'u\v{s} (1994, App.). 

In computer science and statistics, a strong emphasis is until today on chordal graphs and on their so-called decomposable models; see  Tarjan and Yannakakis (1984), Darroch, Lauritzen and Speed (1980),  Thomas and Green (2009), Studen\'y and Cussens (2016).  One main reason is that the mle of  models with a chordless-cycle structure is typically fitted using an iterative procedure,  even for Gaussian  distributions; Dempster (1972); Speed and Kiiveri (1986); Dahl, Vandenberghe and  Roychowdhury (2008), Lauritzen, Uhler and Zwiernik (2017).

In this paper, we extend  traditional trees, which are chordal  graphs, to a subclass of  non-chordal graphs with attractive properties for the corresponding palindromic Ising models. 
 For this, we  distinguish first four classes  of  simple, finite,  incomplete and connected graphs using as illustration  Figure \ref{TreeTypes}. 
The traditional trees, called here also `thin trees', form one class. 
 The  other three types are extensions of thin trees.  
 When one chooses the  prime graphs of a hollow tree as the nodes of a new graph and the   cut-sets as the edges in this new graph, then  just as in a traditional tree, a `single path' connects each pair of `nodes'.

\begin{figure}[H]
\centering
\includegraphics[scale =.3]{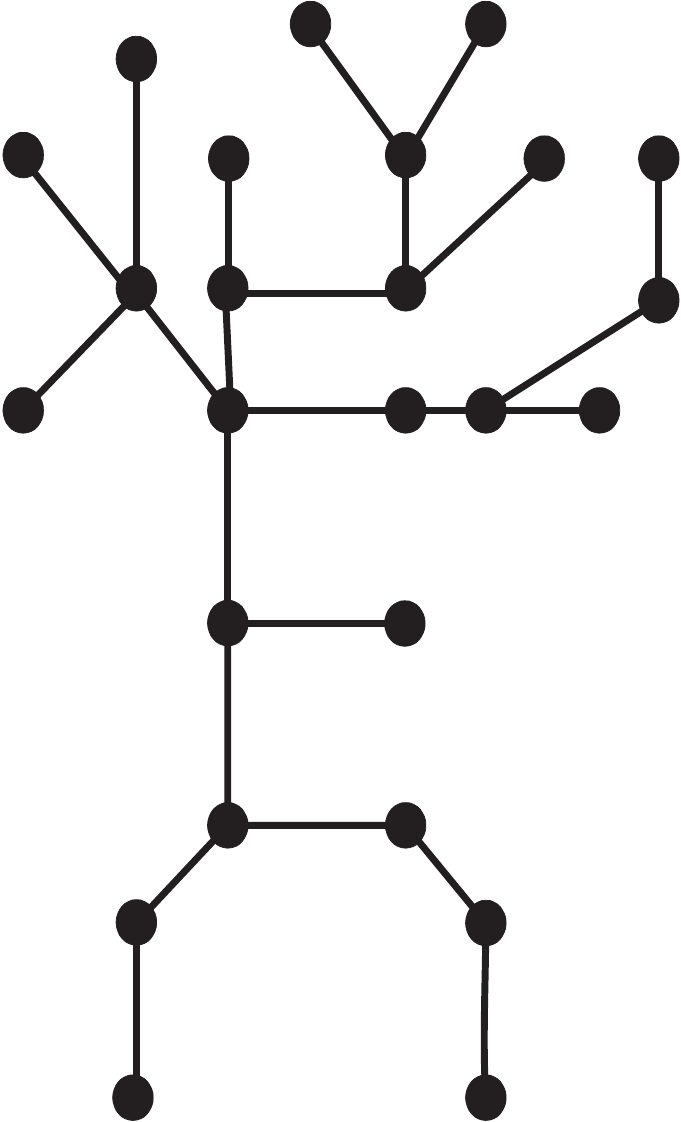} \fourl \fourl \includegraphics[scale =.3]{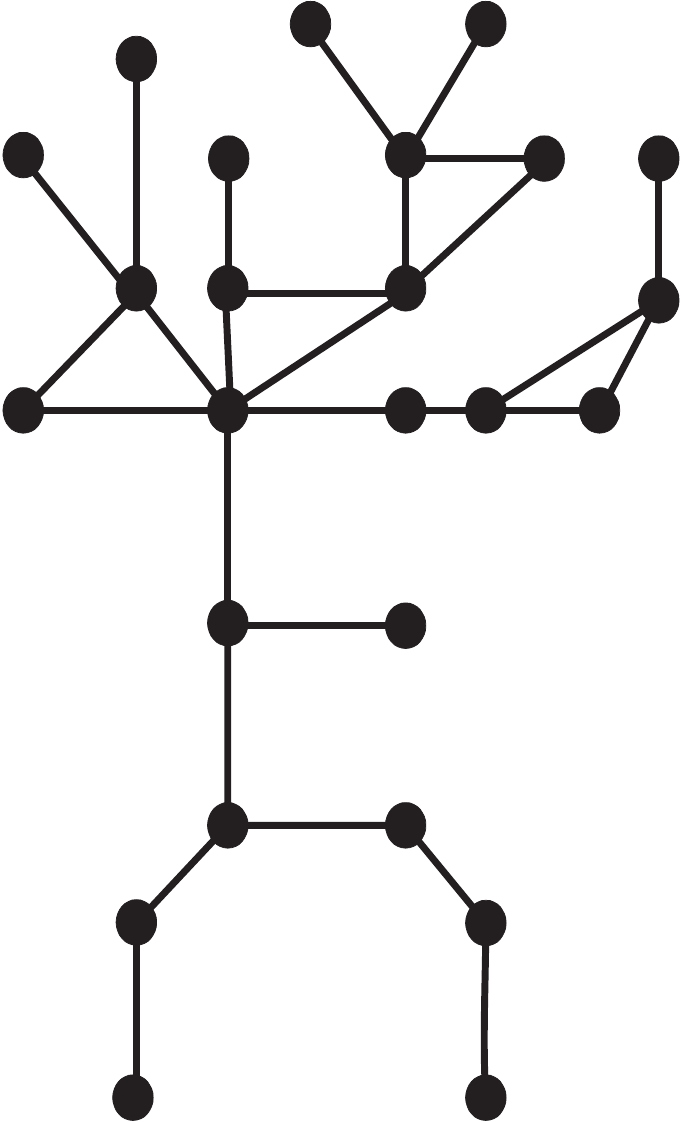}\fourl 
\fourl \includegraphics[scale =.3]{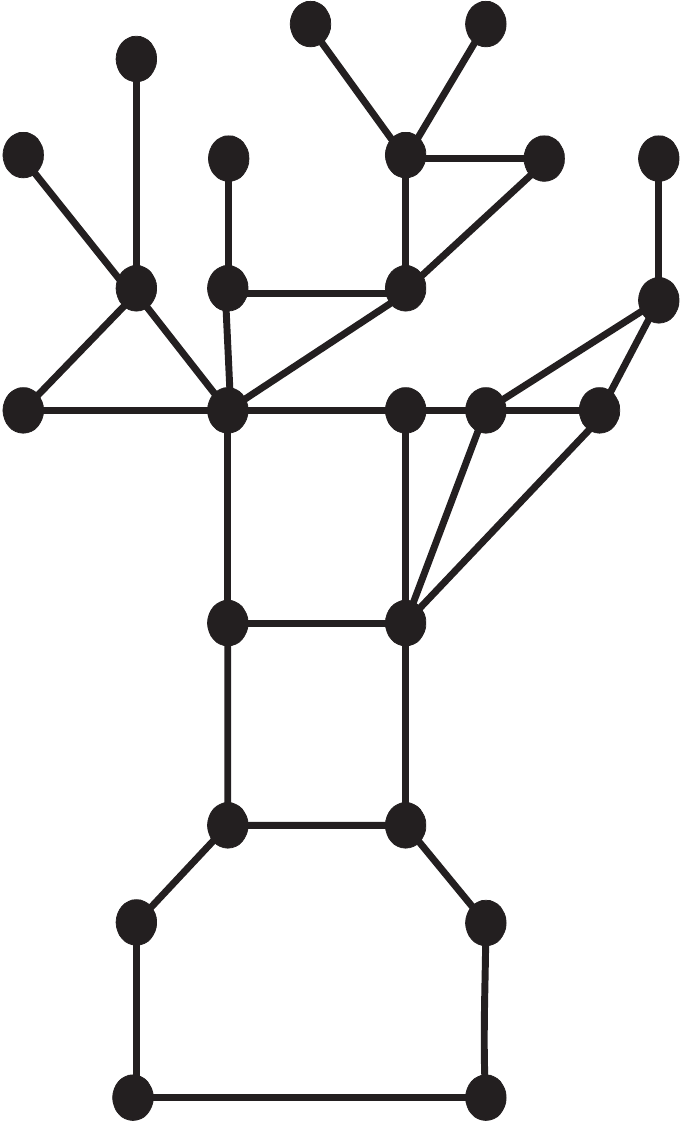} \fourl \fourl \includegraphics[scale =.3]{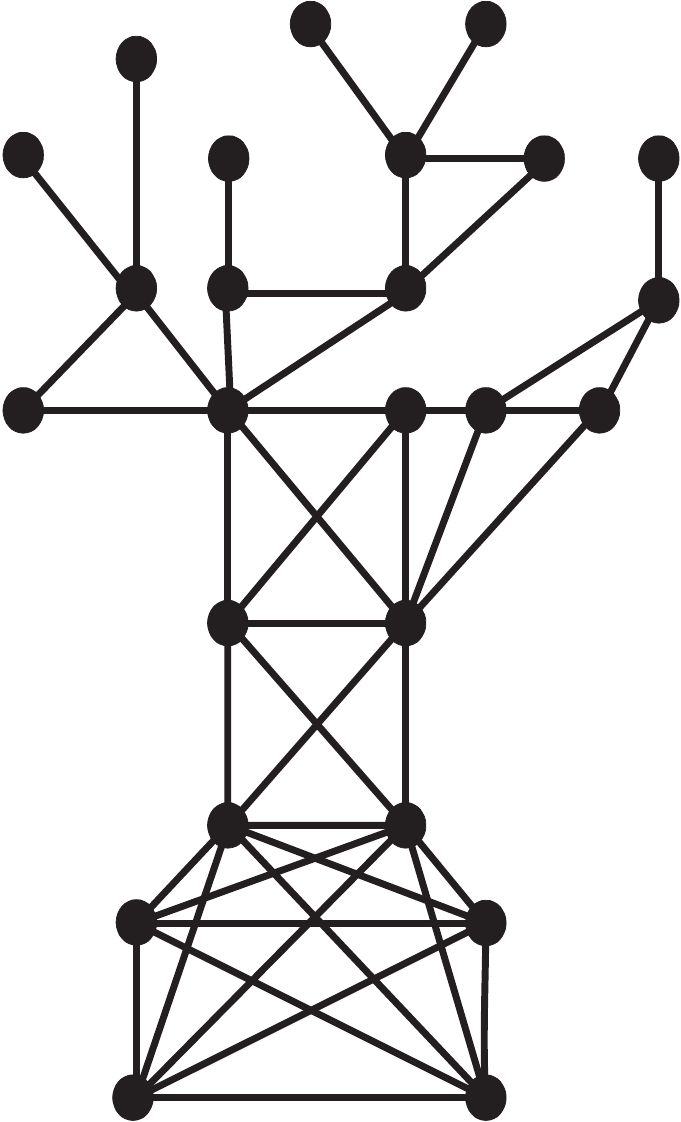}
\caption{\small{\it from the left: traditional or thin tree; bulged  tree; hollow tree; fattened tree}} \label{TreeTypes}
        \end{figure} 
 
The prime graphs of a hollow tree are exclusively edges 
or cycles -- triangles or chordless cycles -- and the cut-sets are 
 single nodes or edges. It is the only class  which may contain cycles,
the other classes consist exclusively of chordal graphs.
 Thin and bulged  trees form  subclasses of hollow trees. 
A  thin tree has only  edges as prime graphs and  single nodes  as cut-sets,
a bulged  tree consists of  edges or triangles and has cut-sets which are still just single nodes.
Finally, a chordal, fattened tree is obtained from a hollow tree by completing
each chordless $h$-cycle with all  of its  missing  chords.

The plan of the paper is to  summarise first some notation and  results for graphs in Section 2. These needed, solid justifications of the extended tree models may be skipped at a first 
reading. We 
give properties and different parametrisations for palindromic Ising models with the extended tree structures in Section 3,
stressing in particular linear parametrisations and closed form relations between parameters.  Again, the details for  special models 
are essential for our arguments but may be skipped  at a first reading. We continue with some general properties of palindromic Ising models with hollow-tree structure,   possible
transformations between  general  and  palindromic Ising models and consequences for model selection.
 In Section 4,  we illustrate the results by fitting a general Ising model to a set of longitudinal  data. 
  A discussion ends the paper. Detailed  proofs and needed previous results are given in appendices.

\subsection*{2 Summary of some graph terminology and results}

A finite, undirected graph has a set of nodes, $N=\{1,\ldots ,d\}$, for a finite  $d$. An $ij$-edge couples two distinct  nodes and turns the  nodes $i$ and  $j$ into neighbours; such nodes are  also said to be adjacent.  The graph is simple if there is at most one edge coupling each node pair and there are no loops, that is  no $ii$-edges.  Each graph can be represented by a binary symmetric matrix  with an $ij$-one if an $ij$-edge  is 
present and an $ij$-zero  if  the pair $i,j$ is uncoupled in the graph.  The matrix is called  edge matrix, if its diagonal elements are ones and  adjacency matrix if they are zeros. Only for edge matrices, $\E$, there is a matrix operator   which can capture changes  in the graph due, for instance, to marginalising; see here App. E.

In probabilistic  applications, the nodes in the  graphs represent random variables, in this paper these are binary variables $A_1, \ldots , A_d$.  If an uncoupled node pair $i,j$ means conditional independence of $A_i,A_j$ given  all remaining variables, then this type of undirected graph is also called a concentration graph and each $ij$-edge  present is drawn as a full line, $i\ful j$.

 A graph is said to be edge-minimal if every $ij$-edge indicates  a deviation from the  independence constraint for $A_i, A_j$, so that the pair  is conditionally dependent given  all remaining variables, written typically as  $i \dep j | N\setminus\{i,j\}$; see Wermuth and Sadeghi (2012). The   type of  a dependence  may be captured
by parameters  of a given probability model. In this paper, we discuss  Ising models for skewed and for symmetric binary variables, in both of which 
\begin{equation}  \lambda_{ij}\neq0  \iff   i\dep j|N\setminus \{i,j\}. \label{DefDep} \end{equation}	
Such a model is said to be generated over an edge-minimal  graph if equations \eqref{CondInd} and \eqref{DefDep} hold.

A path  is  a  sequence of distinct edges connecting its endpoint nodes.  The path is a cycle if the  endpoints coincide. A cycle has the same number of nodes as  edges and by starting at any node of a $h$-cycle and  walking  along  its $h$ edges, one returns to the starting node.  A cycle of more than three nodes is said to be chordless.

For $a, b, c$  nonempty disjoint subsets of $N$, the intersecting set $c$ of $a\cup c$ and  $b\cup c$ is a separator if every path between $a$ and $b$ has a node in $c$.
The subgraphs of separator sets may be incomplete, such as every separator in a chordless cycle, see Figure \ref{PrimeGraphTypes}, or be  complete, as all of them  are in the  
s0-called chordal graphs; for examples see Figure \ref{IncGraphs3&4}.  The smallest separating simplex of two prime graphs is a cut-set  of the graph.    In Figure \ref{IncGraphs3&4}, only the diamond graph has  an edge as a cut-set;  all other cut-sets  are single nodes. 

Prime graphs are characterized by having no cut-set. For the unique  set  of prime graphs of an undirected graph,  only maximal simplexes are used, that is the cliques.  An edge is said to be incident to $m  \subset N$ if just one of its endpoints resides in $m$. The elimination of $m$
from a graph means  to remove not only all nodes and edges within $m$ but  also all  incident to $m$. 

Leafs or outer nodes have traditionally been defined for thin trees as those single nodes which have just one  neighbour.  For chordal graphs, outer nodes   reside in a single clique. Such outer nodes have also  been called external or simplicial. The notion extends directly  to outer node-sets  which reside in a single prime graph.  After  the elimination of an  outer node set,  the  starting set of prime  graphs is reduced by just one prime graph so that, in this paper,  the remaining graph is again
 a hollow tree.

For chordal graphs,  outer single nodes can be eliminated repeatedly until only one last edge is left. This is the essence of the  Tarjan--Yannakakis algorithm which ends with a proper, also called perfect, single-node elimination scheme  or the conclusion that  the graph is not chordal. Similarly, outer node-sets can be repeatedly eliminated  from the extended trees until only one last prime graph is left.  Leimer's algorithm gives such a proper node-set elimination scheme and  the unique set of prime graphs.  In the R-package `gRbase', this algorithm is implemented;   see Dethlefsen and  H\o jsgaard (2005).  
\begin{prop}  The unique set of unlabelled prime graphs  of a connected graph remains unchanged if and only if every two prime graphs are attached at one of the graph's    cut-sets. \label{PrimeG&Cut-sets}
 \end{prop}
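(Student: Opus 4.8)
The plan is to prove the two implications from the unique prime decomposition guaranteed by Theorem~\ref{WagHal}, reading ``remains unchanged'' as the requirement that the decomposition be self-reproducing: reassembling the prime graphs along their shared node sets must return a graph whose own set of prime graphs is again, up to relabelling, the collection one started from. Write $\mathcal{P}=\{P_1,\dots,P_k\}$ for the prime graphs of the connected graph $G$, and for $P_i\neq P_j$ put $s_{ij}=V(P_i)\cap V(P_j)$. The two facts I would keep in front of me are that a cut-set is by definition a \emph{smallest complete} separator, and that a prime graph is a \emph{maximal} subgraph without a cut-set; the whole argument is an interplay between these two extremality conditions.

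For the direction assuming cut-set attachment I would verify directly that each $P_i$ survives as a prime graph of $G$ and that nothing coarser or finer can appear. If every non-empty $s_{ij}$ is a cut-set $c_{ij}$, then $c_{ij}$ is complete and separates the private nodes of $P_i$ from those of $P_j$; consequently no union $P_i\cup P_j$ can be free of a cut-set, so distinct prime graphs stay distinct, and no single $P_i$ can be enlarged, since enlarging it would absorb a neighbour across a genuine complete separator and thereby create an internal cut-set, contradicting that $P_i$ is prime. Hence re-decomposing $G$ returns exactly $\mathcal{P}$, which is the invariance claimed. The routine core here is the verification that cut-set attachment blocks both a merge of two prime graphs and a split of one, and I would phrase it as: within each $P_i$ no proper subset is a complete separator, precisely because $P_i$ is prime.

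The converse I would argue by contraposition, and this is where I expect the main obstacle. Suppose some touching pair $P_i,P_j$ is \emph{not} attached at a cut-set; I would split into the two ways this can fail. If $s_{ij}$ is incomplete, then by maximality $P_i\cup P_j$ carries no cut-set at all and is therefore a single prime graph of $G$, so the genuine decomposition contains this larger graph rather than $P_i$ and $P_j$ separately and the starting collection does not reproduce. If instead $s_{ij}$ is complete but not a smallest separator, a strictly smaller cut-set $c\subsetneq s_{ij}$ would already have to appear in the unique decomposition of Theorem~\ref{WagHal}, so the two prime graphs would in fact be separated at $c$ rather than at $s_{ij}$, once more altering the collection. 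The delicate point, which I regard as the heart of the proof, is to show that these exhaust the failure modes and that each strictly changes the \emph{unlabelled} set; I would handle it by combining the uniqueness from Theorem~\ref{WagHal} with a careful accounting of how the cut-set structure of $G$ constrains the intersections $s_{ij}$, since it is here that the extremality built into ``smallest complete separator'' and into ``maximal cut-set-free subgraph'' must be made to bite simultaneously.
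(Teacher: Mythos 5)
Your forward direction is essentially sound and runs parallel to the paper's own argument (uniqueness of the prime decomposition plus a proper node-set elimination scheme). The genuine gap is in the converse, exactly at the point you yourself flag as ``the heart of the proof'': your two failure modes do not exhaust the ways an attachment can fail to be a cut-set, because both are \emph{pairwise} statements about $s_{ij}$ inside $P_i\cup P_j$, whereas ``one of the graph's cut-sets'' in Proposition~\ref{PrimeG&Cut-sets} means a cut-set of the \emph{whole assembled graph} $G$. The decisive missing case is: $s_{ij}$ is complete, and is even a minimal separator of $P_i\cup P_j$, yet separates nothing in $G$ because the remaining prime graphs supply a path around it. This is precisely the paper's own illustration immediately after the proposition: attach the two ending edges of a line of $h$ edges at their endpoint nodes. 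Every pairwise attachment there is a single node, hence complete, and a single node has no proper nonempty subset, so neither your case (a) nor your case (b) applies --- yet the result is a chordless $h$-cycle, a single prime graph, and the original collection of $h$ edges has vanished from the decomposition. The same phenomenon occurs when two triangles sharing an edge are closed up by a further prime graph into $K_4$.

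Moreover, your case (b) is vacuous as argued. If two prime graphs are glued along a complete set $s_{ij}$, no proper subset $c\subsetneq s_{ij}$ can separate $P_i\cup P_j$: removing the complete set $c$ leaves each of $P_i$ and $P_j$ connected (a prime graph has no complete separator by definition), and the two pieces remain joined through any vertex of $s_{ij}\setminus c$. So a complete attachment set is \emph{automatically} a smallest separator of the pairwise union; ``complete but not a cut-set'' can only fail globally, through paths running across the other prime graphs, which is the configuration your analysis never touches. Repairing the proof requires exactly the global step the paper's terse converse gestures at: when a prime graph is attached other than at a cut-set of $G$, at least one previous cut-set is destroyed --- it no longer separates, because of a path through the other pieces --- and a strictly larger cut-set-free subgraph, hence a new prime graph, is generated. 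That path-tracing through the rest of the assembly is the missing content; it cannot be recovered from the uniqueness in Theorem~\ref{WagHal} together with pairwise bookkeeping of the intersections $s_{ij}$ alone.
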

\begin{proof}  For a given graph,   the unique cut-sets are given by its  prime graphs and a proper  node-set elimination scheme.  
Conversely, when one attaches
a  prime graph to one which is already in a given  sequence of prime graphs,   at least one previous cut-set is destroyed and another prime graph is generated.
 \end{proof}
One  example is  a line of  $h$ edges in $h+1$ nodes. It  generates a chordless $h$-cycle when the two ending  edges
are attached  at their endpoint nodes.  Another example is a chain of four triangles. With an edge-to-edge tiling of the triangles to give the graph next to the last on the right of Figure  \ref{PrimeGraphTypes},  a new prime graph is generated. 

In 1962, Gabriel Dirac  proved that every incomplete chordal graph has at least two outer nodes; see also Blair and Peyton (1993), Lemma 3. Leimer's algorithm implies that every incomplete, connected graph has at least two outer node sets; these outer sets cannot be empty but each may also contain just a  single node.   

For thin trees, it was derived  by Camille Jordan in 1869 that in its center, there is a node or an edge  from where the paths to the outer nodes are shortest; for an efficient algorithm see Mitchell Hedetniemi,  Cockayne   and Hedetniemi (1981).
For hollow trees, the following small modification of Leimer's algorithm  finds  at its last step either a prime graph or a cut-set which is an edge:  eliminate first all outer  single nodes residing in an edge  and all outer-edge  nodes  residing in a triangle until none are left. Then,  eliminate stepwise  all outer-node sets of the remaining prime graphs. 
For instance, after  the first two steps  have been applied to the hollow tree and to the fattened tree of Figure \ref{PrimeGraphTypes}, the two tree trunks  in Figure  \ref{TreeTrunks} remain.

  \begin{figure}[H]
 \centering
\includegraphics[scale=.32]{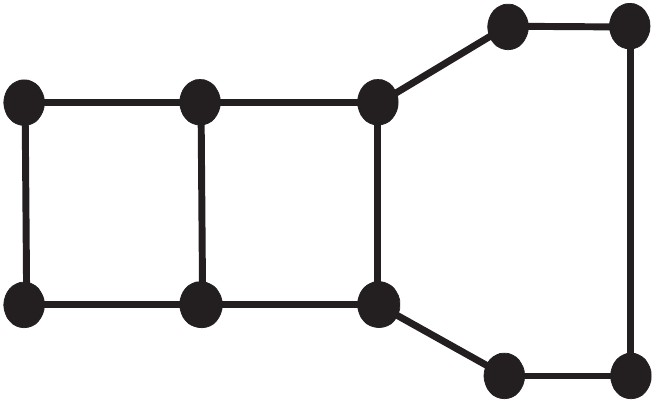}\nn \nn \nn \nn   \includegraphics[scale=.32]{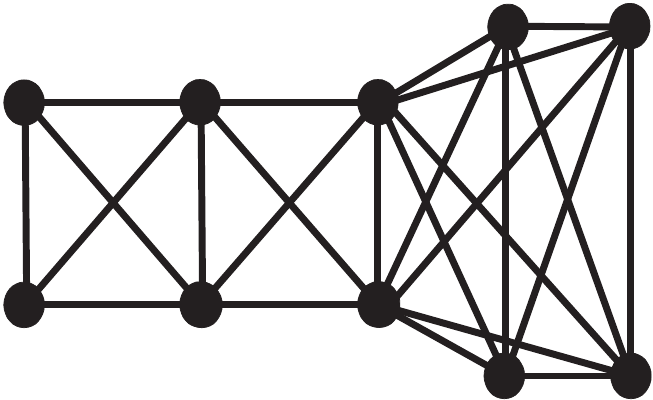}
 \caption{ \small{\it left: hollow-tree trunk to Figure \ref{TreeTypes}; right: fattened-tree trunk to Figure \ref{TreeTypes} }} \label{TreeTrunks}
  \end{figure}
 We use Figure \ref{TreeTrunks} to illustrate here how  a proper node-set elimination scheme can be  generated. If in these two graphs of Figure \ref{TreeTrunks},  the  outer node-set  on the left is eliminated first, its nodes are labelled as 1, 2. The other outer nodes in the 5-cycle  may then get    labelled as  3, 4, 5, 6. The remaining four nodes in  set  
$c=\{7,8, 9,10\}$ form then an  inner prime graph  and $c$ separates sets $a=\{1,2\}$ and $b=\{3,4,5,6\}$, that is every path from $a$ to $b$ has a node in $c$. But $c$ it is here  not  a  cut-set,
since it is   incomplete  for the graph on the left and it not a smallest simplex for the graph on the right. 

In statistics, undirected graphs were the first studied types of the  broad class of probabilistic graphs. For many probabilistic graphs one knows now when they are Markov equivalent, that is when they define the same independence structure, just  in a different
way;  see here App. F. 
One may for instance orient a hollow tree, that is change some of its full lines to arrows, by  any proper  node-set  elimination scheme and obtain a special type of chain graph,  one of those introduced by Lauritzen and Wermuth (1989) and Frydenberg (1990),  called and studied later as discrete LWF chain graphs; see Drton (2009).  
 
 Thus, the 
 hollow-tree trunk of Figure \ref{TreeTrunks}
may for instance be oriented  as in Figure \ref{OrientedTrunks}. These orientations mimic one important property of thin trees: each prime graph can be chosen as the undirected past, that is as the one from where arrows are starting.
\begin{figure}[H]
\begin{center}
\includegraphics[scale=.32]{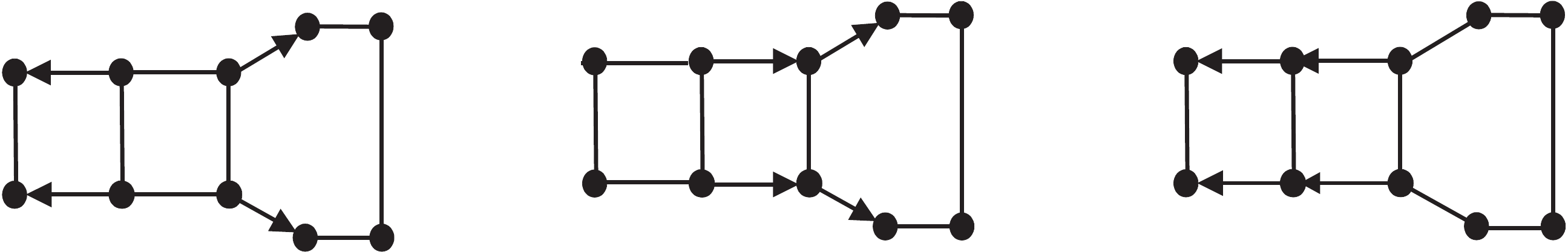} 
 \caption{ \small{\it  Markov equivalent orientations of the hollow tree  in Figure \ref{TreeTrunks} with undirected cycles, left: the 4-cycle in the middle; middle: the 4-cycle
 on the left; right: the 6-cycle on the right}} \label{OrientedTrunks} \vspace{-5mm}
\end{center}
\end{figure}

Factorisations of  a distribution such as in equation \eqref{FactMarg} below relate to elimination schemes: the outer-node sets 
correspond  to joint responses, each cut-set to a set of regressors.

\begin{prop}   In a  hollow tree and in its fattened tree,  one obtains  -- after eliminating in sequence only outer node sets -- a subset $a\cup b \cup c$
where  $a$ and $b$  are the outer nodes of
two prime graphs and the smallest separator $c$ is their  cut-set or their  inner prime graph. \label{MargPrimeG}  
 \end{prop}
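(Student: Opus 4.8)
The plan is to lift the whole argument to the \emph{prime tree} $T$ of the hollow tree, taking the prime graphs as the nodes of $T$ and the cut-sets as its edges. By Theorem~\ref{WagHal} this set of nodes is well defined, and by the defining property of a hollow tree recalled above -- that a single path connects each pair of prime graphs -- the graph $T$ is a traditional thin tree; Proposition~\ref{PrimeG&Cut-sets} guarantees that the correspondence is stable under the moves we use, since each step attaches or detaches a prime graph exactly at a cut-set. First I would set up the dictionary between the two levels. An outer node set resides in a single prime graph and is joined to the rest of the graph through one cut-set, so by definition its elimination removes precisely one prime graph and leaves a hollow tree; this is exactly the deletion of a leaf of $T$. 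Conversely, as long as $T$ has at least two nodes the graph is incomplete and connected, hence by Leimer's algorithm it has at least two outer node sets, i.e. $T$ has at least two leaves available for pruning. Thus ``eliminating in sequence only outer node sets'' is nothing but iterated leaf-pruning of $T$.

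Next I would invoke Jordan's result (1869) for the thin tree $T$: its center is either a single node or a single edge, and it is reached from the leaves inward by successive pruning. I prune $T$ toward this center along a fixed diameter $P_0,P_1,\dots,P_m$, whose endpoints are leaves and whose midpoint is the center. If the center is an \emph{edge} $\{P,P'\}$, I delete every other prime graph (always a leaf of the current tree) and then trim the two ends until only $P$ and $P'$ remain, joined at their common cut-set $c$; then $a$ and $b$ are the outer nodes of $P$ and of $P'$, and $c$ is a cut-set, a single node or an edge. If the center is a \emph{node} $P_0$, it lies strictly interior to the diameter and so has two diameter-neighbours $P_a,P_b$; I prune away every prime graph off the diameter and trim the diameter inward until exactly the path $P_a,P_0,P_b$ survives. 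Now $c=P_0$ is the inner prime graph and $a,b$ are the outer nodes of $P_a,P_b$. In both cases the surviving vertex set is exactly $a\cup b\cup c$.

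It then remains to check separation and to dispose of the fattened tree. Separation is inherited from the tree structure: between the two flanking prime graphs there is a unique path in $T$, and it runs through $c$, so in the residual graph every path from a node of $a$ to a node of $b$ must meet $c$ -- this is the sense in which $c$ is the structurally smallest separator, namely the cut-set of the two prime graphs when the center is an edge and their inner prime graph when the center is a node. For the fattened tree the argument is verbatim: inserting the missing chords of each chordless cycle turns that cycle into a clique but alters neither the list of prime graphs nor the cut-sets attaching them, so the prime tree $T$ and its Jordan center are unchanged, and the identical pruning yields the same set $a\cup b\cup c$ with $c$ now a complete inner prime graph.

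The step I expect to require the most care is the node-center case. One must argue that the balanced pruning can always retain exactly the two flanking prime graphs $P_a,P_b$ -- deleting everything else through legitimate outer-node-set eliminations, at each stage using the guarantee of at least two outer node sets -- and that the resulting $c=P_0$ genuinely separates $a$ from $b$ in the original graph and not merely in $T$. A secondary subtlety is the phrase ``smallest separator'': when $c$ is an inner prime graph it separates $a$ and $b$ but, as already noted for the trunk of Figure~\ref{TreeTrunks}, it is not a cut-set, so ``smallest'' must be read as the smallest \emph{prime-graph-structural} separator produced by the elimination scheme rather than as a minimal node cut. Once these two points are pinned down, reading off the two cases directly from Jordan's dichotomy completes the argument.
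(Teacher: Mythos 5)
Your proof is correct, and it reaches the statement by a genuinely different route than the paper. The paper's own proof is a two-line invariance argument: by the uniqueness of the set of prime graphs (Theorem \ref{WagHal}) and the existence of a proper node-set elimination scheme, eliminating outer node sets removes only edges within or incident to the eliminated nodes, so no edge is added and none is removed in the subgraph of $a\cup b\cup c$; the claimed configuration, and the separation of $a$ from $b$ by $c$, are therefore simply those of the original graph. That formulation covers every admissible elimination sequence at once and supplies precisely what is used later for the factorisation \eqref{FactMarg}: marginalising over outer node sets induces no new dependences because each eliminated set meets the remainder only in a complete cut-set. You instead construct one explicit sequence: lift to the prime tree $T$, identify outer-node-set elimination with leaf pruning, and prune toward the Jordan center, the edge-center case ending at two prime graphs joined at a cut-set and the node-center case at two prime graphs flanking an inner prime graph. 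This buys an explicit algorithm and a clean dichotomy, but two remarks are in order. First, the Jordan center is dispensable: any pruning that stops at two adjacent prime graphs, or at a path of three, yields the stated form, so the shape of $c$ is governed by where one chooses to stop, not by the center of $T$. Second, the step you yourself flagged as most delicate --- that $c$ separates $a$ from $b$ in the graph and not merely in $T$ --- is exactly what the paper's invariance argument settles in one stroke: since elimination deletes nodes together with incident edges, the residual graph is the induced subgraph on $a\cup b\cup c$, in which any path leaving $a$ must pass through the complete cut-set(s) contained in $c$ before reaching $b$. Your treatment of the fattened tree and your reading of the loose term `smallest separator' agree with the paper's own remarks on the trunks of Figure \ref{TreeTrunks}.
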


\begin{proof} By the uniqueness of the set of prime graphs and  the existence 
of   a proper  node-set elimination scheme for all nodes, only the edges within and incident to $N\setminus\{a\cup b\cup c\}$ are removed,  hence no edge  is added and none is removed in the subgraph of $a\cup b\cup c$.
\end{proof}

For  Ising models,  this is exploited in Section 3  by using  covering  models, which are chordal, and reduced models,  which have cycles;  see Cox and Wermuth (1990) for these notions.  Thereby, we build heavily on  the existence of  proper  node-set elimination schemes.

\subsection*{3 On the relevance  of  linear relations for Ising models}

\subsubsection*{\small 3.1 Measures of dependence in trivariate palindromic  models}

In the palindromic model for three binary variables, 
the  3-factor interaction is missing, hence   it is an Ising model with
symmetric marginal distributions  and it has  joint symmetries.

To relate different measures of dependence later, we take Table 1b) rewritten as
\begin{center}
\begin{small}
\renewcommand*{\arraystretch}{0.8}
\begin{tabular}{lccccccc}
    \hline\\[-3mm]
    \quad         &\hspace{-6mm}  $A_3\!:$\hspace{-5mm}  &-1   &-1   &      & \n 1  & \n1  &           \\[1mm]
  $A_1 $         &\hspace{-6mm}  $A_2\!:$\hspace{-5mm}  &-1  &\n 1  &sum       & -1 & \n1  &sum\\[1mm] \hline\\[-3mm]
-1    & &$\alpha $         & $\gamma  $   &   $\fracsfourth(1+\rho_{13})$ &$ \delta $  &$  \beta$   &    $\fracsfourth(1-\rho_{13}) $
\\[1mm] 
\, 1         & &  $\beta $        &$ \delta  $   &  $\fracsfourth(1-\rho_{13})$  & $\gamma $  &$ \alpha $    &  $\fracsfourth(1+\rho_{13})  $
\\[1mm] \hline\\[-3mm]
\text{sum} &&$\fracsfourth(1+\rho_{23})\nn$      & $\fracsfourth(1-\rho_{23})$\nn   & \fracshalf&$\fracsfourth(1-\rho_{23})\nn $&
$\fracsfourth(1+\rho_{23})$  \nn& \fracshalf        \\[1mm]   \hline
\end{tabular}
\end{small}
\end{center}

For this table, we define  at level  $-1$ of $A_3$ several   standard and widely used  measures of dependence as well as further measure, $\tau$, that we name the  hypetan interaction
since it is the hype(rbolic) tan(gent) of $\lambda$. It had been introduced and studied for $2 \times 2$ tables under the name `coefficient of colligation' by Yudney Yule in 1912. The difference of the
conditional probabilities for level $1$, taken  at levels $1$ and $-1$ of another variable,  is said to be the chance difference for success. It is used
almost exclusively in the  current literature on  causal modelling.

Each of these measures remains unchanged when moving to  level 1 of $A_3$ and  
 $1\ci 2|3$ is reflected in a zero value
 of each measure except for the odds-ratio, which then equals one.
  \begin{eqnarray*}\small  \label{measdep} 
  \text{odds-ratio or cross-product ratio:} \n  \odr\!_{12|3}&\!=\!&  (\alpha \delta)/ (\beta \gamma),\nonumber\\ 
 \text{chance difference for success:}  \n \chd\!_{1|2.3}&\!=\!&  16(    \alpha   \delta - \beta \gamma) / (1-\rho_{23}^2)  \nonumber   ,\\
\text{conditional correlation:}\n \rho_{12|3}&\!=\!&16(\alpha \delta - \beta \gamma)/\{(1-\rho_{13}^2)(1-\rho_{23}^2)\}^{-1/2} \nonumber,\\
\text{log-linear interaction:}\n\lambda_{12}&\!=\!& (\log \odr\!_{12|3})/4\nonumber,\\
\text{hypetan interaction:}\n \tau_{12} &\!=\!& (\odr\!_{12|3}^{1/2}-1)/(\odr\!_{12|3}^{1/2}+1)=\tanh(\lambda_{12})\nonumber.
 \end{eqnarray*}

As explained before, linear relations may in general not be relevant for contingency tables.
But here, there is an invertible  relation of the conditional probabilities to the  simple  correlations in terms of an anti-diagonal
matrix, a Hankel matrix formed by $(-1, -1,  \; 0)$ and $(0, -1, -1)$. With $\alpha=\fracshalf-(\beta+\gamma +\delta)$, we have:
\begin{equation} \begin{pmatrix} \rho_{12}\\ \rho_{13}\\ \rho_{23}\end{pmatrix} =\begin{pmatrix} -1 & -1&\nn 0\\ -1 &\nn 0 & -1\\ \nn 0& -1 &-1\end{pmatrix} (4 \begin{pmatrix} \beta\\ \gamma\\ \delta
\end{pmatrix}-1/2).  \label{rhopi}
 \end{equation}

By applying equation \eqref{DefInt}, one sees directly that there is no log-linear and no linear 3-factor interaction.     Furthermore, as stated  next in equations \eqref{parregc} and \eqref{parcor}, chance differences coincide with linear regression coefficients, and conditional correlations with partial correlations.

For more general palindromic Ising models, linear regression coefficients  are based on the covariance matrix $\Sigma$ which coincides with the correlation matrix. We write such a  linear regression coefficient  by indicating the response before  a conditioning sign `$|$', the regressor after `$|$' and
the remaining regressors of the given response  after `.'

For  the trivariate palindromic Ising model, we have
$  16(\alpha \gamma -\beta \delta)=(\rho_{12}-\rho_{13}\rho_{23}) $ so that
\begin{equation}\beta_{1|2.3}= (\rho_{12}-\rho_{13}\rho_{23})/(1-\rho_{23}^2)  \text{ coincides with  } \chd\!_{1|2.3}, 
\label{parregc} \end{equation}
\begin{equation} \rho_{12.3}=(\rho_{12}-\rho_{13}\rho_{23})/\{(1-\rho_{13}^2)(1-\rho_{23}^2)\}^{-1/2} \text{ coincides with }  \rho_{12|3}. \label{parcor}\end{equation}

This makes it possible to study in the following 
changes in dependences for palindromic Ising models after marginalising or conditioning via  well-known recursive, linear 
relations for  Gaussian distributions; for  general proofs when different equal measures remain unchanged after marginalising, that is when they are collapsible, see
Xie, Ma and Geng (2008).
 
Recursive relations have been derived for linear regression coefficients by Bill Cochran in 1938, for covariances by Ted Anderson
in 1958 and for concentrations by Arthur Dempster in 1969. For discussions of all three in connection  with graphs   see Wermuth and Cox (1998), Marchetti and Wermuth (2009), Wermuth (2015); for simple proofs of matrix forms see Wiedenbeck and Wermuth (2010). 
$${ \beta_{1|2}=\beta_{1|2.3}+ \beta_{1|3.2}\sigma_{23}/\sigma_{22}, \nn \sigma_{12|3}=\sigma_{12}-\sigma_{13}\sigma_{23}/\sigma_{33},\nn \sigma^{12.3}=\sigma^{12}-\sigma^{13}\sigma^{23}/\sigma^{33}},\label{RecLin}$$
where for covariances, `$|$'  means again conditioning, while  for concentrations, `.' indicates marginalising. 
For  more than three variables, an additional conditioning set, $c$,  is added to the  regression coefficients and an additional marginalising
set, $m$, to the concentrations in each term  of the above equations; as for instance for equation \eqref{PcorRegcCon} below.

While  the relevance of such linear measures for palindromic hollow trees  will be shown later,
the recursive relations for $A_1, A_2, A_3$  give for  trivariate distributions  directly: 
\begin{eqnarray*}
\beta_{1|2}=\beta_{1|2.3}& \iff &{\scriptsize 1}\ci {\scriptsize 3|2}\text{ or } {\scriptsize 2}\ci {\scriptsize 3},  \\[1mm]
  \sigma_{12|3}=\sigma_{12}& \iff &  {\scriptsize 1}\ci {\scriptsize 3}\text{ or } 
{\scriptsize 2}\ci {\scriptsize 3}, \\[1mm]
\sigma^{12.3}=\sigma^{12} &\iff& {\scriptsize 1}\ci {\scriptsize 3|2}\text{ or } {\scriptsize 2}\ci {\scriptsize 3|1},
\end{eqnarray*}
expressing  conditions for collapsibility in terms of  independences.

There is also a recursive relation for  $\tau_{ij}$, which  after marginalising over $A_k$ turns into the simple correlation $\tau_{ij\backslash k}=\rho_{ij}$ of 
variables $A_i,A_j$, as proven for equation \eqref{Marg23Tri} in App. A:
\begin{equation} \rho_{ij}= \tau_{ij\backslash k}=(\tau_{ij}+\tau_{ik}\tau_{jk})/\text{const., with:   const.}=1+\tau_{ij}\tau_{ik}\tau_{jk}. \label{rectau}\end{equation}
Equation \eqref{rectau} leads directly to   conditions for collapsibility for $\tau$s, $\lambda$s and the  odds-ratios with:
$$  \tau_{12\backslash 3}=\tau_{12}  \iff {\scriptsize 1}\ci {\scriptsize 3|2}\text{ or } {\scriptsize 2}\ci {\scriptsize 3|1} \,.$$

 For three variables,   equations \eqref{DefInt} show that there is no 3-factor interaction for  the log-linear and for  the linear interaction parameters.
That   linear-regression coefficients are positive multiples of  $\tau$s is derived in App. B and shown next:
\begin{equation} \label{BetaTauLam}
\beta_{1|2.3}=\tau_{12}(1-\tau_{13}^2)/(1-\tau_{12}^2\tau_{13}^2), \nn  \n  \beta_{1|3.2}=\tau_{13}(1-\tau_{12}^2)/(1-\tau_{12}^2\tau_{13}^2), 
 \end{equation}
Since for instance   $ \rho_{12.3}$ is a  a positive multiple of $\beta_{1|2.3}$, it follows that also partial correlations  are positive multiples of 
the hypetan interaction $\tau_{12}$.

By the  definition of Pearson's correlation coefficient for binary variables, see here equation \eqref{Defccor}, the probability  $\pi_{11|k=1}$  coincides with the probability when all signs are switched and  is 
\begin{equation} \pi_{11|k=1}=2^{-3}\{{\rm const.\,} \rho_{12|3}+ (1+\rho_{13}k)(1+\rho_{23}k)\}, \nn {\rm const.}=\{(1-\rho_{13}^2)(1-\rho_{23}^2) \}^{1/2}, \label{ParcortoProb}
\end{equation}
and, since  $\rho_{ij|k}=\rho_{ij.k}$,  all probabilities are  also given by using the  partial correlation matrix. \\[-7mm]
\subsubsection*{\small 3.2 Palindromic Ising models with no independences and  with  diamond-graph structure}

For  
$d$ binary random variables $A_1, \dots, A_d$ taking values  $-1, +1$, the  probability distribution $\Pr(\omega) = \Pr(A_1=\omega_1, \ldots, A_d=\omega_d)$   is a palindromic Ising model if
\begin{equation} \label{DefPalIsLam}
\log \Pr(\omega)= \lambda_{\emptyset} + \,\txt \sum_{s < t} \lambda_{st}\, \omega_s \,\omega_t, \n -\infty <\lambda_{st}< \infty \,.
\end{equation}  
In such models with concentration graph structure,   a missing  edge for $i,j$  in the graph indicates the   conditional independence  of equation \eqref{CondInd} and an $ij$-edge present means the conditional dependence of equation \eqref{DefDep}. The interactions $\lambda_{ij}$,  $\xi_{ij}$ result as in equation \eqref{DefInt}.

By   equation  \eqref{DefPalIsLam}, the log-linear interactions, $\lambda_{ij}$,  and hence the 
$\tau_{ij}=\tanh \lambda_{ij}$,  are  constant at all level combinations of the remaining variables. Similar to $d=3$:
$$\lambda_{ij}=\fracshalf  \log(1+\tau_{ij})/(1-\tau_{ij}), \n \nn \tau_{ij}=(\odr\!_{ij|N\setminus\{i,j\}}^{1/2}-1)/(\odr\!_{ij|N\setminus\{i,j\}}^{1/2} +1).$$
But for $d>3$,  this does  in general not imply a linear model  with only 2-factor linear interactions.
In particular,  for $\rho_{ij|N \setminus \{i,j\}} \neq 0$  varying 
 at several level combinations of the remaining variables, these conditional  correlations cannot equal the single value of
the  partial correlation, $\rho_{ij.N \setminus \{i,j\}}$; but see  Proposition \ref{CEqLogLin} as well as Theorem \ref{IsGaussConc} below.

As an illustration, we choose the palindromic Ising model 
with the diamond graph structure shown in Figure \ref{IncGraphs3&4}, the smallest covering model for a chordless 4-chain.  We  denote the variables by $A,B,C,D$, their levels by $i,j,k,\ell$ and $p(\omega)=\pi_{ijk\ell}$.  Conditional probabilities, for instance for $A$ given  the level combinations of  $B,C,D$, are then  $\pi_{i|jk\ell}= \pi_{ijk\ell}/\txt\sum_i \pi_{ijk\ell}$.
We label the outer nodes of the diamond graph as $1, 2$ and the cut-set nodes as $3,4$ so that the missing edge is for nodes $1,2$. An example is displayed  in the next  table:

$$
\centering
{\small
\left[\begin{array}{r r r r r r r r r r  r r r r r r r r r } 
  ijk\ell \text{ with }l=0:&0000 & 1000 & 0100 & 1100 &0010& 1010 & 0110 & 1110  \\
302\, \pi_{ijk\ell}\,: &100 & 20& 5 & 1& 12& 8& 3& 2\\
\end{array}\right]	}
$$
The following matrix on the left contains  $\lambda_{ij}$ in the lower and  $\tau_{ij}$ in the upper-triangular part, the matrix  on the right  has $\rho_{ij}$ in the lower and $\rho_{ij.k\ell}$ in the upper-triangular part.  

{\small
$$    
\begin{array}{cccc}\hline
                           0 & 0 &  0.292&0.465 \\
                           0&0& 0.382&0.799\\
                           0.301 &0.402& 0& 0.342 \\ 
                            0.504 &  1.100  & 0.357& 0\\  \hline
                            \end{array} \quad  \fourl
                            \begin{array}{cccc} \hline
                            1 &  0 &  0.210& 0.265 \\
                           0.523 & 1 & 0.213&0.714\\
                     0.523   &  0.656  &   1& 0.206\\ 
                     0.589 & 0.854&   0.669& 1\\
                     \hline
                         \end{array} \quad
                       $$        
}   

The linear regression of either $A_3$ or $A_4$ on all remaining variables induces a linear 4-factor-interaction
in the joint distribution, so that  the nonzero conditional correlations vary with the level combinations of $A_1, A_2$
and  hence, can for these responses not  coincide with the single value of the corresponding  partial correlation in the joint distribution of all four variables.

But the 4-factor 
linear interaction is irrelevant,  when this  palindromic  Ising model is generated  by  three logit equations,  for which  the  nodes are ordered as 
 1,2,3,4,  and each response has just two regressors. To simplify notation, we change again level $-1$  to $0$.
 \begin{eqnarray} \label{logittri}
\lit(\pi_{1|k\ell})=\log \pi_{1+k\ell}-\log \pi_{0+k\ell} &=& 2 ( \lambda_{13}\,k+\lambda_{14}\,\ell )\nonumber \fourl \text{ for response  $A$},\\
\lit(\pi_{1|k\ell})=\log \pi_{+1k\ell}- \log \pi_{+0k\ell}&=&2(\lambda_{23}\,k +\lambda_{24}\,\ell)  \nn \nn \text{ for response  $B$}, \\
\lit(\pi_{1|\ell})=\log \pi_{++1\ell}- \log \pi_{++0\ell}&=& 2(\lambda_{34} \, l)\nonumber  \fourl \fourl  \fourl \n \, \text{ for  response  $C$,}
\end{eqnarray} 
where, for instance,   for response $A$: $ \pi_{1+k\ell}\!=\!\sum_j \pi_{1jk\ell}$,  for response $C$: $\pi_{++1\ell}\!=\!\sum_{ij} \pi_{ij1\ell}$. Only   term $2 \lambda_{12} \, j$ is missing compared to  a  model without any independence constraint. 

The term `logit' had been introduced for  ratios of rates and in logit-regressions rates depend  only on binary variables. For the above ratios of conditional probabilities, the  terms conditioned upon  cancel so that only the probabilities in the two 3-node triangles, for subsets $\{1,3,4\}$, $\{2,3,4\}$, and within the cut-set of the graph, $ \{3,4\}$, are used to generate the model.

A strongly condensed, common  notation for densities and probabilities reflects  both:  independence structures  and  proper  elimination schemes.  For instance  for the diamond graph and for the  $4$-node Markov chain in  Figure \ref{IncGraphs3&4}, respectively, we can write:
$$f_N=f_{1|34} f_{2|34} f_{3|4}f_4, \fourl  f_N=f_{1|2} f_{2|3} f_{3|4} f_4.$$

This type of  factorisation  exists if and only  if the model has a  chordal graph structure. The reason is that for chordal graphs and only for those, every proper single-node elimination scheme results in a directed acyclic graph  in which none of the previous {\sf V}s,  $\snode \ful \snode  \ful \snode$,  corresponds  to a sink {\sf V}, that is to a subgraph
of the type  $\snode \fra \snode  \fla \snode$;  see  here Theorem 6 in App. F  and  for its proof see Wermuth and Sadeghi (2012, Theorem 1). 

The diamond graph distribution of equation \eqref{logittri} may, by its factorisation and by the inverse equation \eqref{rhopi}, equivalently  be generated with the following linear regressions:
\begin{eqnarray} \label{lintri}
\pi_{i|k\ell}&=& \fracshalf\{1+ \beta_{1|3.4}\,ik+\beta_{1|4.3}\, i\ell)\}\nonumber ,\\
\pi_{j|k\ell}&=& \fracshalf\{1+ \beta_{2|3.4}\,ik+\beta_{2|4.3}\,i\ell)\}, \\
\pi_{k|\ell}&=& \fracshalf\{1+ \beta_{3|4}\,k\ell\} \nonumber,\\
\pi_{\ell}&=& \fracshalf \nonumber \, .
\end{eqnarray}

For each of $ f_{134},  f_{234}$,  the conditional correlations and the partial correlations  coincide by definition, see the previous section.
Hence, the strength of the dependences can be judged from the partial correlations for $A,C,D$ and $B,C,D$, while the independence 
for pair (1,2) shows in  the zero partial correlation  $\rho_{12.34}$, the standardised version of the concentration $\sigma^{12}$.

Induced simple correlations  result for $a\ci b|c$ -- as in Gaussian distributions -- from requiring $\Sigma_{ab|c}=0$, here  for $a=\{1\}$, $b=\{2\}$, and $c=\{3,4\}$, with 
\begin{equation}\Sigma_{ab}^*=\Pi_{a|c}^{\n}\Sigma_{cb}^{\n}, \nn \n  \Pi_{a|c}=\Sigma_{ac}^{\n}\Sigma_{cc}^{-1} \label{InducCov}  \end{equation}
where $\Pi_{a|c}$ denotes the matrix of linear regression coefficients for response nodes $a$ and regressor nodes $b$.
More explicitly, after indicating a transposed vector by $(\n)\T$, equation \eqref{InducCov} becomes for the diamond graph, labeled as above,
$ \rho_{12}^*= (\beta_{1|3.4}, \; \beta_{1|4.3}) (\rho_{32} \; \rho_{42})\T.
$

 It follows from equation \eqref{DefInt}, or more directly from  
equation (2.4) in Wermuth, Marchetti and Cox (2009),  that the 4-factor linear  interaction,  $\xi_{ijk\ell}$,  for $A,B,C,D$ with  $e=(j, k,\ell)$ is:
\begin{equation}\xi_{ijk\ell}=\Pi_{i|e} (\rho_{k\ell}, \rho_{j\ell}, \rho_{jk})\T.   \label{InducInt}  \end{equation}
This gives for the  diamond graph,  labeled as above,
$  \xi_{1234}=(\beta_{1|3.4}, \; \beta_{1|4.3}) (\rho_{24}, \;\rho_{23})\T$. For response nodes  3 or 4, this interaction implies in regressions on all remaining nodes  varying  linear coefficients even though  
the logit-regression coefficients are  constant.

In trivariate palindromic Ising models and in triangles as prime graphs, linear regression coefficients  are constant  by definition,  but  by 
Proposition \ref{CEqLogLin},  this constancy is impossible for a node chosen as response when it still has  more than two neighbours.
\begin{prop}   \label{CEqLogLin} For  palindromic Ising  models,  a main-effect logit-regression  of   response $A_i$ is equivalent to a main-effect linear regression of $A_i$  if and only if  $A_i$ has at most two regressors. 
\end{prop}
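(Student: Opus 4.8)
The plan is to reduce the asserted equivalence to a purely combinatorial question: when is the conditional mean $E(A_i\mid\omega_R)$, regarded as a real function on $\{-1,1\}^{R}$, a main-effect (first-degree) multilinear function of its regressors? Write $R$ for the set of regressors of $A_i$, that is the neighbours $t$ with $\lambda_{it}\neq0$, and put $m=|R|$. The palindromic form \eqref{DefPalIsLam} gives for the full conditional
$$\lit(\pi_{i=1\mid\omega_R})=2\txt\sum_{t\in R}\lambda_{it}\,\omega_t,$$
since $\lambda_{\emptyset}$ and every term not involving $i$ cancel in the log-odds, the palindromic symmetry removes any intercept, and the pairwise structure leaves no product of regressors. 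Thus the logit regression is automatically a main-effect logit regression, and its conditional mean is the odd function
$$f(\omega_R):=E(A_i\mid\omega_R)=\tanh\!\Big(\txt\sum_{t\in R}\lambda_{it}\,\omega_t\Big).$$
As $A_i$ is binary, the main-effect logit regression is equivalent to a main-effect linear regression exactly when $f$ agrees on $\{-1,1\}^{R}$ with some $\sum_{t\in R}\beta_t\,\omega_t$, i.e.\ when $f$ is main-effect linear.

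Next I would invoke the standard criterion for a function on $\{-1,1\}^{R}$ to be main-effect linear: each discrete partial derivative $\partial_t f(\omega):=\fracshalf\{f(\omega_t{=}1)-f(\omega_t{=}{-}1)\}$ must be constant in the remaining coordinates, which forces every multilinear coefficient of order $\ge 2$ to vanish, after which oddness kills the intercept. For our $f$, writing $h:=\sum_{s\in R\setminus\{t\}}\lambda_{is}\,\omega_s$,
$$2\,\partial_t f=\tanh(h+\lambda_{it})-\tanh(h-\lambda_{it})=:\psi_t(h).$$
The function $\psi_t$ is even in $h$, and since $\tanh'=1-\tanh^{2}$ is strictly decreasing in $|\,\cdot\,|$ one gets $\psi_t'(h)<0$ for $h>0$ whenever $\lambda_{it}\neq0$, so $\psi_t$ is strictly decreasing in $|h|$.

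Both directions then follow. If $m\le 2$, the index set $R\setminus\{t\}$ has at most one element, so $h$ attains a single magnitude; evenness of $\psi_t$ makes $\partial_t f$ constant for every $t$, whence $f$ is main-effect linear and the two regressions coincide. If $m\ge 3$, fix $t\in R$ and two further regressors $s_1,s_2$ with nonzero coefficients; then $h$ takes the values $\pm(\lambda_{is_1}+\lambda_{is_2})$ and $\pm(\lambda_{is_1}-\lambda_{is_2})$, whose magnitudes differ because $|\lambda_{is_1}+\lambda_{is_2}|^{2}-|\lambda_{is_1}-\lambda_{is_2}|^{2}=4\lambda_{is_1}\lambda_{is_2}\neq0$. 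Strict monotonicity of $\psi_t$ yields two distinct values of $\partial_t f$, so $f$ carries a genuine interaction and cannot be main-effect linear, and the equivalence fails.

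The step I expect to be the real obstacle is the monotonicity analysis of $\psi_t$ together with the verification that, once $m\ge 3$, the argument $h$ genuinely attains distinct magnitudes: this is precisely where the pairwise (Ising) structure stops being sufficient and where the nonvanishing of the interactions $\lambda_{it}$ on edges present, i.e.\ edge-minimality, must be used. The palindromic hypothesis enters a second time here, for without it the extra single-factor term would reintroduce an intercept into the logit and already destroy main-effect linearity at $m=2$.
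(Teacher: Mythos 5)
Your proof is correct in substance, but it takes a genuinely different route from the paper's. The paper argues through previously derived algebraic identities: for at most two regressors it invokes the closed-form expressions \eqref{BetaTauLam} for the linear regression coefficients in terms of the hypetan interactions, together with the equivalence of the $\lambda$- and $\tau$-parametrisations in \eqref{DefPalIsLam} and \eqref{DefPalIsTau}; for three or more regressors it appeals to \eqref{InducInt}, asserting that a 4-factor linear interaction is induced in the joint distribution and hence a 3-factor linear interaction in the conditional distribution of the response, which destroys the equivalence. You instead analyse the conditional mean $f(\omega_R)=\tanh\bigl(\sum_{t\in R}\lambda_{it}\omega_t\bigr)$ directly on the hypercube, characterising main-effect linearity by constancy of the discrete derivatives $\partial_t f$ and reducing both directions to the strict monotonicity of $\psi_t(h)=\tanh(h+\lambda_{it})-\tanh(h-\lambda_{it})$ as a function of $|h|$. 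What the paper's route buys is the explicit values of the linear coefficients, which are used elsewhere (e.g.\ in \eqref{lintri}); what your route buys is self-containedness and robustness: since $f$ is odd, all even-order multilinear coefficients vanish, so your non-constancy argument shows the induced odd higher-order coefficient is \emph{always} nonzero when the response has three or more regressors with nonzero $\lambda$'s. This actually substantiates a step the paper's citation of \eqref{InducInt} leaves implicit, namely that the induced interaction cannot vanish by a parametric cancellation.

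Two details in your write-up need repair, though neither threatens the approach. First, the sign claim for $\psi_t'$: when $\lambda_{it}<0$ the function $\psi_t$ is strictly \emph{increasing} in $|h|$, not decreasing; what you need, and what holds for every $\lambda_{it}\neq 0$, is only injectivity of $\psi_t$ as a function of $|h|$, i.e.\ $\psi_t(h_1)=\psi_t(h_2)$ forces $|h_1|=|h_2|$. Second, and more substantively, your converse step is stated only for $m=3$: the claim that $h$ takes the values $\pm(\lambda_{is_1}+\lambda_{is_2})$ and $\pm(\lambda_{is_1}-\lambda_{is_2})$ is false when $m\ge 4$, because the remaining regressors contribute a term $h'=\sum_{s\in R\setminus\{t,s_1,s_2\}}\lambda_{is}\omega_s$ which you have ignored. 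The repair is short: fix those remaining coordinates arbitrarily, so that $h$ ranges over the four values $h'\pm\lambda_{is_1}\pm\lambda_{is_2}$. If all four had equal magnitude, then $|x+\lambda_{is_2}|=|x-\lambda_{is_2}|$ would hold both for $x=h'+\lambda_{is_1}$ and for $x=h'-\lambda_{is_1}$, forcing $h'+\lambda_{is_1}=0$ and $h'-\lambda_{is_1}=0$, hence $\lambda_{is_1}=0$, a contradiction with edge-minimality. So two of the four values differ in magnitude, $\partial_t f$ is non-constant, and your conclusion stands for all $m\ge 3$.
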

\begin{proof} The claim holds for three variables  with equation \eqref{BetaTauLam} given  the equivalence of the parametrizations in terms of $\lambda$s and $\tau$s, in equations \eqref{DefPalIsLam} and \eqref{DefPalIsTau}. For a response with three or more regressors, a  4-factor linear interaction is introduced  
in the joint distribution; see  equation \eqref{InducInt}.
It implies a 3-factor linear interaction in the conditional distribution of this response given more than two  regressors, hence destroys 
the equivalence. \end{proof}
Proposition  \ref{CEqLogLin}
 explains, for an inner node chosen as  a response,  when main-effect logit regressions can be replaced by  main-effect linear regressions.
This holds for instance, for   the  chordal graphs on the left and on the right of Figure \ref{IncGraphs3&4}, but not for the other three and  stresses  the importance of proper elimination schemes  in order to obtain relevant  linear regressions.

Relations of  regression coefficients and partial correlations to concentrations, follow most directly with a matrix operator which extends the sweep operator and is  defined here in  App. E.  For a partitioning of a node set as $\{1, 2, c,m\}$, one has
\begin{equation}\rho_{12.c}=- \sigma^{12.m}/\{\sigma^{11.m}\sigma^{22.m}\}^{-1/2},\fourl \beta_{1|2.c}=-\sigma^{12.m}/\sigma^{22.m}.\label{PcorRegcCon}
\end{equation}

By contrast, within each $2\times 2$ subtable, the conditional correlation is just a correlation coefficient for  two binary variables.  Here, we  replace again level $-1$ by $0$, let $e=  N\setminus\{m,1,2\}$ denote a fixed level combination of the remaining variables and e.g.  $\pi_{+1e}=\txt{\sum}_i \pi_{i1e}$ 
\begin{equation} \rho_{12|e}=(\pi_{11e}-\pi_{+1e}\pi_{1+e})/\{\pi_{+1e}\pi_{+0e}\pi_{1+e}\pi_{0+e}\}^{-1/2}\, .\label{Defccor} \end{equation}
The following adapts  a  previous result to be used  below for palindromic Ising models.

\begin{thm} \label{ConParCor}{\rm Baba, Shibata and Sibuya (2004).}  The partial correlation $\rho_{12.N\setminus\{m,1,2\}}\!$ 
 coincides with the conditional correlation $\rho_{ij|e} $  if  and only if   $\E(A_i,A_j|A_e)$  is a  linear function of  the vector variable $A_e$ and  $\rho_{ij|e} $   is constant at all level combinations $e$. \end{thm}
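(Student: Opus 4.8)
The plan is to link the two coefficients through the residuals of the population least-squares regressions on $A_e$ and then to read off both conditions from the classical variance/covariance decomposition. Write $m_s(A_e)=\E(A_s\mid A_e)$ for the true conditional means and let $\ell_s(A_e)$ be the best linear predictor of $A_s$ from $A_e$, so that by definition $\rho_{12.e}$ is the ordinary correlation of the linear-regression residuals $r_s=A_s-\ell_s(A_e)$, $s=1,2$. Putting $d_s(A_e)=m_s(A_e)-\ell_s(A_e)$, optimality of $\ell_s$ gives $\E(d_s)=0$ and $\cov(d_s,A_e)=0$, and the law of total covariance yields the backbone identity
\begin{equation*}
\cov(r_1,r_2)=\E\{\cov(A_1,A_2\mid A_e)\}+\cov(d_1,d_2),\qquad \mathrm{var}(r_s)=\E\{\mathrm{var}(A_s\mid A_e)\}+\mathrm{var}(d_s).
\end{equation*}
Here the first summand in each expression is a conditional second moment averaged over the cells $e$, and the second is a pure nonlinearity contribution that vanishes exactly when the corresponding conditional mean is linear in $A_e$.

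For sufficiency I would assume both conditional means linear, so $d_1=d_2=0$, and the conditional correlation constant, $\rho_{12|e}=\rho$ for every $e$. Writing $v_s(e)=\mathrm{var}(A_s\mid A_e=e)$ and $\cov(A_1,A_2\mid A_e=e)=\rho\sqrt{v_1(e)v_2(e)}$, the identity collapses to
\begin{equation*}
\rho_{12.e}=\frac{\rho\,\E\{\sqrt{v_1 v_2}\}}{\sqrt{\E(v_1)\,\E(v_2)}}.
\end{equation*}
Thus $\rho_{12.e}=\rho=\rho_{12|e}$ precisely when $\E\{\sqrt{v_1 v_2}\}=\sqrt{\E(v_1)\E(v_2)}$, and this is where I expect the main obstacle to sit: by Cauchy--Schwarz equality holds only when $\sqrt{v_1}$ and $\sqrt{v_2}$ are proportional, i.e. the conditional variance ratio $v_1(e)/v_2(e)$ is constant across cells (the conditionally independent case $\rho=0$ being immediate). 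I would close this gap by invoking the structure that makes the statement true for the models at hand: for a palindromic Ising model the central symmetry $\Pr(\omega)=\Pr(-\omega)$ identifies the conditional law of $(A_1,A_2)$ given $A_e=e$ with that of $(-A_1,-A_2)$ given $A_e=-e$, so each $v_s(\cdot)$ is in fact constant over the cells, the Cauchy--Schwarz step holds with equality, and $\rho_{12.e}=\rho$ follows. Equivalently one may read the hypothesis as constancy of the whole conditional covariance matrix, the form in which the cited result is cleanest; this is presumably the \emph{adaptation} referred to.

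For necessity I would argue from the same identity. Coincidence $\rho_{12.e}=\rho_{12|e}$ forces the right-hand side to equal a single number, so $\rho_{12|e}$ cannot vary with $e$, which gives constancy at once. Forcing linearity is the delicate half: a nonzero nonlinearity inflates $\mathrm{var}(r_s)$ by $\mathrm{var}(d_s)$ and shifts $\cov(r_1,r_2)$ by $\cov(d_1,d_2)$, and since these terms are orthogonal to the averaged conditional moments they cannot be reabsorbed, so $\rho_{12.e}$ is displaced from the common constant conditional value; making this quantitative is the point I expect to require the most care. For the binary variables considered here this is eased because each $m_s$ is automatically affine whenever $A_e$ reduces to a single binary regressor — any function on two points is affine — which is exactly the regime of a triangle prime graph and matches the trivariate coincidence already recorded in \eqref{parregc} and \eqref{parcor}. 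Assembling the two directions then gives the stated equivalence, with the variance-ratio/symmetry point flagged above as the one place where the binary palindromic structure, rather than the two named conditions alone, is doing the work.
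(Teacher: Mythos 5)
You should first know that the paper contains no proof of this statement to compare against: Theorem~2 is quoted, in adapted form, from Baba, Shibata and Sibuya (2004), and the paper only ever \emph{verifies its hypotheses} in applications (App.~B checks that conditional means are linear and that conditional variances and correlations are constant before invoking it). So your argument has to stand on its own. Its core machinery is sound: the residual decomposition $\cov(r_1,r_2)=\E\{\cov(A_1,A_2\mid A_e)\}+\cov(d_1,d_2)$ and $\mathrm{var}(r_s)=\E\{\mathrm{var}(A_s\mid A_e)\}+\mathrm{var}(d_s)$ is correct, and so is the reduction of the sufficiency direction to the Cauchy--Schwarz equality $\E\{\sqrt{v_1v_2}\}=\sqrt{\E(v_1)\,\E(v_2)}$. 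This correctly exposes that the paper's two stated hypotheses are weaker than what Baba et al.\ actually assume, namely constancy of the whole conditional covariance matrix rather than of the conditional correlation alone.

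The gaps are in how you close each direction. Your patch for sufficiency --- ``palindromic symmetry makes each $v_s(\cdot)$ constant over the cells'' --- is false. Central symmetry $\Pr(\omega)=\Pr(-\omega)$ gives only $m_s(-e)=-m_s(e)$, hence $v_s(e)=v_s(-e)$: it pairs cell $e$ with cell $-e$ and says nothing across other cells. Concretely, in the paper's own diamond-graph palindromic Ising model the conditional mean $\E(A_1\mid A_3=k,A_4=\ell)=\beta_{1|3.4}\,k+\beta_{1|4.3}\,\ell$ is linear, yet $v_1(k,\ell)=1-(\beta_{1|3.4}\,k+\beta_{1|4.3}\,\ell)^2$ takes two distinct values according to the sign of $k\ell$; conditional variances are constant only when $e$ is a single binary coordinate, which is precisely the regime in which the paper applies the theorem (App.~B conditions on the single variable $A_4$), whereas your proof claims it for general $e$. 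The necessity half is also not closed: constancy of $\rho_{12|e}$ is indeed immediate, but ``orthogonal terms cannot be reabsorbed'' is an assertion, not an argument --- $\cov(d_1,d_2)$ and $\mathrm{var}(d_s)$ can in principle inflate numerator and denominator proportionally, so the claimed displacement of the correlation ratio must be proved; and the fallback that any function of one binary variable is affine again covers only the single-regressor case. To repair the argument, either strengthen the hypotheses to those of Baba et al.\ (conditional covariance matrix independent of $A_e$), under which your Cauchy--Schwarz step becomes trivial, or state the variance-ratio condition $v_1(e)/v_2(e)$ constant as an explicit additional hypothesis; as written, both directions of the equivalence have holes.
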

Theorem 2 and   Proposition \ref{CEqLogLin} imply  for the equivalence of  $\rho_{ij|e} $ and $\rho_{12.N\setminus\{m,1,2\}}$, that no  node chosen as a response  should have more than two neighbours.
This is reached  with any proper node-set elimination scheme for  hollow trees and, as we shall see, only for these palindromic Ising models.

\subsubsection*{\small 3.3 Palindromic Ising models generated over chordless cycles} 

We start this section with  a 4-cycle having its  nodes  labelled as in the left graph of Figure \ref{Cycles4&5}.
  \begin{figure}[H]
 \centering 
\includegraphics[scale=0.42]{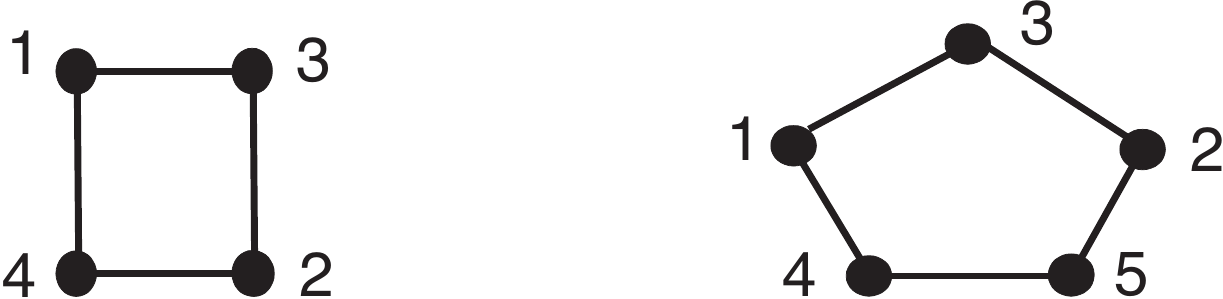}
{ \caption{ \small{\it left: labeled 4-cycle; right:   labeled 5-cycle} }\label{Cycles4&5}}
  \end{figure}  
\noindent  Thus, the missing edges  are for  pairs (1,2), (3,4). An example  is in the next table.

$$\centering
{\small
\left[\begin{array}{r r r r r r r r r r  r r r r r r r r r } 
  ijkl \text{ with }l=0:&0000 & 1000 & 0100 & 1100 &0010& 1010 & 0110 & 1110  \\
556\, \pi_{ijkl}\,: &180 & 36& 5& 1& 20& 15& 12& 9\\
\end{array}\right]	}
$$

The following matrix on the left contains $\lambda_{ij}$ in the lower and  $\tau_{ij}$ in the upper-triangular part, the matrix on the right has $\rho_{ij}$ in the lower and $\rho_{ij.k\ell}$ in the upper-triangular part. 

{\small
$$    
\begin{array}{cccc}\hline
                           0 & 0 &  0.319&0.442 \\
                           0&0& 0.646&0.771\\
                           0.330 &0.768& 0& 0 \\ 
                            0.474 &  1.024  & 0& 0\\  \hline
                            \end{array} \quad  \fourl
                            \begin{array}{cccc} \hline
                            1 &  0 &  0.226& 0.286 \\
                           0.511 & 1 & 0.459&0.648\\
                     0.504   &  0.705  &   1& 0\\ 
                     0.561 & 0.806&   0.597& 1\\
                     \hline
                         \end{array} \quad
                       $$        
}   
The two conditional independences $1\ci 2|\{3,4\}$ and $3\ci 4|\{1,2\}$ show as zeros  in usual measures of dependence  for 
the contingency table, $\lambda_{ij}, \tau_{ij}$,  and   in the partial correlations, $\rho_{ij.k\ell}$. 

For the chordless cycle here, all elements of  $\Sigma^{-1}$ relate  to  joint linear  regressions of two uncoupled nodes  on their neighbours.  For  $a=\{1,2\}$ as in Figure \ref{Cycles4&5}, the neigbours of nodes $1,2$ are  both in $b=\{3,4\}$. With $N=(a,b)$ and $m=\emptyset$,  equation   \eqref{PcorRegcCon} gives for row 1 of $\Sigma^{-1}$
\begin{equation}  \sigma^{11}=1/\sigma_{11|b}, \nn (\sigma^{13}\!, \;\sigma^{14})=-\sigma^{11}\Pi_{1|b}, \label{Conc&LReg}\end{equation}
where  $\Pi_{1|b}=(\beta_{1|3.4},\; \beta_{1|4.3}),\nn \sigma_{11|b}=1-\Pi_{1|b}\Sigma_{b1}$ and  for row 2, the analogous results.
 It is shown in App. B that linear regression coefficients are positive multiples of the $\tau$s in an almost unchanged form  
of equation \eqref{BetaTauLam} which holds for trivariate distributions. Similarly,  partial correlations  are then also positive multiples of  the $\tau$s.

The following equations, \eqref{InducedCorr4Cyc} and \eqref{InducedCorr5Cyc}, 
give the induced marginal correlations for pair $(1,2)$ in both cycles of Figure \ref{Cycles4&5} in terms of the hypetan interactions,  $\tau_{ij}$. A formal proof for
the 4-cycle is in App. A, while the following  explanations  are  to provide  some more intuitive insights.
 
In the  labelled 4-cycle, we have   $\tau_{13}=\tau_{13\backslash 4}$ and  $\tau_{23}=\tau_{23\backslash 4}$ due to $3\ci 4|\{1,2\}$.  Marginalising over node 4  in  the subgraph  $1\ful 4 \ful 2$, which is a so-called   transmitting {\sf V}, induces $\tau_{12\backslash 4}=\tau_{14}\tau_{24}$.
By applying next equation \ref{rectau} to the marginal table of $A_1,A_2,A_3$, one gets
for the margin  of $A_1, A_2$:
\begin{equation}
\rho_{12}^{*}= \tau_{12\backslash 34}=(\tau_{14}\tau_{24}+\tau_{13}\tau_{23})/\text{const., \n  with  const.}=1+  \tau_{13}\tau_{14}\tau_{23}\tau_{24}. \label{InducedCorr4Cyc} \end{equation}
The linear 4-factor interaction is  instead a multiple of the sum of   products  of $\tau_{ij}$s for disjoint edges:
$$ \xi_{1234} =  (\tau_{13}\tau_{24} + \tau_{14}\tau_{23})/  \text{const., \n  with  const.}=1+  \tau_{13}\tau_{14}\tau_{23}\tau_{24}.$$
However, this interaction is not relevant  if the undirected graph is oriented into a Markov-equivalent LWF
chain graph, that is, for instance,  by letting an arrow point from node 4 to 1 and from node 2 to 3. The distribution  generated over this graph has factorisation $f_{13|24}f_{24}$ and  satisfies for response nodes 1 and 3  the two independences $1\ci 2|34$ and $3\ci4|12$.

 For the labelled 5-cycle of Figure \ref{Cycles4&5}, a minor modification of the 4-cycle  argument  gives:
\begin{equation} \rho_{12}^{*}= \tau_{12\backslash 345}=(\tau_{14}\tau_{45}\tau_{25}+\tau_{13}\tau_{23})/\text{const., \n  with  const.}=1+  \tau_{14}\tau_{45}\tau_{25}\tau_{13}\tau_{23}.  \label{InducedCorr5Cyc} \end{equation}
One knows  that $\tau_{j3}=\tau_{j3\backslash 45}$  for $j=1,2$ since $\{4,5\}\ci  3|\{1,2\}$
 and  $\tau_{12\backslash 45}=\tau_{14}\tau_{45}\tau_{25}$ by closing the transmitting {\sf V}s along the path $1\ful 4\ful 5\ful2$.  Next, one  applies   equation \eqref{rectau},   again as above, to the table of  $A_1,A_2,A_3$, which remains after marginalising over $4,5$. 
 
 These  arguments for the 4- and 5-cycle lead  directly to the following result for all chordless cycles.  \begin{prop} In a chordless $h$-cycle of a palindromic Ising model, the induced $\rho_{k\ell}^{*}$ for an uncoupled node pair $(k,l)$ results by tracing the two paths, connecting $k,\ell$ in the cycle,  in terms of  products of $\tau$'s, attached to the $h$ edges present in the cycle. \label{InducingPaths}
\end{prop}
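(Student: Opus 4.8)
The plan is to prove the identity by induction on the cycle length $h$, using equation \eqref{rectau} as the sole computational engine together with the fact that marginalising over a single degree-two node of a cycle-structured palindromic Ising model returns again such a model on the shorter cycle. Write the two arcs joining the uncoupled pair $(k,\ell)$ as $k=u_0-u_1-\cdots-u_{p-1}-u_p=\ell$ and $k=w_0-w_1-\cdots-w_{q-1}-w_q=\ell$, with $p+q=h$ and both $p,q\ge 2$ (since $k,\ell$ are uncoupled, neither arc is a single edge). Let $T_1=\tau_{ku_1}\tau_{u_1u_2}\cdots\tau_{u_{p-1}\ell}$ and $T_2=\tau_{kw_1}\cdots\tau_{w_{q-1}\ell}$ be the two path products; the assertion to be shown is $\rho_{k\ell}^{*}=(T_1+T_2)/(1+T_1T_2)$, where the denominator is one plus the product of the $\tau$'s attached to all $h$ edges. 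For the base case $h=4$ both arcs have length two and this is exactly \eqref{InducedCorr4Cyc}, proven in App.~A, while \eqref{InducedCorr5Cyc} is the first instance of the inductive step.

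For the inductive step, take $h\ge 5$. Then at least one arc, say the first, has length $p\ge 3$, so it contains an interior node $u_i$ whose two cycle-neighbours $u_{i-1}$ and $u_{i+1}$ are non-adjacent in the chordless cycle; hence $\lambda_{u_{i-1}u_{i+1}}=0$ and $\tau_{u_{i-1}u_{i+1}}=0$ conditionally on all remaining variables. Marginalising over $u_i$ closes this transmitting \textsf{V}: in the trivariate palindromic model for $A_{u_{i-1}},A_{u_i},A_{u_{i+1}}$ conditioned on everything else, the specialisation of \eqref{rectau} at $\tau_{u_{i-1}u_{i+1}}=0$ gives the effective interaction $\tau_{u_{i-1}u_{i+1}\backslash u_i}=\tau_{u_{i-1}u_i}\tau_{u_iu_{i+1}}$. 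Every other edge $ab$ of the cycle keeps its value, because for each such edge at least one endpoint is non-adjacent to $u_i$, which is precisely the collapsibility condition $a\ci u_i$ or $b\ci u_i$ given the rest, read off from \eqref{rectau}. Thus the marginal over $N\setminus\{u_i\}$ is carried by the $(h-1)$-cycle obtained by deleting $u_i$ and joining its neighbours; the pair $(k,\ell)$ is still uncoupled there (the first arc now has length $p-1\ge 2$), and the two path products are unchanged, since the merged edge contributes $\tau_{u_{i-1}u_i}\tau_{u_iu_{i+1}}$ in place of the two factors it replaces. Applying the induction hypothesis to this $(h-1)$-cycle yields $\rho_{k\ell}^{*}=(T_1+T_2)/(1+T_1T_2)$, completing the step.

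The routine algebra — the $\tau_{ij}=0$ reduction of \eqref{rectau} and the verification that the denominator collects the product over all $h$ edges — is immediate. The genuine work, and the step I expect to be the main obstacle, is the structural claim used inside the inductive step: that marginalising over one interior (degree-two) node of a palindromic Ising model carried by a chordless cycle returns a palindromic Ising model carried by the $(h-1)$-cycle, with the single merged two-factor interaction equal to $\tau_{u_{i-1}u_i}\tau_{u_iu_{i+1}}$ and all other interactions unchanged. I would establish this once as a lemma that lifts the trivariate collapsibility of \eqref{rectau} to the ambient $h$-variable model. Palindromic symmetry $\Pr(\omega)=\Pr(-\omega)$ is preserved under marginalisation and so forces all odd-order induced interactions to vanish; the point still needing a short but genuine computation — the analogue of the explicit $4$-cycle calculation of App.~A — is that no even, higher-than-two-factor interaction is created, so that the marginal stays within the Ising class. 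Once this lemma and the bookkeeping for the unchanged edges are in place, the induction closes without further difficulty.
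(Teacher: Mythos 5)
Your strategy --- induction on $h$, with the inductive step marginalising over a single degree-two node so that its two incident edges merge into one carrying the product $\tau_{u_{i-1}u_i}\tau_{u_iu_{i+1}}$ --- is exactly the mechanism the paper uses: it phrases the same operation as repeatedly ``closing transmitting {\sf V}s'' along the arcs and then finishing with equation \eqref{rectau} on the remaining trivariate table. So there is no divergence of route. The problem is that the step you defer is not bookkeeping but the entire content of the proposition: the lemma that marginalising over an interior node of the cycle returns a \emph{palindromic Ising} model on the $(h-1)$-cycle, with the merged interaction equal to the product and every other interaction literally unchanged. Your attempted justification of the ``unchanged'' part from the trivariate collapsibility conditions of \eqref{rectau} is circular: \eqref{rectau} is a statement about three-variable palindromic Ising models, and invoking it ``conditionally on all remaining variables'' inside the $h$-variable model presupposes precisely the closure property (no induced higher-order interactions) that the lemma is supposed to deliver. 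As submitted, the induction does not close.

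The paper fills this gap in Appendix A with the product parametrisation \eqref{DefPalIsTau}, and it is worth seeing how little is needed. On a chordless cycle, $\Pr(\omega)=\text{const.}\,\prod(1+\tau_{st}\omega_s\omega_t)$ with one factor per edge present. Summing over the two levels of an interior node $u$ touches only the two factors containing $\omega_u$, and
\begin{equation*}
\sum_{\omega_u=\pm 1}(1+\tau_{au}\omega_a\omega_u)(1+\tau_{ub}\omega_u\omega_b) \;=\; 2\,(1+\tau_{au}\tau_{ub}\,\omega_a\omega_b),
\end{equation*}
because the cross terms are odd in $\omega_u$. Hence the marginal is again of the form \eqref{DefPalIsTau} on the $(h-1)$-cycle: all other factors pass through the sum untouched, the normalising constant merely doubles, no interaction of order three or higher can appear, and the palindromic symmetry is inherited. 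This single identity is your missing lemma; with it inserted, your inductive step is sound, the other-edges-unchanged claim needs no appeal to collapsibility at all, and the argument reproduces \eqref{InducedCorr4Cyc}, \eqref{InducedCorr5Cyc} and your general formula $(T_1+T_2)/(1+T_1T_2)$.
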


A matrix form for induced correlations uses  instead partial correlations. For this, 
we denote  the  overall matrix of partial correlations  by $\theta$. The relation between $\Sigma$ and $\theta$ becomes transparent by using  the identity matrix $I$ and standardising, that is changing
the elements $m_{ij}$ of a symmetric matrix $M$ to give with the elements of $M_{\rm std}$ as $m_{ij}/\sqrt{m_{ii}m_{jj}}$:
\begin{equation}  \Sigma^{*}=K_{\rm std}  \text{ with }    K= (2 I -\theta)^{-1}.\label{IndCov} \end{equation}
This holds in unchanged form for all larger chordless cycles. 

With $|M|$ denoting the determinant of $M$, the product   $\Sigma^{-1}|\Sigma|$ gives two equations in terms of marginal correlations which may be solved numerically for a set of given data to obtain $x=\rho_{12}^{*}$ and  $y=\rho_{34}^{*}$:
\begin{equation}  x(y^2-1)-c_1 y+ c_2 =0, \nn \n y(x^2-1)-c_1 x+ c_3 =0 , \label{polyEq}\end{equation}
where $c_1=\rho_{13}\rho_{24}+ \rho_{14}\rho_{23}$ is the sum of products of disjoint edges,   $c_2=\rho_{13}\rho_{23}+ \rho_{14}\rho_{24}$ relates to the two paths connecting pair $(1,2)$
and $c_3$ relates,  analogously,
 to the two paths connecting $(3,4)$. The five possible solutions reduce to one if one constrains them to be within a range, say of $\pm 0.2$, of the given $\rho_{12}$ and  $\rho_{34}$. For the example to a diamond graph in the previous section, the solution changes $\rho_{12}=0.523$ to $\rho_{12}^{*}=0.545$ and  $\rho_{34}=0.669$  to $\rho_{34}^{*}=0.590$.

To generate the probabilities to the 4-cycle in Figure \ref{Cycles4&5}, there are then two main options. One may  (1)  fit  the four sufficient two-way tables with some standard iterative procedure or  (2)  generate first the probabilities in the marginal tables for nodes $1,3,4$ and $2,3,4$  using  the solutions  of equations  \eqref{polyEq}  and  exploit next that the  independence constraint for   pair $(1,2)$  gives  $ f_{1234}= f_{1|34} f_{234}.$

The two trivariate tables used in this expression may  be obtained in  terms of simple correlations  and equation \eqref{rhopi},  in  terms of  partial correlations and equation \eqref{ParcortoProb} or  simply as  marginal three-dimensional tables when  an observed table with skewed margins has been transformed into a  saturated 
palindromic Ising model, as described in section 3.5. There are analogous, albeit a bit more complex extensions  to larger chordless cycles, not given here.

\subsubsection*{\small 3.4 Some properties of Ising models with hollow tree-structure} 

We recall from  Section 1, how parameters of  Bernoulli distributions are defined in equation \eqref{DefInt} and that in palindromic Ising models, the 1-factor
terms are zero, in addition  to all higher than 2-factor log-linear interactions. Equations \eqref{CondInd}  define independences and equations  \eqref{DefDep} point to the dependences in Ising models with an edge-minimal graph structure.
 
From Section 2, we remember that models with hollow-tree structure have a unique set of prime graphs which consists of   edges or cycles, attached to each other at cut-sets which are single nodes  or edges. The unique  set of cut-sets  arises with any  proper node-set elimination scheme giving sequences of outer node sets as responses residing in  a single prime graph.  

Binary thin trees as well as  bulged  trees are a subclass of the hollow trees but  contain exclusively chordal-graph models.
For  thin trees with non-symmetric margins,  there is an extensive literature in graph theory, in phylogenetics and in machine learning. Bulged  trees have been studied under the name  `dino graphs' by Loh and Wainright (2013) for general multinomial and for Bernoulli distributions.

In general Bernoulli distributions with a thin or extended tree-structure, one may, by   Proposition \ref{MargPrimeG}, reduce the concentration graph
to  subgraphs, such as  to the tree trunks in Figure \ref{TreeTrunks}, which have a subset  of nodes $a\cup b\cup c$ satisfying the  conditional independence constraint  $a\ci b|c$. The nodes in $a$ and in $b$ are then the  outer nodes of two prime graphs, set $c$  separates $a$ and $b$;  the smallest remaining separator  is either a cut-set or a prime graph. 
Then, the  joint probability distribution in this  marginal distribution,  $f_{a\cup b\cup c}$,  factorises as   
\begin{equation} f_{a\cup b\cup c}=f_{a\cup c}f_{b\cup c}/f_{c}=f_{a|c}f_{b|c}f_c\,.  \label{FactMarg}\end{equation}
 For  our extended trees,  several   special features, based on linear relations, are to be summarised next. In particular, we shall see that all conditional independence constraints -- and only these --  are reflected  for the palindromic hollow trees  in zero elements, $\sigma^{ij}=0$,  of $\Sigma^{-1}$. Each $ij$-edge present in the graph shows instead as $\sigma^{ij}\neq 0$.

 For palindromic Ising models with hollow tree-structure and factorisations as in equation \eqref{FactMarg}, there are  block-triangular matrix  decompositions of $\Sigma^{-1}$ for  $N'=a\cup b\cup c$
 just  as for joint Gaussian distributions;  see Wermuth (1992, equation (3.5)):
\begin{equation} \Sigma^{-1}=
\varphi_{a}(\Sigma^{aa})^{-1}\varphi_{a}^{T} +
\varphi_{b}(\Sigma^{bb.a})^{-1}\varphi_{b}^{T} +
\varphi_{c}(\Sigma^{cc.ab})^{-1}\varphi_{c}^{T}  \label{BlockDiagComp}
\end{equation} with  \[ \varphi_{a}=\left( \begin{array}{l}
\Sigma^{aa}\\
\Sigma^{ba}\\
\Sigma^{ca}
\end{array}\right)\!,\n
\varphi_{b}=\left( \begin{array}{l}
0_{ab}\\[1mm]
\Sigma^{bb.a}\\
\Sigma^{cb.a}
\end{array}\right)\!, \n
\varphi_{c}=\left( \begin{array}{l}
0_{ac}\\
0_{bc}\\[1mm]
\Sigma^{cc.ba}
\end{array}\right).\]
where for instance, $\Sigma^{ba}$ is the submatrix for $(b,a)$ in a concentration matrix for $a\cup b \cup c$, $\Sigma^{cb.a}$ is the submatrix for $(c,b)$ in a concentration matrix obtained 
after marginalizing over $a$ and $\Sigma^{cc.ba}$ is a concentration matrix for $c$ after marginalizing over $a$ and $b$. 

When the nodes in $a$ and in $b$ are the outer nodes of two prime graphs of a hollow tree  and $c$  is a prime graph or a cut-set separating $a$ and $b$, then $a\ci b|c$, 
 is an `independence constraint between prime graphs' and  shows  as $\Sigma^{ab}=0$, while, for instance,  when  $a$, $b$ and $c$ have  more than three nodes, there are cycles  within $a$, $b$ and $c$ in the  graph,  which represent  `independence constraints within prime graphs'  and  show as cycles  within $\Sigma^{aa}$, $\Sigma^{bb.a}$ and $\Sigma^{cc.ab}$.  One consequence of this representation  is the following result proven in App. C.
\begin{thm}   \label{IsGaussConc} A quadratic exponential distribution for symmetric binary variables is generated over an edge-minimal hollow tree if and only if   (1) the
Markov structure  of its concentration  graph is defined by  the set of zeros in its   overall partial correlation matrix  and (2)  within each of its  prime graphs, all conditional  correlations agree  with  the partial correlations. \end{thm}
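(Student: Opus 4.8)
The plan is to prove the two implications separately, using as a common backbone the unique prime-graph decomposition together with a proper node-set elimination scheme (Proposition \ref{MargPrimeG}). Along such a scheme each prime graph is exposed in a marginal on a set $a\cup b\cup c$ that factorises as in \eqref{FactMarg}, and in both directions the decisive quantity is the number of neighbours a node retains, in this exposed marginal, when it is chosen as a response. The guiding principle, already visible in the remark following Theorem \ref{ConParCor}, is that a hollow tree is exactly the class for which some proper elimination presents every response with at most two regressors.

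\textbf{Forward direction.} Assume the model is a palindromic Ising model over an edge-minimal hollow tree. First I would fix a proper node-set elimination scheme and note, since the prime graphs are only edges and cycles, that in the marginal exposing any prime graph every response has at most two regressors inside that prime graph. Proposition \ref{CEqLogLin} then makes the relevant conditional mean $\E(A_i\mid A_e)$ linear in $A_e$, while \eqref{DefPalIsLam} keeps the conditional correlation $\rho_{ij\mid e}$ constant across levels $e$; Theorem \ref{ConParCor} now yields $\rho_{ij\mid e}=\rho_{ij\cdot e}$ within each prime graph, which is (2). For (1), I would assemble $\Sigma^{-1}$ from the within-prime-graph concentration matrices along the scheme using \eqref{BlockDiagComp}: the vanishing off-diagonal blocks $\Sigma^{ab}$ place every between-prime-graph independence at a zero; inside a cycle block the defining independence of a missing chord makes its within-marginal conditional correlation zero, hence by the agreement just shown its concentration zero; and every edge present forces $\lambda_{ij}\neq0$ by edge-minimality and \eqref{DefDep}, hence $\tau_{ij}\neq0$ and, since partial correlations are positive multiples of the $\tau$'s (App.~B), a nonzero $\sigma^{ij}$. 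This gives the exact match between zeros and missing edges.

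\textbf{Reverse direction.} Assume (1) and (2). Condition (1) is precisely the statement that the concentration graph read off from the zeros of the partial-correlation matrix is edge-minimal, in the sense of \eqref{CondInd}--\eqref{DefDep}, so only the hollow-tree shape remains to be proved. I would argue by contraposition: if the graph is not a hollow tree, then some prime graph is neither an edge nor a cycle, or some cut-set is a simplex on three or more nodes, and in either case no elimination can avoid presenting some node $i$ as a response with three or more neighbours in the marginal exposing that prime graph. Choosing such a response, \eqref{InducInt} shows that a nonvanishing higher-order linear interaction is induced, so conditioning on the remaining neighbours produces conditional tables whose skewness varies with the level combination $e$; consequently $\E(A_i\mid A_e)$ is not linear and $\rho_{ij\mid e}$ is not constant, whence Theorem \ref{ConParCor} forbids $\rho_{ij\mid e}=\rho_{ij\cdot e}$, contradicting (2). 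Therefore every prime graph is an edge or a cycle and every cut-set is a single node or an edge, i.e.\ the graph is a hollow tree.

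\textbf{Main obstacle.} I expect the reverse direction to carry the real difficulty, in two places. One must invoke the structural fact (Rose; Leimer) that a complex incomplete prime graph, a clique on four or more nodes, or an over-large cut-set admits no elimination ordering keeping every response below three neighbours, so that the three-neighbour situation is genuinely unavoidable rather than an artefact of a bad ordering. And one must check that the higher-order interaction furnished by \eqref{InducInt} really survives under the palindromic constraints --- that the induced nonlinearity of $\E(A_i\mid A_e)$, or the induced level-dependence of $\rho_{ij\mid e}$, cannot be accidentally cancelled --- so that the failure of the Baba--Shibata--Sibuya conditions is strict. The forward direction, by comparison, is largely bookkeeping with \eqref{BlockDiagComp} and \eqref{FactMarg} together with the trivariate identities of Section 3.1.
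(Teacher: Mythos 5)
Your proposal matches the paper's own proof (Appendix C) in both structure and ingredients: the forward direction runs through the prime-graph factorisation \eqref{FactMarg}, Proposition \ref{CEqLogLin}, Theorem \ref{ConParCor} and the block decomposition \eqref{BlockDiagComp}, while the converse is proved by contraposition through the same two cases (a prime graph that is neither an edge nor a cycle, or a cut-set on three or more nodes), each reduced to a response node with three or more regressors whose logit regression cannot be linearised. The obstacles you flag at the end --- that no proper node-set elimination scheme can avoid such a response, and that the induced higher-order linear interaction genuinely destroys the equivalence --- are precisely the points the paper settles by appeal to Leimer's elimination scheme and to Proposition \ref{CEqLogLin}, so there is no substantive divergence from the published argument.
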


Theorem \ref{IsGaussConc} states essentially that such joint distributions are determined by linear relations. 
This  implies  that dependences along paths can be combined by marginalising since linear relations are traceable; for this notion see here  App. D or  Wermuth (2012). For instance in hollow trees with chordless cycles as prime graphs, the  simple correlations induced for the missing edges within a cycle may be obtained in terms
of partial correlations or, in a more compact way,  in terms of the hypetan interactions, as in equations  \eqref{InducedCorr4Cyc} and \eqref{InducedCorr5Cyc}; see also Proposition \ref{InducingPaths}.  For a 4-cycle, the linear relations lead to
 equations  \eqref{polyEq}, expressed  in  terms of the correlations for
 the edges present in the graph and give  solutions for the induced correlations.

Theorem \ref{IsGaussConc} is  important for judging a hypothesized hollow-tree structure in a new sample
for two main reasons.  First, one can recog\-nize the pairwise independences defining the Markov structure in the overall partial correlation matrix, $(2I-\Sigma^{-1})_{\rm std}$, which is easily estimated for reasonably large sample sizes and contains elements ranging at most within $\pm 1$.   Second, one can judge the strength of  each pairwise dependence by the  partial correlation computed from small submatrices of the estimated  
 covariance matrix  corresponding the  unique subset of prime graphs for the hollow tree. 
 
 Together with the equivalence in Proposition 
\ref{CEqLogLin} for main-effect logit and main-effect linear regressions, Theorem 3 also leads to the following results.

 \begin{coro} \label{Cor1}
   The joint distribution of a palindromic Ising model with   thin- or bulged-tree structure
 can  be generated with  linear regressions by using any  single-node elimination scheme.  \end{coro}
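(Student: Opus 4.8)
The plan is to combine the factorisation induced by a proper single-node elimination scheme with the local equivalence of Proposition~\ref{CEqLogLin}, arguing by induction along the scheme. First I would recall that thin and bulged trees are chordal, so by the Tarjan--Yannakakis algorithm a proper single-node elimination scheme exists and every such scheme is a perfect elimination ordering $(v_1,\dots,v_d)$, in which each $v_i$ is simplicial in the graph remaining after $v_1,\dots,v_{i-1}$ have been removed. As in the diamond-graph display \eqref{logittri}, this ordering yields the factorisation $f_N=\prod_i f_{v_i\mid c_i}$, where $c_i$ is the set of neighbours of $v_i$ among $\{v_{i+1},\dots,v_d\}$ and $c_i$ is a simplex.

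The main obstacle, and the combinatorial heart of the argument, is to show that along any such scheme every eliminated response has at most two regressors, i.e.\ $|c_i|\le 2$. For a thin tree a simplicial node has mutually adjacent neighbours, but the tree contains no triangle, so such a node is a leaf of degree one. For a bulged tree, whose prime graphs are edges or triangles glued at single-node cut-sets, a simplicial node cannot be a cut node (neighbours lying in distinct prime graphs are non-adjacent), so it resides in a single prime graph: either a pendant edge, giving degree one, or a triangle, where its two neighbours are the remaining triangle vertices and are adjacent, giving degree two. Hence $|c_i|\le 2$ in both cases. I would then check closure under elimination: removing a leaf from a thin tree leaves a thin tree, removing a degree-one node from a bulged tree leaves a bulged tree, and removing a simplicial triangle vertex turns that triangle into an edge, again a bulged tree. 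Since a simplicial node needs no fill-in, marginalising it adds no edge, so each intermediate graph stays in the class and the degree bound persists at every step.

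Next I would verify that each factor is the conditional of a palindromic Ising model with at most two regressors. Central symmetry $\Pr(\omega)=\Pr(-\omega)$ is preserved under marginalisation, and by Proposition~\ref{MargPrimeG} marginalising the simplicial node adds no edge to the reduced subgraph, so the remaining marginal is again a palindromic Ising model over a smaller thin or bulged tree, its log-linear form \eqref{DefPalIsLam} carrying only the surviving two-factor terms. Consequently each $f_{v_i\mid c_i}$ is a main-effect logit regression in the $c_i$-values, exactly of the type in \eqref{logittri}, with $|c_i|\le 2$ regressors and zero intercept by symmetry. Proposition~\ref{CEqLogLin} then applies factor by factor: every such main-effect logit regression is equivalent to a main-effect linear regression, of the form in \eqref{lintri}.

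Substituting the equivalent linear regression for each factor reproduces the joint distribution through a sequence of linear regressions, establishing the claim for any proper single-node elimination scheme; the base case, a single symmetric binary variable with $f_{v_d}=1/2$, is trivially linear and completes the induction. Here Theorem~\ref{IsGaussConc} plays the supporting role of guaranteeing consistency: it ensures that the whole palindromic Ising model is determined by its linear relations --- the partial correlations encoding the Markov structure and agreeing with the conditional correlations within each edge or triangle --- so that the linear regressions obtained at the successive steps fit together into exactly the original distribution rather than merely into some distribution sharing its conditional-independence structure.
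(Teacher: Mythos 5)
Your proof is correct and follows the route the paper intends --- Proposition~\ref{CEqLogLin} applied factor by factor along a proper single-node elimination scheme --- but it is considerably more self-contained than the paper's own derivation, which simply asserts that Theorem~\ref{IsGaussConc} together with Proposition~\ref{CEqLogLin} ``lead to'' the corollary without further argument. What you supply explicitly is the combinatorial heart the paper leaves implicit: that in a thin or bulged tree every simplicial node has degree at most two, that the class is closed under eliminating such a node, and that the marginal over an eliminated node is again a palindromic Ising model. Two remarks. First, your justification of this last point via Proposition~\ref{MargPrimeG} is slightly misplaced: that proposition is purely graph-theoretic, whereas the distributional fact you need is that marginalising over a node with at most two neighbours forming a simplex induces at most a 2-factor term between those already adjacent neighbours, which is exactly the computation behind equation \eqref{Marg23Tri} in App.~A; the conclusion is right, but this is the citation that carries it. Second, your closing appeal to Theorem~\ref{IsGaussConc} for ``consistency'' is superfluous: the chain-rule factorisation $f_N=\prod_i f_{v_i\mid c_i}$ already guarantees that substituting the equivalent linear form of each conditional reproduces the joint distribution exactly, so no gluing argument is needed. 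The net effect is that your induction proves the corollary without invoking Theorem~\ref{IsGaussConc} at all, a mildly more elementary route than the one the paper indicates.
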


 \begin{coro}  \label{Cor2}
Palindromic Ising models with hollow-tree structure are traceable, but may contain
 cycles both  in partial  correlations  and  in marginal correlations. 
\end{coro}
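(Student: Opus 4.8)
The plan is to treat the two assertions separately: traceability follows from the linear-relation characterization in Theorem~\ref{IsGaussConc}, while the survival of cycles is witnessed by the explicit chordless-cycle example already in hand.

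First I would establish traceability. By Theorem~\ref{IsGaussConc}, a palindromic Ising model generated over an edge-minimal hollow tree is governed entirely by linear relations: every defining independence is read off as a zero of the overall partial correlation matrix $(2I-\Sigma^{-1})_{\rm std}$, and within each prime graph the conditional correlations coincide with the partial correlations. Hence $\Sigma$ (which here equals the correlation matrix) and its inverse carry all the information needed to recover marginal and conditional dependences, exactly as for a joint Gaussian distribution. The recursive linear relations for regression coefficients, covariances and concentrations then apply without modification, so any induced correlation for an uncoupled pair is obtained by tracing the connecting paths and combining the edge parameters along each path -- as made explicit for cycles in Proposition~\ref{InducingPaths} and equations~\eqref{InducedCorr4Cyc} and \eqref{InducedCorr5Cyc}, and in general through the block-triangular decomposition \eqref{BlockDiagComp} together with the standardising identity \eqref{IndCov}. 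This path-combining behaviour is precisely the notion of traceability recalled in App.~D (Wermuth, 2012); since linear relations are traceable and, by Theorem~\ref{IsGaussConc}, these models are determined by such relations, the models are \emph{traceable}.

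Second, to show that cycles may persist both in the partial and in the marginal correlations, I would exhibit the palindromic Ising model generated over the chordless $4$-cycle of Figure~\ref{Cycles4&5}, using the data already displayed. Its concentration graph is a chordless $4$-cycle, so the four edge partial correlations $\rho_{13.24},\rho_{23.14},\rho_{24.13},\rho_{14.23}$ are nonzero while $\rho_{12.34}=\rho_{34.12}=0$; the nonzero entries trace out a $4$-cycle in the partial correlation matrix. At the same time, Proposition~\ref{InducingPaths} and equation~\eqref{InducedCorr4Cyc} yield nonzero induced marginal correlations $\rho_{12}^{*}$ and $\rho_{34}^{*}$ for the two missing edges, so that all six marginal correlations are nonzero and the marginal correlation matrix is full; its support therefore also contains cycles. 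This single model realises simultaneously a cycle in the partial correlations and a cycle in the marginal correlations, which gives the second claim.

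The main obstacle is not the cycle example, which is immediate, but the clean invocation of traceability in the first part: one must confirm that the abstract notion of App.~D transfers verbatim to these binary models. The point to verify is that Theorem~\ref{IsGaussConc}(2), together with the equivalence of main-effect logit and linear regressions in Proposition~\ref{CEqLogLin} under a proper node-set elimination scheme, guarantees that the relevant path calculus is genuinely linear (Gaussian-type) rather than only approximately so; once this is secured, traceability is inherited directly from the linear-relations setting.
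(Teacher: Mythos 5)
Your traceability argument is essentially the paper's own route: Theorem~\ref{IsGaussConc} shows these models are determined by linear relations, and the paper's App.~D then verifies the three defining properties of a traceable distribution (edge-minimality, downward and upward combination of independences via the recursive relations for concentrations and covariances, and the dependence-inducing property for single-node marginalising). Your sketch is a coarser version of that, but it is the same idea and it is sound.

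The second half has a genuine gap. You read ``may contain cycles both in partial and in marginal correlations'' as: the partial correlation matrix has a chordless 4-cycle pattern while the marginal correlation matrix is full, and a full matrix ``contains cycles.'' That reading makes the clause vacuous: a matrix with all entries nonzero corresponds to a complete graph, which is chordal --- throughout this paper ``containing a cycle'' refers to a chordless cycle arising as a non-complete prime graph, and a complete graph has none. Under your reading the contrastive ``but'' would also be empty, since almost any connected model would qualify. What the corollary asserts, and what the paper demonstrates with Figure~\ref{Cycle4ConCov} and the table placed directly after it, is that the \emph{same} chordless 4-cycle can appear simultaneously as the support of the partial correlations and of the marginal correlations. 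This is achieved by a parametric cancellation, not by your generic example: with $80\,\pi_{ijk\ell}$ given by $9,9,1,1,1,9,1,9$ at $\ell=0$, one has $\tau_{13}=\tau_{23}=\tau_{24}=0.635$ and $\tau_{14}=-0.635$, so that in equation~\eqref{InducedCorr4Cyc} the two connecting paths cancel,
\begin{equation*}
\rho_{12}^{*}=(\tau_{14}\tau_{24}+\tau_{13}\tau_{23})/\text{const.}=0,
\end{equation*}
and likewise $\rho_{34}^{*}=0$, while the four edge correlations remain nonzero; hence the covariance structure is also a chordless 4-cycle. Your chosen table from Section~3.3, in which all six marginal correlations are nonzero, cannot witness this. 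The substance behind the ``but'' is precisely the tension this example exposes: traceability guarantees the dependence-inducing property only for marginalising over single nodes, not over sets (last remark of App.~D), so an uncoupled pair in a cycle can stay independent after marginalising over its entire separating set.
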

 \begin{figure}[H]
 \centering
\includegraphics[scale=.34]{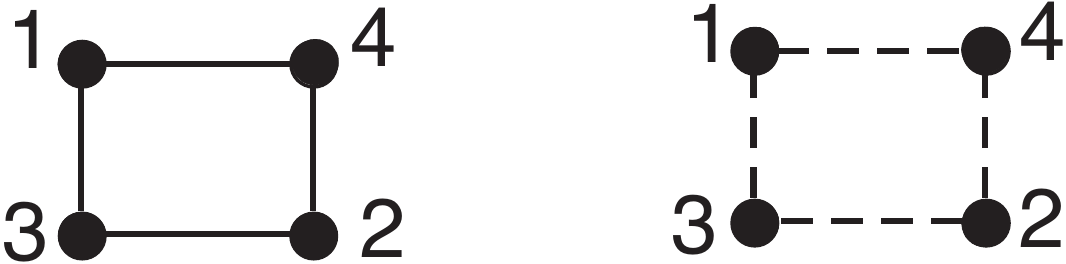}
 \caption{ \small{\it  left:  4-cycle in concentrations; right: 4-cycle in covariances; $i \dal j$ indicates $i\dep j$}} \label{Cycle4ConCov}
  \end{figure}
The more restrictive requirement of a distribution to have no additional independences than those indicated by the graph, is not 
satisfied by general Gaussian distributions; see for an example Wermuth (2012, Section 2.5). Both cycles of Figure \ref{Cycle4ConCov} hold also in the next table:

$$
\centering
{\small
\left[\begin{array}{r r r r r r r r r r  r r r r r r r r r } 
  ijk\ell \text{ with }\ell=0:&0000 & 1000 & 0100 & 1100 &0010& 1010 & 0110 & 1110  \\
80\, \pi_{ijk\ell}\,: &9 & 9& 1 & 1& 1& 9& 1 & 9\\
\end{array}\right]  }
$$
 The following matrix on the left contains $\lambda_{ij}$ in the lower and  $\tau_{ij}$ in the upper-triangular part, the matrix on the right has $\rho_{ij}$ in the lower and $\rho_{ij.k\ell}$ in the upper-triangular part. 

{\small
$$    
\begin{array}{cccc}\hline
                           0 & 0 &  0.635&-0.635 \\
                           0&0& 0.635&\n \;0.635\\
                           \; \n 0.749 &0.749& 0& 0 \\ 
                            -0.749 & 0.749  & 0& 0\\  \hline
                            \end{array} \quad  \fourl
                            \begin{array}{cccc} \hline
                            1 &  0 &  0.452& -0.452 \\
                           0 & 1 & 0.452&\n\; 0.452\\
                     \n \; 0.452   &  0.452  &   1& 0\\ 
                    - 0.452 & 0.452&   0& 1\\
                     \hline
                         \end{array} \quad
                       $$        
}  
Notice from equation \eqref{InducedCorr4Cyc} that a special parametric constellation induces here the zero marginal correlations $\rho_{12}^{*}=0$ and $\rho_{34}^{*}=0$. Such constellations have been named parametric cancellations, when they were discussed for Gaussian distributions by Wermuth  and Cox (1998). It is an open question  whether such constellations may  characterise 
 processes of special interest. Here, the marginal and the partial correlations coincide, in addition, and there is no 4-factor linear interaction. More generally, one knows for instance  the following.
\begin{prop} If a palindromic hollow-tree Ising model has overall exclusively positive or zero partial correlations, then  there are no path cancellations
and no correlation is negative.
\end{prop}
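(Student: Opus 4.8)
The plan is to read the sign structure of the induced correlation matrix directly off the partial correlation matrix through the identity \eqref{IndCov}, namely $\Sigma^{*}=K_{\mathrm{std}}$ with $K=(2I-\theta)^{-1}$, where $\theta$ is the overall partial correlation matrix. Under the hypothesis every off-diagonal entry $\theta_{ij}$ is nonnegative, while $\theta_{ii}=1$, so $2I-\theta$ has diagonal entries equal to $1>0$ and off-diagonal entries $-\theta_{ij}\leq 0$; that is, $2I-\theta$ is a symmetric $Z$-matrix. Since $2I-\theta$ is obtained from the concentration matrix $\Sigma^{-1}$ by a positive-diagonal congruence (standardising each row and column by the square root of the corresponding concentration $\sigma^{ii}$), it is positive definite for any nondegenerate model. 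A symmetric positive definite $Z$-matrix is a Stieltjes matrix, hence a symmetric $M$-matrix, and the inverse of such a matrix is entrywise nonnegative. Consequently $K\geq 0$ entrywise, and because standardisation only rescales rows and columns by positive factors, $\Sigma^{*}=K_{\mathrm{std}}\geq 0$ entrywise. This already yields the second claim: no induced correlation is negative.

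For the first claim I would sharpen nonnegativity to strict positivity. By Theorem \ref{IsGaussConc} the off-diagonal zero pattern of $\theta$ coincides exactly with the missing edges, so the support of the off-diagonal part of $2I-\theta$ is precisely the edge set of the concentration graph. Since all our starting graphs are connected, $2I-\theta$ is irreducible, and the inverse of an irreducible nonsingular $M$-matrix is entrywise \emph{strictly} positive. Hence every entry of $K$, and therefore every induced marginal correlation $\rho^{*}_{k\ell}$, is strictly positive. In particular no induced correlation vanishes, so there is no room for a path (parametric) cancellation of the type exhibited by the $9,9,1,1,\dots$ table: a vanishing induced correlation would require a genuine zero somewhere in $\theta$, i.e.\ a truly absent dependence, not a mutual cancellation of present ones.

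As a self-contained alternative matching the traceability narrative, I would instead combine the hypetan interactions along connecting paths using Proposition \ref{InducingPaths} and the recursion \eqref{rectau}. Because partial correlations are positive multiples of the $\tau_{ij}$ within each prime graph, the hypothesis forces $\tau_{ij}\geq 0$ on every edge, with $\tau_{ij}>0$ on present edges by edge-minimality. Each induced correlation is then built from two elementary operations: closing a transmitting {\sf V}, which replaces $\tau_{12\backslash k}$ by the product $\tau_{1k}\tau_{2k}$, and the combination \eqref{rectau}, whose numerator is a sum of products of $\tau$'s and whose denominator is $1$ plus a product of $\tau$'s. Both operations map nonnegative inputs to nonnegative outputs with denominator bounded below by $1$, and along any connecting path of present edges every factor is strictly positive; propagating this through a proper node-set elimination scheme (Corollary \ref{Cor2}, and the cycle formula \eqref{InducedCorr4Cyc} as the base case) gives a strictly positive numerator over a positive denominator, reproducing strict positivity of every $\rho^{*}_{k\ell}$.

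The main obstacle is the positivity of the inverse. In the matrix route it is exactly the Stieltjes/$M$-matrix lemma, that a connected, symmetric, positive definite $Z$-matrix has an entrywise strictly positive inverse; this is standard but is not recorded in the paper and would have to be cited or established. In the path-tracing route the matching difficulty is the bookkeeping needed to verify that combining prime graphs along the unique hollow-tree cut-set structure never produces a negative or vanishing numerator; the point that makes the induction close is precisely that both elementary operations preserve nonnegativity and keep the denominator $\geq 1$, so no sign change or cancellation can be introduced.
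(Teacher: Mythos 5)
Your matrix route is sound and is genuinely different in packaging from what the paper does. The paper's proof of this proposition is a one-line citation: it combines Theorem \ref{IsGaussConc} and Proposition \ref{InducingPaths} with results on total positivity of joint distributions (Fallat et al., Karlin and Rinott, Bolviken); in that route, nonnegative partial correlations make the concentration matrix an M-matrix, total positivity (MTP$_2$) of the distribution follows, and positive association of every kind --- hence no negative correlations and no cancellations --- comes from the closure of MTP$_2$ under marginalising and conditioning. Your argument extracts the finite-dimensional linear-algebra core of those citations: $2I-\theta$ is a symmetric positive definite $Z$-matrix, i.e.\ a Stieltjes matrix, so $K=(2I-\theta)^{-1}\geq 0$ entrywise, and irreducibility (connectedness plus the edge-minimality correspondence guaranteed by Theorem \ref{IsGaussConc}) upgrades this to strict positivity, which rules out cancellations. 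This is more self-contained than the paper's proof, needing only a standard M-matrix lemma rather than distributional MTP$_2$ theory; and your second, path-tracing route is essentially the paper's appeal to Proposition \ref{InducingPaths} with the signs made explicit. Note that Karlin and Rinott (1983), which the paper cites, is precisely about M-matrices as inverse covariance matrices, so the ``missing lemma'' you flag is in effect what the paper's references supply.

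The one genuine shortfall: immediately after the proposition the paper declares that ``correlation'' is meant in the widest sense --- partial as well as simple, for every conditioning set --- and this is what underlies its follow-up claim that no Yule--Simpson effect reversal can occur. Your matrix route, as written, establishes positivity only of the marginal correlations $\Sigma^{*}$; the hypothesis itself covers the highest-order partial correlations, but the intermediate orders $\rho_{ij.c}$ with $\emptyset \neq c \subsetneq N\setminus\{i,j\}$ are never addressed. The gap is fixable inside your own framework: Stieltjes matrices are closed under Schur complementation (equivalently, inverse M-matrices are closed under taking principal submatrices), so $(\Sigma_{dd})^{-1}$ is again a Stieltjes matrix for every $d=\{i,j\}\cup c$, whence $\rho_{ij.c}\geq 0$ for every such $c$. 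The paper's total-positivity route gets this for free, since MTP$_2$ is preserved under both marginalising and conditioning; you need to add the Schur-complement step (or an equivalent citation) to claim the proposition in the strength the paper intends.
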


\begin{proof} The claim follows  with Theorem \ref{IsGaussConc}, Proposition \ref{InducingPaths} and results for the total positivity of joint distributions; see Fallat et al. (2017), Karlin and Rinott (1983), Bolviken (1982). \end{proof}
In Proposition 5, `correlation' means all possible types of correlations, partial as well a simple ones. 
 In such  distributions, there will  in particular  never be any effect reversal, i.e. no Yule-Simpson paradox, and for our connected graphs,  all simple correlations are  positive.

\subsubsection*{\small 3.5 Transformations between general and palindromic Ising models}
 For  binary two-way tables with probabilities $\pi\T=[\alpha, \beta, \gamma, \delta]$, there is a unique way to move from the skewed margins to symmetric ones, preserving the odds-ratio; see Cox (2006, Sec. 6.4), Palmgren (1989). The next table divided  by $2(\sqrt{\alpha \delta}+\sqrt{\beta \gamma}) $ gives the transformed probabilities
 \begin{equation}
{\small
\left[\begin{array}{r r  }   
\sqrt{\alpha \delta}& \sqrt{\beta \gamma}\\
\sqrt{\beta \gamma}& \sqrt{\alpha \delta}\\
\end{array}\right]}\,.	\label{OdrtoPi}
\end{equation} 
 This builds on the result by Edwards (1962) that the odds-ratio and only functions of  this canonical parameter vary
 independently of moment parameters, here of the marginal frequencies.  The transformation may equivalently be obtained 
in terms of log-linear parameters: $(\lambda_{\emptyset}, \lambda_1, \lambda_2, \lambda_{12}) \to (\lambda_{\emptyset}^{'},0,0,\lambda_{12})$,
where one recalls that for $-1,1$ coding  $\lambda_{12}=\fracsfourth \log(\alpha \delta)/(\beta \gamma)$.

The asymptotic variance, $\mathrm{avar}$, of  $\hat \lambda_{12}$ is best found with a Poisson-based formulation of the unconstrained maximum likelihood equations. One uses that for an estimate $\Sigma_k c_k \log(n_k/n)$ from $n$ observations with cell frequencies $n_k$,  the asymptotic variance, derived  with the delta method as $\Sigma_k c_k^2/\E (n_k)$;  see Bishop, Fienberg and Holland (1975, p. 495). It turns out to be:  
\begin{equation}
n \;\mathrm{avar}(\hat \lambda_{12}) = 4+4[\cosh(2\lambda_{12}) \, \{ \cosh(2\lambda_{1}) +\cosh(2\lambda_{2})\} + \cosh(2\lambda_{1})  \cosh(2\lambda_{2})]\,.
  \label{eq:one}
\end{equation}
 By setting $\lambda_1= \lambda_2 = 0$ in \eqref{eq:one},  the palindromic case is obtained, where the correlation coefficient $\rho_{12}$ is a 1-1 function of $\lambda_{12}$.  In the symmetric case,  the variance is 16   under independence and typically not much larger, otherwise, while it can  increase considerably for
 skewed margins. 
  If a hollow-tree structure is well-fitting to a table which preserves all  observed marginal odds-ratios, it might therefore happen that an edge is no longer significant 
 for the original data. But, by removing any edge of a hollow tree the resulting structure is still a hollow tree.

 We apply the  transformation of equation \eqref{OdrtoPi} to all $2 \times 2$  tables and  use the additional knowledge that for any hollow-tree Ising model,  the two-way tables of edges present in the graph,
form the  set of minimal sufficient  statistics. 
But for any given set of data,
one  needs to check first whether it is close   to an  Ising model  without any independence 
 constraints. For this, all  higher than 2-factor log-linear interactions have to be so small 
 that they can be ignored.  
 
 With the transformed two-way tables,  a  palindromic Ising model  without any independence constraints results  with any
 iterative proportional  fitting  procedure.    
 Next, one   exploits  the computationally attractive features of this distribution to  find a possibly well-fitting  hollow-tree structure. 
Given the selected subset of edges present, the set of minimal sufficient statisticis for  a hollow-tree structure in the original data  is  the same 
subset of the observed $2\times 2$ tables.

It cannot be extended to  more than two variables to  symmetrise an observed  table as above,  such that the resulting  palindromic  Ising model preserves  the marginal odds-ratios, 
or by setting  to zero  the main effects  in a multivariate regression for categorical responses, due to  Glonek (1986).
 One could use instead a multivariate logistic parametrization; see Glonek and McCullagh (1985),  Bergsma and Rudas (2002), Marchetti and Wermuth (2017), but this would lead to considerably more demanding computations.

\subsubsection*{\small 3.6 Model selection for  Ising models with hollow tree-structure} 
 As is known, for  general  Ising models,  the set of minimal sufficient statistics  consists of  observed two-way tables for all edges present in the graph; see Birch (1963, equations (3.5) to (3.6)),  while for  palindromic Ising models, it contains  the corresponding symmetrised 
two-way tables. The  latter  are  averages of the observed counts for $p(\omega)$ and $p(-\omega)$; see Marchetti and Wermuth (2016, equation 4.6) or they are a subset of the two-way tables preserving the marginal odds-ratios of an observed  general Ising model. 

 The mle-computation  for a general, unconstrained  trivariate Ising model  requires already an ite\-rative procedure.
Therefore,  computational complexity will  increase for  Ising models of bulged-tree structure with 
 many triangles as prime graphs.  By contrast, mle-fitting of a palindromic Ising model is in this class
  in closed form, provided only one uses a proper single-node elimination scheme; see Corollary  
\ref{Cor1}.

 In
a palindromic Ising model with a  fattened-tree  structure, one gets with equations  \eqref{FactMarg} and \eqref{BetaTauLam}, the likelihood-ratio statistic for  $a\ci b|N\setminus \{a\cup b\}$ reduced to one for $a\ci b|c$:
\begin{equation} \chi^2=2(\log \, n_{a\cup b\cup c} - \log \, n_{a\cup c} - \log\,  n_{b \cup c}+ \log \, n_c);  \label{ChiSquare}\end{equation}
see  the discussion of hypothesis $H_{d3}$ by Birch (1963). 
Thus, this type of goodness-of-fit test,  a  `test between prime graphs', can be carried out in smaller tables and without computing the mle. 
This is an attractive property in general, but it is especially important for  very large data sets, where it avoids working with  extremely sparse tables.

Equation  \eqref{ChiSquare}  becomes even more useful  for  Ising models with a hollow-tree structure. 
\begin{prop} For an  Ising model with hollow-tree structure, goodness-of fit tests are of two types:  for conditional independences between  and within prime graphs.
\end{prop}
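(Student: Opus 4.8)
The plan is to show that the missing edges of a hollow tree --- the locations at which the model imposes a testable independence constraint --- fall into precisely two structural classes, and that each class is assessed by a likelihood-ratio statistic computed in a small marginal table. First I would recall from Section~2 that in a hollow tree every prime graph is an edge or a cycle and every cut-set is a single node or an edge. Hence an uncoupled pair $i,j$ arises for one of only two reasons: either $i$ and $j$ lie in a common prime graph, which must then be a chordless cycle in which $i,j$ are non-adjacent (a missing chord), or $i$ and $j$ lie in no common prime graph, in which case, by the uniqueness of the prime-graph set and the existence of a proper node-set elimination scheme (Proposition~\ref{PrimeG&Cut-sets}), some cut-set or inner prime graph $c$ separates them. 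This dichotomy is exactly the split into independence constraints \emph{within} and \emph{between} prime graphs that is already visible in the block-triangular decomposition \eqref{BlockDiagComp} of $\Sigma^{-1}$.

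For the between-prime-graph constraints I would invoke Proposition~\ref{MargPrimeG}: eliminating outer node sets in sequence reduces the graph to a subset $a\cup b\cup c$ whose subgraph is unchanged and in which $a\ci b|c$ holds. By the factorisation \eqref{FactMarg}, $f_{a\cup b\cup c}=f_{a|c}f_{b|c}f_c$, so the likelihood-ratio statistic for $a\ci b| N\setminus\{a\cup b\}$ collapses to the statistic for $a\ci b|c$ given in equation \eqref{ChiSquare}. This is a test between prime graphs, carried out in the small tables indexed by $a\cup c$, $b\cup c$ and $c$ and requiring no mle of the full model. For the within-prime-graph constraints I would argue that the marginal distribution on the node set of a single chordless cycle is itself a palindromic Ising model generated over that cycle: Proposition~\ref{MargPrimeG} guarantees that no edge is added or removed in the cycle's subgraph, and part~(2) of Theorem~\ref{IsGaussConc} guarantees that within the prime graph the conditional correlations agree with the partial correlations, so the cycle's own missing chords are genuine, collapsible independence constraints. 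Each such chord is then tested by a likelihood-ratio statistic computed inside the cycle's marginal table, exactly as one tests a missing edge in a small concentration-graph model.

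Finally I would combine the two. Since every missing edge is of exactly one of these two kinds, and by part~(1) of Theorem~\ref{IsGaussConc} the Markov structure coincides with the zeros of the overall partial-correlation matrix, these two families of local tests exhaust the model's defining independences, which establishes the claim.

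The main obstacle I expect is the collapsibility bookkeeping: one must check that each local statistic depends only on its relevant marginal table and that, taken together, the statistics account for the full deviance of the model without double counting. This is where the proper node-set elimination scheme and the block-diagonal form \eqref{BlockDiagComp} do the work, isolating the between-prime-graph contribution in the off-diagonal zero blocks $\Sigma^{ab}=0$ and the within-prime-graph contribution in the diagonal blocks $\Sigma^{aa}$, $\Sigma^{bb.a}$ and $\Sigma^{cc.ab}$.
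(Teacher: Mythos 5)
Your proposal is correct and takes essentially the same route as the paper, which supports this proposition not with a formal proof but with the surrounding discussion: uncoupled pairs split into those lying inside a chordless cycle and those separated by a cut-set, the ``between'' tests reduce via the factorisation \eqref{FactMarg} to the closed-form statistic \eqref{ChiSquare}, and the ``within'' tests are the fit of a saturated Ising model in a triangle or of a chordless $h$-cycle carried out in that prime graph's own marginal table. The only caveat is that your appeal to Theorem \ref{IsGaussConc} ties the ``within'' argument to palindromic models, whereas the proposition concerns general Ising models; the paper's justification instead rests on Proposition \ref{MargPrimeG}, equation \eqref{FactMarg} and Birch's sufficient-statistics results, which cover skewed margins as well.
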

The `tests between prime graphs' reduce  for instance to closed-form tests  between two covering, completed prime graphs,  attached to each other at an edge or a single node. 
 A `test within prime graphs'  is for a saturated Ising  model in a triangle  or for the fit of 
a chordless $h$-cycle for $h>3$. Such smaller tables make estimation and testing much more reliable; see  Altham (1984).

For data which are close to a palindromic Ising model, any standard iterative method estimates the 2-factor  log-linear interactions even though it is a so-called non-hierarchi\-cal model.  The 
mle of the probabilities relates  to the mles of the log-linear interactions and 
 the linear interactions  via the same relation in  equation \eqref{DefInt}, that hold for the parameters. This is an extremely attractive property of maximum-likelihood estimates. 

There are further options when a hollow-tree structure  is given as a hypothesis. Then, the 
estimates  between and within prime graphs can be obtained with  linear regressions on at most two regressors.
One may alternatively first get  the mle of  $\Sigma^{-1}$ with any  Gaussian estimation routine; see  Speed and Kiiveri (1986), Sadeghi
and Marchetti (2012), Lauritzen, Uhler and Zwiernik (2017).  Given the selected subset of variable pairs which generate 
a joint Gaussian distribution with hollow-tree structure,  one can  estimates the same structure 
for Ising models by using  the corresponding  subset of two-way tables.

However, for sample sizes larger than the number of variables, the latter  is not an option.  In these cases, one might first  try to learn a concentration graph structure as for Gaussian distributions but using the marginal correlations in all two-way tables, transformed to symmetry by preserving  the observed odds-ratios as described in the previous section.

One robust search technique is due to Castelo and Roverato (2006) which assumes however that all independences of the graph and only these hold in the generated distribution. If the resulting graph is a hollow tree, then the block-triangular factorisation of
the concentration matrix can  again be exploited to estimate  probabilities in the palindromic Ising model and in a starting general Ising model; see also  the  data example in the next section.
 For very large data sets this is especially important since it reduces the  fitting  to small subgraphs.

\subsubsection*{4 An Ising model with a hollow-tree, for symmetric and for skewed margins} 

We use data from a larger prospective study  in Germany, designed to develop strategies for counselling prospective students;
see Weck (1991). 
We apply first a standard graphical check,  due to Cox and Wermuth (1994b), to reassure us that the data are close to an Ising model. \begin{figure}[H]
\centering
\mbox{\hspace{-1cm}\includegraphics[scale=.24]{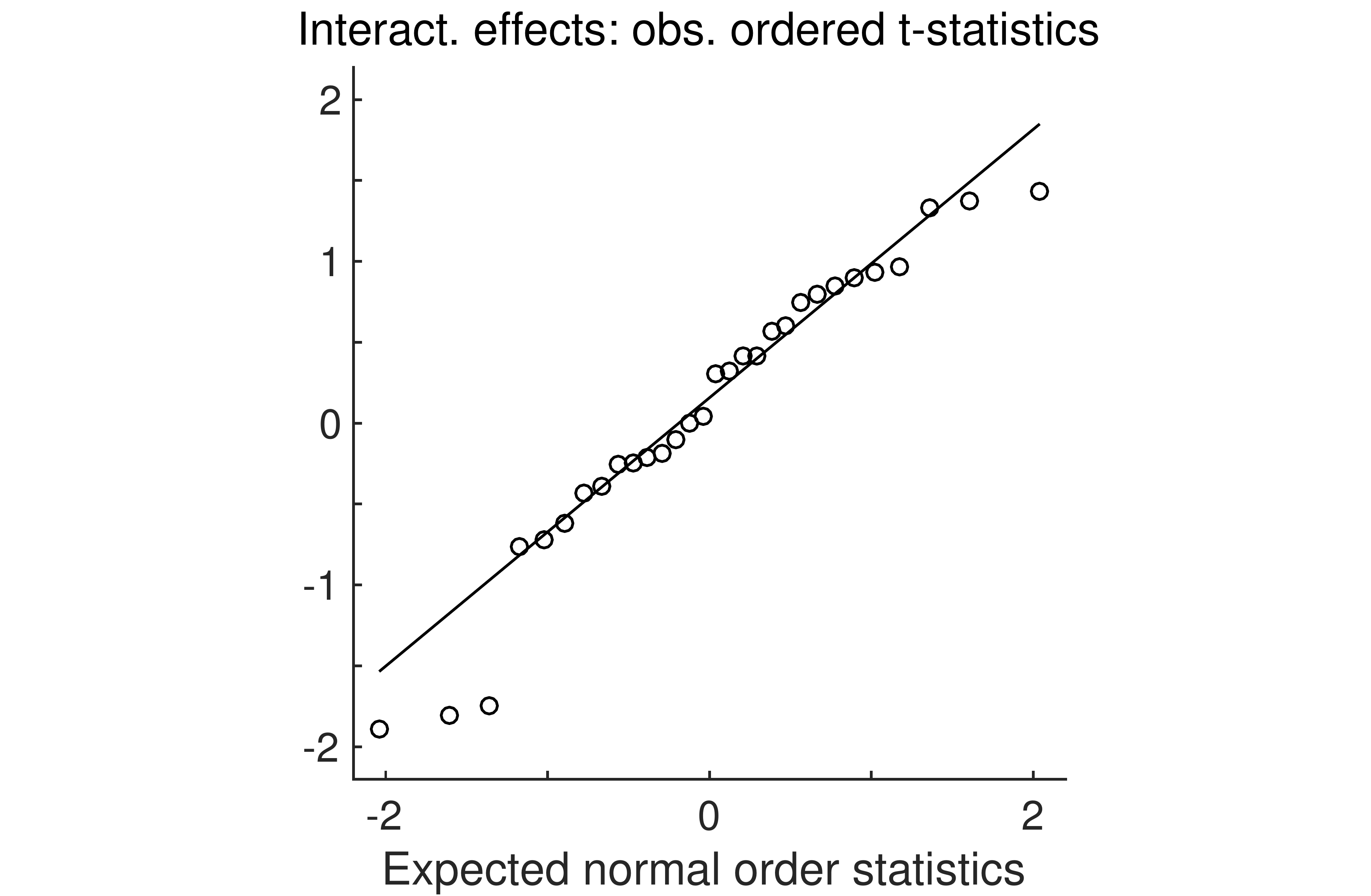}\hspace{-2cm} \includegraphics[scale=.24]{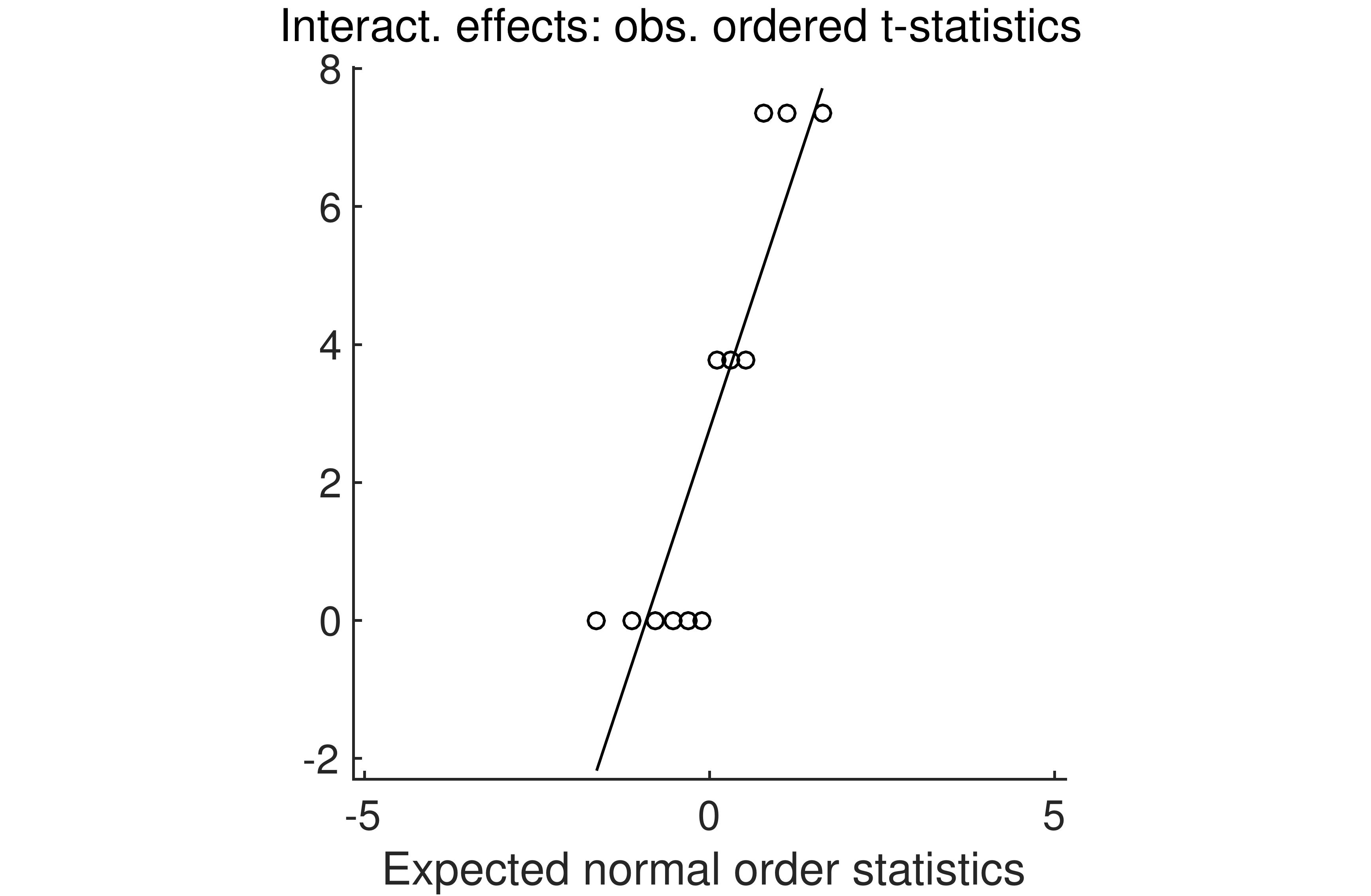}}
 \caption{ \small{\it  left:  achievement data; right: strong 3-factor interactions in the first table above}} \label{InteractEffects}
  \end{figure}

The left plot of Figure \ref{InteractEffects} is for this  set of data having six binary variables.
  The approximate $t$-values for linear 3-factor interactions lie well within the range of (-2, 2), hence do not indicate any strong 3-factor interaction and we proceed with an analysis. By contrast, the corresponding plot for the  counts given in the first table of this paper indicate  strong interactions. There, the plot shows strong ones to be within subset  (1,2,3) and the weaker ones within  (1,2,4).

The five variables are the average grade reached in years 11 to 13 at high school  in median-dichotomized 
form ($A_1$, level 1 means achievement above median: 50\%).
Further  variables  are $A_2$:= poor integration into high-school classes (yes: 10.3\%), $A_3$:= high-school class repeated (yes: 34.2\%), $A_4$:= change of primary school (yes: 20.1\%) and $A_5$:= education of father (at least `Abitur', i.e. high-school completed: 42.7\%).
Variables $A_2$ to $A_5$ are potential regressors for $A_1$. By  the time-order, variables $A_4, A_5$ are further in the past than  $A_2, A_3$. The observed counts are in the next table.
$$\setlength{\arraycolsep}{.9\arraycolsep}
\centering
{\small
\left[\begin{array}{r r r r r r r r r r  r r r r r r r r r }   
n_{ijk\ell m} \text{ with }m=0:&  275& 419& 30& 22& 238& 99& 37& 13& 46& 67& 12& 3& 49& 21& 7& 2\\
n_{ijk\ell m} \text{ with }m=1:& 161& 292& 19& 29& 142& 56& 29& 7& 49& 102& 4& 9& 59& 23& 12& 6\\
\end{array}\right]	}
$$

After  applying equation \eqref{OdrtoPi} to all two-way tables, an iterative proportional fitting algorithm gives  probabilities  of a saturated  palindromic Ising model. For  the above data,
 the following matrices contain  $\tilde{\rho}_{ij}$ in the lower and $\tilde{\rho}_{ij.k\ell m}$ in the upper-triangular part; on the left without,
  on the right with independence constraints.

{\small
$$    
\begin{array}{ccccc}\hline
                           1 & -0.098 & -0.318&0.007&\n \;0.044 \\
                           -0.138&1& \n \; 0.108&0.029 &\n \;0.058\\
                           -0.333  &\n\;0.148& 1& 0.050& -0.016\\ 
                            -0.007 &\n\; 0.045  & \n \;0.052& 1&\n\; 0.166\\ 
                          \n \; 0.042&\n \;0.056&-0.015& 0.167&1 \\
                             \hline
                            \end{array} \quad  \fourl
 \begin{array}{ccccc}\hline
                           1 & -0.096 & -0.319&0& 0 \\
                           -0.138&1& \n \; 0.108&0 &\n \;0.054\\
                      -0.333 &\n\;0.148& 1& 0.047& 0\\ 
                                -0.018  &\n\; 0.017  & \n \;0.052& 1&\n\; 0.167\\ 
                             - 0.010&\n \;0.056&\n\;0.017& 0.167&1 \\
                             \hline
                            \end{array} \quad  
                       $$      }  
All partial correlations  which
are smaller than $|0.045|$ in the left table have been set to zero   in the right table: these are for pairs (1,4), (1,5), (2,4), (3,5).
 The induced marginal correlations for these pairs differ from the unconstrained marginal correlations, while for all other pairs, the 
 marginal correlations in the saturated model on the left match the marginal correlations on the right. This coincides
 with the defining property of a concentration graph model for Gaussian distributions; see Dempster (1972).
 
 Figure \ref{DropAchievm} shows the well-fitting hollow tree on the left, an equivalent oriented version on the right.
   This hollow-tree structure fits also the original data where
 the set of  minimal sufficient statistics are  the observed two-way tables of the edges present in the graph. A split of the test into two
 parts, fitting first an Ising model without independence constraints gives  a likelihood-ratio statistic of $\chi^2=17.65$ on 16 degrees of freedom (df)
 and  second, the additional independences contribute $\chi^2=5.78 $ on 4 df.
  \begin{figure}[H]
 \centering 
\includegraphics[scale=0.45]{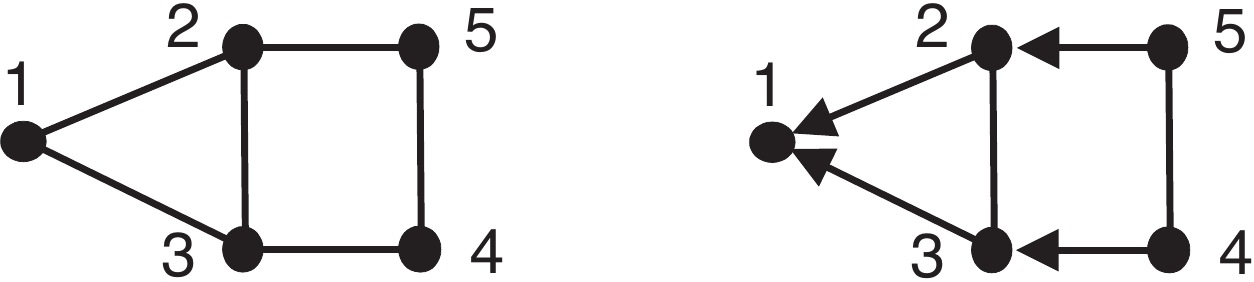}
{ \caption{ \small{\it left:  hollow tree, well-fitting for achievement data; right:  equivalent oriented graph} }\label{DropAchievm}}
  \end{figure}

The same conclusion of a good fit can be reached by testing the `hypothesis between prime graphs', here $1\ci \{4,5\}|\{2,3\}$,  using equation \eqref{ChiSquare} on the observed table and the two  hypotheses within prime graphs in  the  smaller tables of variables $1,2,3$ and $2, 3,4,5$. These give, respectively,
 $\chi^2=16.75 $ on 12 df, $\chi^2=1. 16$ on 1 df and $\chi^2=5.52$  on 7 df. Of course, these types of decomposition become especially useful  only in very large tables. 
 
 The next table shows of the  fitted  log-linear terms, all that are nonzero  for the general Ising model.  For the latter it also gives corresponding  t-statistics.
\begin{center}
\setlength{\tabcolsep}{3.2pt}
\small
\begin{tabular}{lrrrrrrrrrrrrrrrr} 
\hline
Int. type &$\emptyset$& 1& 2& 3& 4& 5& 12& 13& 23& 25& 34& 45\\
pal. Ising& 4.204 &        0   &  0&    0   &0 &   0  & -0.102&   -0.335&   0.116&    0.055&   0.051&    0.169\\
gen. Ising&  3.458 &   0.198 & 1.064&  0.245 &   0.666  &  0.003 &  -0.104  & -0.342  &  0.116  &  0.056 &   0.051   &  0.169\\
t-statistics&  -&  5.27 &  29.88& 6.16  &   24.80 &    0.00     &  -2.81  &  -14.45  &  3.20  &  2.34 &  1.92  & 7.09\\ \hline
\end{tabular}    
\end{center}
The log-linear 2-factor terms agree closely but not fully  for the palindromic and the general Ising model.  To interpret  the  generally weak dependences,
we use some of the time order.

High achievement  is less likely when a high-school class had been repeated and when a student integrated poorly
into high-school classes. To repeat a high-school class  is more likely with  such poor integration 
and  with a previous change of primary school. The  risk for poor integration increases in the case of better educated fathers
and with  repeating classes. Finally, the  risk  to change primary school is higher for better educated  fathers.

\subsubsection*{5 Discussion}
We have studied Ising models, which have simple, undirected and connected probabilistic graphs for a finite 
set of nodes.   Each  edge missing   in these graphs   represents a   pairwise independence 
given all remaining variables and we let, in addition, each edge present indicate a corresponding pairwise dependence. Then,   the removal of any edge  from  a given graph leads  to a different model since  an additional independence constraint is introduced.

 Ising models form the subclass of lattice models, also called quadratic exponential  distributions,   
that is defined for  binary variables having levels $-1,1$.  For unconstrained marginal distributions of the binary variables, we speak of  general Ising models and when all  margins are symmetric of  palindromic Ising models.   The latter mimic  a central symmetry property of joint Gaussian distributions
and have the same probability for  level combinations $\omega$ and  $-\omega$.

In palindromic Ising models -- as in mean-centered Gaussian distributions --  there is only one parameter associated  with an edge present in the graph. In general, this parameter  is  a log-linear  interaction, not  based on  covariances.  
But, we have identified  a subclass of  graphs, named hollow trees,  in which palindromic Ising  models are generated  over oriented graphs such that all main-effect logit regressions can be   be replaced by linear main-effect regressions. 

To characterize  hollow trees, we  used early  graph theoretical results on the existence of  unique sets of prime graphs, their smallest complete separators, called cut-sets, and  corresponding  node-set  elimination schemes,
as well as Markov-equivalence conditions for different  types of chain graphs. The  only types of prime graphs of a hollow tree are chordless cycles, triangles or edges and its cut-sets contain one or two nodes. 
To recognize the tree-like  structure, one replaces nodes by prime graphs and edges by cut sets, then a `single path' connects each pair of `nodes'.

General Ising models with hollow-tree structure contain as subclasses, for instance, Markov chains and   tree-structures  used in phylogenetics, connected star graphs which can represent subscales in  item response studies and 
 non-decomposable models with chordless-cycle structures.   Due to their  computational complexity in mle-fitting, the last have not been used intensively in the past even though they may  capture  important spatial research questions. 
 
 This may start to change now when the attractive computational properties of palindromic Ising models with
 hollow-tree structure are exploited for any set of data  close to 
 a general   Ising model by applying  a one-time transformations  to a saturated palindromic Ising model which preserves all
 observed marginal odds-ratios. It may be necessary to check beforehand whether the data belong to a connected  graph, when one wants to study single  trees to understand a forest's build-up; for a history of  such  network-flow algorithms see Dinitz (2006).

Once a well-fitting hollow-tree structure has been obtained for a  palindromic Ising model, one knows
the subset of symmetric two-way marginal tables which generates this joint distribution. One
can then take the same subset of the observed two-way tables -- with possibly strongly skewed margins  --
to estimate the probabilities of a general Ising with the selected hollow-tree structure.
The  factorisation of the joint distribution with hollow-tree structure into subsets corresponding to  its prime graphs,
will permit one to compare results  for small subsets of the  variables with the available knowledge in the field about  these variables under study. This is always an  essential step towards progress
in substantive research, but especially so when data for  large numbers of variables are to be analyzed.

\subsubsection*{Appendix A:  Dependences induced by marginalising}
 {\bf \textit{\small Equivalent to the log-linear formulation of the palindromic Ising model}} in equation \eqref{DefPalIsLam} is 
\begin{equation} \Pr (\omega)= \text{const.} \, \txt{\prod}_{s<t} (1+ \tau_{st}\omega_s \omega_t),\nn \n 1/\text{const.}=2^d(1+  \txt{\prod}_{s<t}\tau_{st} ) \, ,  \nn -1<\tau_{st}<1 \, ,\label{DefPalIsTau}
\end{equation}
where  $\tau_{st}=\tanh ( \lambda_{st})$ are the hypetan interactions.
\begin{proof} The logarithm of equation (\ref{DefPalIsTau}) yields the sum $\sum_{s<t} \log(1 + \tau_{st} \omega_s\omega_t)$ where each summand can be expanded as 
\[
\begin{aligned}
\log(1 + \tau_{st}\omega_s\omega_t) &= \tau_{st}\omega_s \omega_t - (\tau_{st}\omega_s \omega_t)^2/2 + 
(\tau_{st}\omega_s \omega_t)^3/3 - \cdots\\
&=\omega_s\omega_t \left(\tau_{st} + \tau_{st}^3/3 + \tau_{st}^5/5 + \cdots \right) - \left(\tau^2_{st}/2 + \tau^4_{st}/4+ \tau^6_{st}/6 + \cdots\right)
\end{aligned}
\] 
and by  noting that  \(\omega^s = 1\) if \(s\) is even and \(\omega^s = \omega_s\)
if $s$ is odd, this  gives the explicit form
 \[
\log(1 + \tau_{st}\omega_s\omega_t) = \omega_s\omega_t \tanh^{-1}(\tau_{st}) + \log(1 - \tau_{st})/2 = \text{const.} + \omega_s\omega_t \lambda_{st},
\] 
so that  finally $\log \Pr(\omega) = \text{const.}' +  \sum_{s<t} \omega_s\omega_t \lambda_{st} .$ 
\end{proof}

We use  equation \eqref{DefPalIsTau} to derive several effects of marginalising in the following cycles:
\begin{center}
 \includegraphics[scale=0.45]{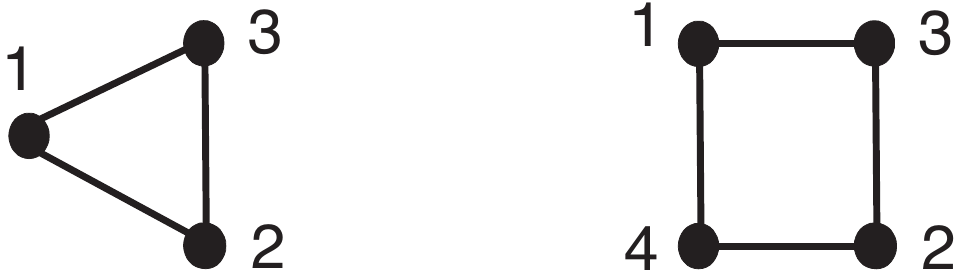} \end{center} 
  \noindent {\bf \textit{\small To obtain $\bm{\pi_{+jk}}$}} $= \txt{\sum}_i \pi_{ijk} $ and $\rho_{23}=\tau_{23 \setminus 1}$}}, one best uses   equation \eqref{DefPalIsTau}}
in the form:
\begin{equation}\pi_{ijk} = \text{const.}\,(1 + \tau_{12}ij)(1 + \tau_{13}ik)(1 + \tau_{23}jk), \nn {1/\text{const.}}=2^3  (1 + \tau_{12}\tau_{13}\tau_{23}) \label{Joint123}, \end{equation}
 to get the result via
\[
\txt{ \sum}_i (1+\tau_{12} ij)(1+\tau_{13} ik) = \txt{\sum}_i 1 + \tau_{12} j \txt{\sum}_i  \,i+ \tau_{13} k \txt{\sum}_i \,i    + \tau_{12}\tau_{13} jk \txt{\sum}_i  i^2
 = 2(1 + \tau_{12}\tau_{13} \,jk  ),
\]
\begin{equation}  \pi_{+jk}= \fracsfourth(1+  \rho_{23}\, jk),   \nn \rho_{23}=  \tau_{23 \setminus 1}=(\tau_{23} + \tau_{12} \tau_{13})/ (1 + \tau_{12}\tau_{13}\tau_{23}). \label{Marg23Tri}
\end{equation}
Similarly, when starting  with the chordless 4-cycle, one uses best equation \eqref{DefPalIsTau} in the form:
\[
\pi_{ijk\ell}\! =\text{const.}\,  (1 + \tau_{13}ik)(1 + \tau_{23}jk)(1+\tau_{24}j\ell)(1+\tau_{14}i\ell), \nn 1/\text{const.}\!=2^4(1+\tau_{13} \tau_{14} \tau_{23}\tau_{24})\, , \]
where the $\tau$s  are now functions of the conditional odds-ratios given two further variables.\\[-4mm]

\noindent {\bf  \small  \textit To  obtain $\bm{\pi_{+jk\ell}}$}, one argues in the same way  as going  from equation \eqref{Joint123} to  \eqref{Marg23Tri}:
\begin{equation} \pi_{+jk\ell} =\text{const.}   (1 + \tau_{23}jk)(1 + \tau_{24}j\ell)(1 + \tau_{13}\tau_{14}k\ell), \nn  {1/\text{const.}}=2^3  (1+\tau_{13} \tau_{14} \tau_{23}\tau_{24}) \,.
\label{JointMarg23Cyc}  
\end{equation}
This shows (1) that the constant is just changed by a factor of two, (2) that the form of the distribution remains unchanged and (3) that by marginalising over 1, one gets $\tau_{34\setminus 1}=\tau_{13}\tau_{14}$.

By marginalising in the trivariate distribution of equation \eqref{JointMarg23Cyc} further over node 2, equation \eqref{InducedCorr4Cyc}    results so that the two dependence-inducing paths are traced
with hypetan interactions. 

One may also trace the contribution of the  $\tau$s for the marginal correlation of an edge present in the graph and obtains a similar effect as in equation \eqref{Marg23Tri}. For instance,
\begin{equation}\rho_{23}= \tau_{23 \setminus 1 4} =\text{const}\,(\tau_{23}+  \tau_{24}\tau_{14} \tau_{13}), \nn 1/\text{const.}= (1+\tau_{13} \tau_{14} \tau_{23}\tau_{24}) \,.
\label{Marg23Cyc}\end{equation}

These results extend directly to chordless cycles in more than four nodes which then prove  (1) that the 
marginal distributions of a chordless-cycle stays a cycle up to any trivariate marginal distribution and for  the marginal hypetan interaction  induced  in a bivariate distribution (2) that it results by tracing two paths if the edge is missing in the starting cycle and (3) that it coincides with the simple correlation if the  edge is present.\\[-2mm]

\noindent {\bf \small \textit{With an added independence for pair (2,3)}},  the discussed two cycles turn into 2-edge and 3-edge chains having $\tau$-parameters to each edge present in the graph which coincide with  simple correlations. Equations \eqref{Marg23Tri}, \eqref{Marg23Cyc} 
replicate then, respectively, the well-known result that the induced dependence for the endpoints in a linear Markov chain is the product of the simple correlations along the chain:
$   \tau_{23 \setminus 1}=\rho_{12} \rho_{13}$ and $ \tau_{23 \setminus 1 4} =  \rho_{24}\rho_{14} \rho_{13}. $

\subsubsection*{Appendix B:  Effects of conditioning}

In addition, when going to equations  \eqref{Marg23Tri},  \eqref{Marg23Cyc},
one gets  $\pi_{i|jk}=\pi_{ijk}/\pi_{+jk}$  and    $\pi_{i|jk\ell}$ as:
$$ \pi_{i|jk}= \fracshalf (1 + \tau_{12}ij)(1 + \tau_{13}ik) /   (1 + \tau_{12}\tau_{13} \,jk  ),   \nn
\pi_{i|jk\ell}=\fracshalf  (1 + \tau_{13}ik)(1+\tau_{14}i\ell)  / (1 + \tau_{13}\tau_{14}k\ell)   .$$
These have essentially the same form, stressing the similarity between the two types of cycles, the triangle and the chordless 4-cycle. 

 A standard factorisation of $\pi_{ijk}$ and Cochran's recursive relation for linear regression coefficients give a  linear parametrization of  the joint probabilities as 
  $$ \pi_{i|jk} \pi_{j|k} \pi_{k}= \fracseigth(1+\rho_{12}ij+\rho_{13}ik+\rho_{23}jk), \; \pi_{i|jk}=\fracshalf ( 1 + \beta_{1|2.3}ij +\beta_{1|3.2}ik), \;  \pi_{j|k}=\fracshalf ( 1 +   \rho_{23}jk).$$

 The expression for $ \pi_{i|jk}$ in terms of $\tau$'s simplifies, using  $j^2k^2=1$, by expanding the fraction with $(1 - \tau_{12}\tau_{13} \,jk  )$, to give:
  \begin{align*} \pi_{i|jk}&= \fracshalf (1+\tau_{12}ij+\tau_{13}ik+\tau_{12}\tau_{13}jk)(1- \tau_{12}\tau_{13} \,jk  )/(1-\tau_{12}^2\tau_{13}^2)\\
  &=\fracshalf [1+ \{\tau_{12}(1-\tau_{13}^2)/\text{const.}\}ij + \{\tau_{13}(1-\tau_{12}^2)/ \text{const.}\}ik ], \; \text{const.}=1-\tau_{12}^2\tau_{13}^2 \,,\end{align*}
  so that equation \eqref{BetaTauLam}  expresses the linear regression coefficients  in terms of  $\tau$s.

   By similar   arguments, one finds for the above chordless 4-cycle directly, for instance,
  $$\beta_{1|2.34}=0, \nn   \beta_{1|3.24} =\beta_{1|3.4}=\tau_{13}(1-\tau_{14}^2)/(1-\tau_{13}^2\tau_{14}^2)\, $$
 and  for the joint conditional distribution of $A_1, A_3$ given $A_4$, one gets
 $${\rm cov}(A_1,A_3| A_4=\ell)= \rho_{13}-\rho_{14}\rho_{34} \text{ since } \E(A_1A_3| A_4=\ell)=\rho_{13}, \; \E (A_i |A_4=\ell)=\rho_{i4}\ell.$$
Furthermore, since  also ${\rm var}(A_i|A_4=\ell)=1-\rho_{i4}^2$ does not vary with $\ell$ for $i=1,3$, the conditional correlations are constant at  all combinations  of $k,\ell$ and Theorem \ref{ConParCor} applies. 


\subsubsection*{Appendix C: Proof of Theorem 3}
{\small \bf Theorem 3}: {\it A quadratic exponential distribution for symmetric binary variables is generated over an edge-minimal hollow tree if and only if   (1) the
Markov structure  of its concentration  graph is defined by  the set of zeros in its   overall partial correlation matrix  and (2)  within each of its  prime graphs, all conditional  correlations agree  with  the partial correlations.}
\begin{proof}
Assume first that an Ising model is palindromic and its edge-minimal graph is a hollow tree. Then,  the joint distribution factorises according to its prime graphs  and cut-sets,
given any one of the proper node-set elimination schemes; see Theorem 1, Proposition 1 and equation \eqref{FactMarg}.  By definition, the prime graphs  of a hollow tree are edges or cycles. Cycles are either triangles or chordless  cycles  and  cut-sets contain at most two nodes.  

In the marginal distribution within each prime graph, all  conditional correlations coincide with the partial correlations; see  Theorem 2, Proposition 3, and equation \eqref{IndCov}.
In particular,  independences are captured  by zero log-linear 2-factor interactions and   by zero partial correlations. The factorisation of the joint density implies, together with the  independence constraints and the traceabilty  of linear relations,  that  a zero overall partial correlation equals the zero partial correlation within a prime graph. 

If  nodes $i,j$ belong to distinct prime graphs of a hollow tree, the smallest separating set, say  $S$,   such that 
 $X_i \ci X_j \mid X_S$ holds,  is a  cut-set of the graph. 
Then, the traceability implies for the conditional correlations,  
 $0=\rho_{ij \mid N\setminus \{ij\}} = \rho_{ij \mid S}$. As $S$ contains at most two  nodes, a conditional correlation 
equals also the  partial correlation, hence also $0=\rho_{ij.N\setminus \{ij\}}=\rho_{ij.S}$.  

For the  converse,  note first that  a graph without  hollow-tree structure has either $(i)$ a prime graph which is different from  an edge and from a cycle or $(ii) $ it has  a cut-set containing more than two nodes.  In case $(i)$, there is a prime graph  $P$ in which at least one node, say $i$, has more than two neighbours,   since    each  node has  exactly two neighbours   only in a  cycle. By definition,  no cut set and hence no outer node exists if   this prime graph $P$ is incomplete. Then,  there is  also no proper node-set elimination scheme  by which the distribution of the variables of $P$ could be generated via a  factorisation by which  the  logit main-effect regression with  more than two regressors, could be avoided.   Hence,  by Theorem 2 and Proposition 3, conditional and partial correlations involving variable $X_i$ do not agree within  $P$. 
 
In case $(ii)$, there are at least two prime graphs.  Let $i$ and $j$ be  nodes residing in different  prime graphs and having  cut-set $C$ containing more than two nodes. Then for the subset $\{i,j,C\}$ of the variables,  there is a factorisation  as $f_{ijC}=f_{i|C}f_{jC}$.  But by Proposition 3,   the main-effect logit regression  of $X_i$ on $X_C$ is  then not equivalent to a main-effect linear regression  of $X_i$ on $X_C$ with the  consequence  that  $\rho_{ij.C} \ne 0$ even though  $\rho_{ij|C}=0$.  
\end{proof}

 \subsubsection*{Appendix D: On traceabiliy}

A  distribution   is traceable if ($i$) its  graph is edge-minimal, ($ii$) its independences combine downwards and upwards
and ($iii$)  it is dependence-inducing; Wermuth (2012).

To ($i$): for an edge-minimal concentration graph,  every $ij$-edge present in the graph means
$A_i, A_j$ are dependent given all remaining variables since equation \eqref{DefDep} is satisfied.

To ($ii$): Combining independences  for sets $a,b,c,d$ partitioning $N'\subseteq N\setminus (a\cup b\cup c\cup d)$  applies to equivalences of $a\ci bd|c$. For palindromic Ising models with hollow-tree graphs:
$$(a\ci b|dc \text{ and } a\ci d|c) \iff (a\ci b|dc \text{ and } a\ci d|bc) \iff (a\ci b|c \text{ and } a\ci d|c).
$$
The implications of  $a\ci bd|c$ hold for all probability distributions, as well  as the converse of the first decomposition,  since  $f_{a|bdc}=f_{a|dc}$ together $f_{a|dc}=f_{a|c}$ implies the simplified factorisation $f_{a|bdc}=f_{a|c}$. The second  and third equivalence are not satisfied in general.  The downward and upward combinations result here with the recursive relation for concentrations and for covariances, respectively; see also Theorem 4 above.

To $(iii)$. A concentration graph model has been said to be association- or dependence-inducing if for each {\sf V} in the graph with outer nodes $i,j$, common neighbour $k$
and  $i\ci j|\{k,c\}$ given for some $c\subseteq N\setminus\{i,j,k,c\}$, the independence  changes to  $i\dep j|c$ by marginalising  over  the inner node $k$.
From Theorem \ref{IsGaussConc} and  the discussion of equation \eqref{BlockDiagComp}, the recursive relation among concentrations applies also to palindromic Ising models 
with hollow-tree structure. The vanishing of $\rho_{ij.N \setminus c}$ indicates conditional independence for all  $c$, which  -- based 
on a proper node-set elimination scheme -- do not introduce a dependence in the corresponding  marginal distribution; see Proposition \ref{MargPrimeG} and App. D.

Properties ($ii$) and ($iii$) have been attributed to Gaussian distributions by Ln\v{e}ni\v{c}ka. and Mat\'u\v{s} (2007). In some literature, distributions with properties ($ii$) have  been named compositional graphoids and those with property ($iii$)  singleton-transitive; see for instance Sadeghi and Wermuth (2016).  The dependence-inducing property ($iii$) applies to marginalising over single nodes but not 
to sets of variables; see the data to Figure \ref{DropAchievm}.

\subsubsection*{Appendix E: On partial inversion and partial closure}
{\bf  \textit{Partial inversion}} generalises the Beaton sweep operator, Dempster (1972, App.)  to non-symmetric matrices.   
Let   $N=(1, \ldots, d)$ denote the rows and columns of a real-valued matrix $ M$
having invertible leading principal submatrices.  For $a\subset N$ and $b=N \setminus  a$, one  may obtain $M^{-1}$ with two partial inversion steps:
$M^{-1}=  \inv_b (\inv_a M)$.

 The partial inversion operator is, with $a=\{1\}$:
\begin{equation} {M}=\begin{pmatrix}  s & { v}\T\\ {w}&  m\end{pmatrix} \nn  \n \inv_{\{1\}}\, {M}=\begin{pmatrix} 1/s  & -{v}\T/s\\ { w}/s& {m}-{w\,v}\T/s \end{pmatrix}. \label{pinvd}\end{equation}
 For $t>1$ elements in $a$,  $\inv_a   M$  applies this same  operation $t$ times,  using  permutations.
\begin{prop} {\rm  Wermuth, Wiedenbeck and Cox (2006).} The partial inversion operator  is commutative, can be undone and is exchangeable with taking 
submatrices.
\end{prop}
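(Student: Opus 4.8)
The plan is to derive all three assertions from the single structural fact that partial inversion is the operation of interchanging an input block with the corresponding output block in a linear relation. Writing $N=(a,b)$ and regarding $M$ as the coefficient matrix of the system $M(x_a,x_b)\T=(z_a,z_b)\T$, I would first establish that $\inv_a M$ is exactly the matrix $\tilde M$ of the system obtained by swapping the role of the input $x_a$ with that of the output $z_a$, i.e. the matrix with $\tilde M (z_a,x_b)\T=(x_a,z_b)\T$. Solving the first block-row for $x_a=M_{aa}^{-1}z_a-M_{aa}^{-1}M_{ab}x_b$ and substituting into the second block-row yields the general block form
\begin{equation*}
\inv_a M=\begin{pmatrix} M_{aa}^{-1} & -M_{aa}^{-1}M_{ab}\\ M_{ba}M_{aa}^{-1} & M_{bb}-M_{ba}M_{aa}^{-1}M_{ab}\end{pmatrix},
\end{equation*}
which reduces to \eqref{pinvd} when $a=\{1\}$. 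Since a linear relation determines its coefficient matrix uniquely, this input--output characterization is all that is needed.

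That the operator \emph{can be undone} is then immediate: swapping $x_a$ with $z_a$ and swapping them back is the identity on the underlying relation, so $\inv_a(\inv_a M)=M$ and $\inv_a$ is its own inverse. For \emph{commutativity} I would take disjoint $a,b$ and set $c=N\setminus(a\cup b)$; the map $\inv_a$ interchanges the pair $(x_a,z_a)$ and $\inv_b$ interchanges the pair $(x_b,z_b)$, and these two interchanges concern disjoint pairs of coordinates. Hence, in either order, they produce the same relation expressing $(x_a,x_b,z_c)$ as a function of $(z_a,z_b,x_c)$, so $\inv_a(\inv_b M)=\inv_b(\inv_a M)=\inv_{a\cup b}M$. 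In particular, for $a=\{i_1,\dots,i_t\}$ the $t$ single-element steps defining $\inv_a$ may be performed in any order, which shows that $\inv_a$ is well defined.

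For \emph{exchangeability with taking submatrices} I would fix $a\subseteq d$, put $b=N\setminus d$ and $c=d\setminus a$, and observe that passing to the principal submatrix $M_{dd}$ amounts to imposing $x_b=0$ in the linear relation and retaining only the equations indexed by $d$. Because $\inv_a$ leaves every $b$-coordinate untouched, the constraint $x_b=0$ commutes with the swap $x_a\leftrightarrow z_a$; restricting $\inv_a M$ to $d$ and partially inverting $M_{dd}$ on $a$ therefore represent the very same relation between $(z_a,x_c)$ and $(x_a,z_c)$, giving $(\inv_a M)_{dd}=\inv_a(M_{dd})$. Concretely, this is visible in the displayed formula: each $d$-indexed block of $\inv_a M$ involves only the $a$- and $c$-blocks of $M$ and never the $b$-blocks.

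I expect the main obstacle to be expository rather than computational: the three words ``commutative'', ``can be undone'' and ``exchangeable'' must be pinned to precise statements (disjointness of the swapped sets for commutativity, the same set $a$ applied twice for invertibility, and the inclusion $a\subseteq d$ for the submatrix identity), and the swap convention for $\inv_a$ must be fixed once and for all. Once the input--output interchange characterization is in place, the only genuine algebra left is checking the Schur-complement cancellation $ (M_{bb}-M_{ba}M_{aa}^{-1}M_{ab})-(M_{ba}M_{aa}^{-1})M_{aa}(-M_{aa}^{-1}M_{ab})=M_{bb}$ in the involution step, and each of the three properties then follows in a single line.
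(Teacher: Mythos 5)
Your proof is correct and complete. Note that the paper itself offers no proof of this proposition: it is quoted from Wermuth, Wiedenbeck and Cox (2006), so there is nothing internal to compare against. Your input--output characterization of $\inv_a$ (swap $x_a$ with $z_a$ in the linear relation $Mx=z$) is in fact the viewpoint on which that reference is built, and once it is in place all three properties do follow essentially in one line each, as you say: the involution because a double swap is the identity, commutativity because swaps of disjoint coordinate pairs commute (which also settles well-definedness of the paper's ``apply the single-element operation $t$ times'' definition), and the submatrix property because setting $x_b=0$ and discarding the $b$-equations commutes with a swap that never touches the $b$-coordinates -- a fact you rightly confirm on the explicit block formula, since the $dd$-blocks of $\inv_a M$ involve only blocks of $M_{dd}$. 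The only point you pass over silently is that the intermediate operations must be defined (e.g.\ $(\inv_a M)_{bb}$ invertible before applying $\inv_b$); this is covered by the paper's standing assumption on invertible principal submatrices and is glossed over there as well, so it is not a gap of substance.
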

The first property shows for instance  that $ \inv_b (\inv_a M)= \inv_a (\inv_b M)$, the second  that $\inv_a\, M^{-1}= \inv_b \,M$ and with the  last, one derives equation \eqref{PcorRegcCon} with $N' \subset N$.\\

\noindent {\bf \textit{Partial closure}} is the analogue of the partial inversion operator  defined for edge matrices  $\mcal$. The operator derives  the structural zeros after partial inversion on Gaussian parameter matrices, that is it takes independences  into account but not any special parametric constellations.
Let   $N=(1, \ldots, d)$ denote the rows and columns of an  edge matrix $\mcal$,  zeros capture missing edges and ones edges of one type, such as full lines of a concentration graph in a unit symmetric matrix or arrows of a directed acyclic graph in a unit upper-triangular matrix. For $a\subset N$ and $b=N \setminus  a$, one obtains e.g.  the transitive  closure of a directed acyclic graph with two partial closure steps:
$\mcal^{-}=  \zer_b (\zer_a \mcal)$.

The partial closure operator is, with $a=\{1\}$ and  $`\In'$ denoting the indicator function:
\begin{equation}{ \mcal}= \begin{pmatrix}  1 & {\sf  v}\T\\ {\sf w}&  {\sf m}\end{pmatrix}: \nn  \n \zer_{\{1\}}\, { \mcal}=\begin{pmatrix} 1  & {\sf v}\T\\ {\sf  w}& \In[{\sf m}+{\sf{w\, v}}\T]\end{pmatrix}, \label{pinvd} \end{equation}
For $t\!>\!1\!$ elements in $a$,   $\zer_a \mcal$  applies this same  operation $t$ times,  using  permutations.
\begin{prop} {\rm  Wermuth, Wiedenbeck and Cox (2006).} The partial closure operator  is commutative, cannot be undone and is exchangeable with taking 
submatrices.
\end{prop}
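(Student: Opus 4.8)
The plan is to derive all three assertions from the preceding proposition on partial inversion, using that $\zer_a$ is the Boolean shadow of $\inv_a$. The key lemma I would establish is a genericity invariant: start from a real matrix $M$ whose nonzero pattern is the edge matrix $\mcal$ and whose nonzero entries are algebraically independent; then, after performing any sequence of single-element inversion steps, each entry of the current matrix, viewed as a rational function of the original indeterminates, fails to vanish identically exactly on the pattern produced by the matching sequence of single-element closure steps. The easy half --- that the support is contained in the predicted pattern --- follows step by step from the single-element formula: the updated block $m-wv\T/s$ has $ij$-entry that can be nonzero only when $m_{ij}\neq 0$ or both $w_i\neq 0$ and $v_j\neq 0$, which is precisely the Boolean update $\In[{\sf m}+{\sf w v}\T]$ in the definition of $\zer_{\{1\}}$. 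The hard half is that each entry predicted by the closure is realized by a rational function that does not vanish identically, so it is nonzero for almost all values; this is the invariant I would carry through the induction.

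Granting the invariant, commutativity and submatrix-exchangeability transfer from partial inversion. For commutativity, note that the full partial inversion $\inv_{a\cup b}M$ is one fixed matrix, independent of the order of the underlying single-element steps, by the preceding proposition; applying the lemma to the two orderings shows that $\zer_b(\zer_a\mcal)$ and $\zer_a(\zer_b\mcal)$ are both equal to the support of that single matrix, hence to each other. For exchangeability, restriction to a principal submatrix indexed by a set $N'\supseteq a$ commutes with $\inv_a$ by the same proposition, while passing to the support trivially commutes with extracting a principal submatrix; composing the two commutations transfers the property to $\zer_a$.

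The one assertion that genuinely differs from partial inversion is invertibility, and here I would argue directly. Partial inversion can be undone because a single step is an involution: from the single-element formula one checks that applying $\inv_{\{1\}}$ twice returns the original block, the lower-right corner returning to $(m-wv\T/s)+wv\T/s=m$. Partial closure replaces this exact cancellation by the indicator $\In[\cdot]$, which is not injective. Concretely the single-element step is idempotent: with $b=\In[{\sf m}+{\sf w v}\T]$ one has $\In[b+{\sf w v}\T]=b$, since any entry forced to $1$ by ${\sf w v}\T$ is already $1$ in $b$. A nonidentity idempotent map is not injective, so no operator can recover $\mcal$ from $\zer_a\mcal$; in graph terms, the fill-in edges created by $\In[{\sf w v}\T]$ are indistinguishable from edges already present and cannot be removed. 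Hence the closure cannot be undone.

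I expect the hard half of the invariant to be the main obstacle: proving that no accidental cancellation ever annihilates a closure-predicted entry. For a single step this is transparent from the single-element formula, but over a sequence the created entries feed into later updates, so one must rule out cancellations among them. I would handle this by treating the original entries as independent indeterminates and inducting on the number of single-element steps, maintaining the invariant ``nonzero-as-rational-function exactly on the current Boolean pattern''; since a rational function in independent indeterminates vanishes identically only when forced by the pattern, and the single-step update preserves this equivalence, the invariant propagates and the lemma --- hence the whole proposition --- follows.
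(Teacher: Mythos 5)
The paper itself offers no proof of this proposition: it is quoted from Wermuth, Wiedenbeck and Cox (2006), where partial closure is introduced exactly as the operator recording the structural (generic) zeros of partial inversion. So your overall framework --- $\zer_a$ as the Boolean shadow of $\inv_a$ on matrices with algebraically independent nonzero entries, transfer of commutativity and of submatrix-exchangeability from $\inv_a$ to $\zer_a$, plus a direct argument for non-invertibility --- is the natural route, and two pieces of it are sound as written: the non-invertibility argument is complete (a single-element closure step is idempotent, a non-identity idempotent map on a finite set is not injective, hence $\zer_a$ admits no left inverse), and the two transfer arguments would indeed go through \emph{if} your key lemma were established.

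The genuine gap is the ``hard half'' of that lemma, which you flag as the main obstacle but do not close. Your inductive invariant --- each entry of the partially inverted matrix is nonzero as a rational function exactly where the matching closure pattern has a one --- is too weak to propagate itself. In the inductive step the updated entry is $m'_{ij}-w'_iv'_j/s'$, and to show it is not identically zero when the closure pattern predicts a one you must exclude the identity $m'_{ij}s'=w'_iv'_j$ among four rational functions; knowing merely that each factor is individually nonzero gives no control over such multiplicative relations, and after the first step the entries are no longer algebraically independent (they satisfy many polynomial relations), so your closing appeal to ``a rational function in independent indeterminates vanishes identically only when forced by the pattern'' assumes precisely what has to be proved. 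The standard repair is not a stronger independence invariant but a witness: for example take $M=I-\epsilon N$ with $N\geq 0$ the adjacency pattern, expand $M_{aa}^{-1}=\sum_k \epsilon^k N_{aa}^k$ and the Schur complement in powers of $\epsilon$, and note that all contributions to any closure-predicted entry carry the same sign and hence cannot cancel; one such witness shows the rational function is not identically zero, and genericity does the rest. Alternatively, commutativity and submatrix-exchangeability of $\zer_a$ can be proved purely combinatorially from the path characterization of closure (the $ij$-entry of $\zer_a\mcal$ is one exactly when $i$ and $j$ are joined by a path whose inner nodes all lie in $a$, so iterating over $a$ then $b$ gives paths with inner nodes in $a\cup b$, symmetric in $a$ and $b$, and restriction to any $N'\supseteq a$ commutes since such paths never leave $N'$); this bypasses genericity entirely for those two claims and is consistent with how the cited source interprets the operator as tracing paths. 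Without one of these two ideas, your proposition rests on an unproven claim.
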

The operator can be interpreted as tracing paths in the graph if the graph is edge-minimal. The exchangeability justifies,  for instance, repeated closure of {\sf V}s in subgraphs. Matrix versions of both operators have been explicitly exploited,  for instance, by Marchetti and Wermuth (2009, App.) and Wermuth (2011, 2012, 2015).

\subsubsection*{Appendix F: On Markov equivalence}

Starting from
any prime graph or a  cut-set, a hollow tree may be oriented -- based on  a proper
node-set elimination scheme  -- so that no sink {\sf V}, that is no $\snode \fra \snode \fla\snode $, and no sink 
 {\textsc U} is generated, that is
no $\snode \fra \snode \ful \snode \ldots\snode\ful\snode \fla\snode $. 
This avoids that any constraint $i\ci j| N\setminus\{i,j\}$ gets   the 
quite different meaning $i\ci j$ and applies the following  more general
result where these {\sf V-} and {\textsc U-}constellations  have been called `minimal complexes'.
\begin{thm} {\rm Frydenberg (1990).} Two LWF chain graphs, with identical sets of edges present, are Markov equivalent if and only if they have the same  
minimal complexes.
\end{thm}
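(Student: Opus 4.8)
The plan is to read every conditional independence through the LWF global Markov property in its moralization form: for disjoint sets $a,b,S$, one has $a \ci b \mid S$ in a chain graph $G$ exactly when $S$ separates $a$ from $b$ in the moral graph of the smallest ancestral set $\mathrm{An}(a\cup b\cup S)$. Markov equivalence then means that all such separation statements agree. Since $G_1$ and $G_2$ already share their skeleton by hypothesis, the only freedom is in the orientations, and the single fact I would isolate first is that the extra (``moral'') edges produced by this construction are governed entirely by the minimal complexes: a minimal complex $a\to c_1-\cdots-c_k\leftarrow b$ is exactly a skeleton-configuration whose moralization marries its two arrowhead endpoints $a,b$ (nonadjacent in the skeleton), and conversely every moral edge that is not already a skeleton edge can be traced to such a complex lying inside the ancestral set at hand.

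For necessity I would argue by contraposition. Suppose a minimal complex $\kappa=(a\to c_1-\cdots-c_k\leftarrow b)$ is present in $G_1$ but, under the common skeleton, is not a complex in $G_2$. Choose $S$ to contain the interior nodes $c_1,\dots,c_k$; then $\kappa$ lies in $\mathrm{An}(a\cup b\cup S)$, and in $G_1$ its moralization inserts the edge $a-b$, so $a,b$ are adjacent in the moral graph and $a \dep b \mid S$ for every such $S$. In $G_2$ the same skeleton carries a different orientation at $c_1$ and $c_k$, so $\kappa$ is not a complex and no $a-b$ moral edge is created; one can then enlarge $S$ (still containing the $c_i$) to separate the nonadjacent pair $a,b$, yielding $a \ci b \mid S$ while the $G_1$ conclusion $a \dep b \mid S$ persists. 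The two answers disagree, so Markov-equivalent chain graphs with a common skeleton must share their minimal complexes.

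For sufficiency I would show that a common skeleton together with a common set of minimal complexes forces every separation statement to coincide. Fixing a query $a \ci b \mid S$, the observation of the first paragraph says that the connectivity of $a$ and $b$ in the relevant moral graph, after conditioning on $S$, is determined solely by the shared skeleton edges and by moral edges, each of which is a minimal complex common to both graphs; hence a connecting route survives in $G_1$ if and only if one survives in $G_2$. The delicate point---and the step I expect to be the main obstacle---is that the smallest ancestral set $\mathrm{An}(a\cup b\cup S)$ is itself orientation-dependent, so $\mathrm{An}_{G_1}$ and $\mathrm{An}_{G_2}$ need not coincide even with the skeleton fixed. The real work is to verify that the vertices on which the two ancestral closures disagree cannot change the $a$-to-$b$ connectivity: each such vertex enters only through directed edges whose orientation differs between $G_1$ and $G_2$, and by the necessity step the permitted orientation differences are precisely those belonging to no complex, hence ``safe'' reversals that neither create nor destroy a moral edge. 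I would discharge this by induction on the chain components, processed in a topological order compatible with each graph, propagating the invariance of the moral connectivity component by component; an alternative that avoids the ancestral-set bookkeeping is to replace moralization by an orientation-robust augmentation criterion and to check directly that the augmented graphs of $G_1$ and $G_2$ agree once skeleton and complexes are fixed.
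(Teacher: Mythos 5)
The paper itself contains no proof of this statement: it is quoted verbatim from Frydenberg (1990) in Appendix F and used as an imported tool, so there is nothing internal to compare your argument against; it has to stand on its own, and as written it does not. Your framework is the right one (and is the one underlying Frydenberg's proof): the moralization form of the LWF global Markov property, plus the correspondence between moral edges and complexes, since moralization marries exactly the nonadjacent pairs of parents of a common chain component, and any such pair supports a minimal complex $a \to c_1 - \cdots - c_k \leftarrow b$. But both halves of your proof stop exactly where the real work begins. In the necessity direction, the step ``one can then enlarge $S$ (still containing the $c_i$) to separate the nonadjacent pair $a,b$'' in $G_2$ is unjustified, and it is false as a general principle: enlarging $S$ enlarges the ancestral set $\mathrm{An}_{G_2}(a\cup b\cup S)$, and in chain-graph moralization a larger ancestral set can only create more marriages, so conditioning on more vertices can destroy separation rather than produce it. Moreover nothing rules out that $a$ and $b$ are also endpoints of a second complex in $G_2$ whose interior is forced into every ancestral set containing $\{c_1,\dots,c_k\}$; in that case no $S\supseteq\{c_1,\dots,c_k\}$ separates $a,b$ in $G_2$ at all, and this pair yields no distinguishing independence. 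A correct proof must exhibit a concrete separator --- for instance, prove that if $a,b$ are nonadjacent in the moral graph of $G_2[\mathrm{An}_{G_2}(a\cup b\cup C)]$ with $C=\{c_1,\dots,c_k\}$, then $S=\mathrm{An}_{G_2}(a\cup b\cup C)\setminus\{a,b\}$ separates them --- and then dispose of the residual case where $a,b$ are married in that moral graph; this case analysis is entirely absent.

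In the sufficiency direction you correctly name the obstacle (the smallest ancestral set is orientation-dependent), but the resolution you offer is circular. The claim that orientation differences compatible with equal complexes are ``safe reversals that neither create nor destroy a moral edge'' is essentially the theorem itself, and it cannot be ``imported from the necessity step'': necessity runs in the opposite direction (Markov equivalence implies equal complexes), so it says nothing about whether equal complexes tame the discrepancy between $\mathrm{An}_{G_1}$ and $\mathrm{An}_{G_2}$. The induction on chain components is announced but never executed, and it is genuinely delicate here because the chain components themselves, and hence any ``topological order'' on them, differ between $G_1$ and $G_2$ when orientations differ. What you have is a correct scaffold --- the moralization criterion and the complex/moral-edge dictionary --- but the two lemmas that would fill it (existence of a distinguishing separator, and invariance of moral connectivity across differing ancestral closures) are precisely the substance of Frydenberg's several-page proof, and they are missing.
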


Similarly, there is  a necessary and sufficient condition for the Markov equivalence of  other types of chain graph. Chain graph models with so-called regression graphs have sequences of joint responses and one last component  which is a concentration graph, hence has $i\ful j$.
Each component $i$ of a joint response may depend only on nodes $j$ in its past, drawn as $i\fla j$,   
and  the components within a given joint response  may be dependent given  their past, drawn as $i\dal j$; for these models see also Wermuth (2015). 

There are three types of  {\sf V}s 
which generalise the sink  {\sf V}  in a directed acyclic graph and  are called collision {\sf V}s:
 $ i\fra \snode \fla j, \n i\dal \snode \fla j,  \n  i\dal \snode \dal j.$ In  larger graphs, each of them states a conditional independence  of the uncoupled nodes excluding the inner node.
 All other {\sf V}s have been named transmitting  {\sf V}s since the are equivalent to  $i \fla \snode \fla j$.
 All transmitting {\sf V}s, including $i \ful \snode \ful  j$,  state conditional independence of the uncoupled nodes given the inner node and possibly other nodes conditioned upon.
 \begin{thm}{\rm Wermuth and Sadeghi (2012).} Two regression graphs, with identical sets of edges pesent, are Markov equivalent if and only if their sets with 
 collision  {\sf V\!}s coincide.
 \end{thm}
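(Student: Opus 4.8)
\medskip
\noindent\textbf{Proof proposal.}
The plan is to phrase everything through the global Markov property, so that Markov equivalence of the two regression graphs becomes the single assertion that they encode exactly the same separations. Since by hypothesis the two graphs have the same present edges, they share the same skeleton and hence the same set of \textsf{V}s as unoriented triples; only the edge marks can differ, so the one possible difference between the graphs is which \textsf{V}s are collision \textsf{V}s and which are transmitting. The whole argument then reduces to showing that the separations depend on the graph only through its skeleton together with its set of collision \textsf{V}s, after which both implications follow at once.

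First I would fix the blocking rule for regression graphs: a path between two nodes is active given a conditioning set $C$ precisely when every transmitting inner node of the path lies outside $C$ and every collision inner node either lies in $C$ or has a descendant in $C$, where for the three collision types $i\fra\snode\fla j$, $i\dal\snode\fla j$, $i\dal\snode\dal j$ descendants are taken along the directed and dashed structure that produced the two meeting arrowheads. The decisive observation is that whether a given inner node $k$ is transmitting or colliding along a path is a purely local property of the two edge marks incident to $k$ on that path, and for an \emph{unshielded} triple this is exactly the information that the collision \textsf{V}s record; equivalently, one may read separations off the coupled (moralized) anterior subgraph, whose added couplings are governed solely by the collision \textsf{V}s.

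For the necessity direction I argue by contraposition. Suppose the two graphs share the skeleton but differ in their collision \textsf{V}s, so that some triple with uncoupled outer nodes $i,j$ and inner node $k$ is a collision \textsf{V} in one graph and a transmitting \textsf{V} in the other. As recorded just above the theorem, the collision \textsf{V} then states a conditional independence of $i$ and $j$ with the inner node $k$ \emph{excluded} from the conditioning set, whereas the transmitting \textsf{V} states an independence with $k$ \emph{included}; these two elementary separations are incompatible, so the two graphs encode different independence structures and cannot be Markov equivalent.

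For the sufficiency direction, which I expect to be the main obstacle, assume the skeletons and the sets of collision \textsf{V}s agree. Fix any pair of nodes and any conditioning set $C$. By the blocking rule, whether a connecting path is active is decided node by node from the skeleton (which supplies the paths), the collision-versus-transmitting classification of each inner node (which agrees across the two graphs on every unshielded triple, since their collision \textsf{V}s agree), and membership of nodes and their descendants in $C$; for a shielded triple the chord lets one reroute the path so that no new separation is introduced, exactly as in the directed acyclic case. Hence a separation holds in one graph if and only if it holds in the other, and the graphs are Markov equivalent. The delicate points on which I would spend the most care are the correct descendant conditions for the three distinct collision constellations $i\fra\snode\fla j$, $i\dal\snode\fla j$, $i\dal\snode\dal j$, which mix arrow and dashed structure, and the rerouting step for shielded collisions in this mixed-edge setting, in close parallel to Frydenberg's treatment of minimal complexes for LWF chain graphs above.
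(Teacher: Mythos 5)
You should first note that the paper itself offers no proof of this theorem: it is quoted as a known result, with the proof residing in the cited reference Wermuth and Sadeghi (2012). So your attempt can only be judged on its own merits, not against an in-paper argument. Your necessity direction is essentially the standard one (in the graph where the triple is a collision \textsf{V}, every separating set for the uncoupled pair must exclude the inner node; where it is transmitting, every separating set must include it), although "incompatible" is too loose — you need the existence of some separating set for the uncoupled pair in each graph to turn this into an actual independence statement holding in one structure and failing in the other, and that existence (maximality of regression graphs) is itself something to establish.

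The genuine gap is in your sufficiency direction, specifically in the sentence claiming that for a shielded triple "the chord lets one reroute the path so that no new separation is introduced, exactly as in the directed acyclic case." For graphs carrying several edge types with arrowheads this rerouting principle is false as a general matter: for maximal ancestral graphs, which regression graphs closely resemble, same skeleton plus same unshielded colliders does \emph{not} imply Markov equivalence — one additionally needs agreement of colliders along discriminating paths (Ali, Richardson and Spirtes). So an argument that uses only "skeleton plus collision \textsf{V}s determine the blocking rule locally, and shielded triples reroute as for DAGs" proves too much and cannot be complete. What makes the theorem true for regression graphs is their block-recursive structure — dashed edges confined within joint-response components, arrows pointing only from a node's past, undirected edges only in the final context component — which rules out exactly the discriminating-path configurations that break the naive criterion; a correct proof must invoke this, and yours never does. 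The same structural point explains why the neighbouring Theorem 5 for LWF chain graphs requires whole minimal complexes (sink \textsf{U}s) rather than \textsf{V}s: a purely local, semantics-free argument of the kind you sketch could not distinguish the two cases. A secondary instance of the same gap is your claim that the anterior/descendant part of the collider-activation condition is determined by the skeleton and the collision \textsf{V}s; descendant and anterior sets depend on orientations at shielded triples too, so this also needs the block structure rather than a purely local appeal.
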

 Thus, an undirected graph is Markov-equivalent to a another regression graph, if and only if the latter has no collision {\sf V}. This holds by
 construction for every fattened tree. A distribution with a chordal fattened-tree structure is a covering model for the one with the hollow graph so
 that the independence structure of the former also  holds for the latter.

\subsection*{Acknowledgement} Computations were carried out using MATLAB; for drawing figures, we used CorelDraw and MATLAB. We thank Michael P. Weck for letting us reanalyze his data and David Cox  for most helpful discussions and comments, especially for  the proof of equivalence of the model formulations in terms of  the  log-linear and    the hypetan parameters.

\small
 \renewcommand{\baselinestretch}{1.3} 
\renewcommand\refname{\normalsize References}
 
 \normalsize

\end{document}